\newif\ifanonym
\anonymtrue
\documentclass[11pt,letterpaper]{article}
\usepackage[margin=1in]{geometry}

\usepackage{amsmath,amsthm,amssymb,amsfonts}
\usepackage{xspace}
\usepackage{url}
\usepackage{hyperref}
\usepackage[dvipsnames]{xcolor}
\usepackage[ruled,linesnumbered,noend]{algorithm2e}
\usepackage{appendix}
\usepackage{babel}

\usepackage{cancel}

\usepackage{graphicx} % Required for inserting images
\usepackage{booktabs} % For formal tables
\usepackage[ruled]{algorithm2e} % For algorithms
\usepackage{subfigure}
\usepackage{caption}

\usepackage{multirow}
\usepackage{afterpage}
\usepackage{lipsum}
\usepackage{longtable}
\usepackage{booktabs}
\usepackage{thm-restate}
\usepackage{graphicx}
\usepackage{subcaption}
\usepackage[normalem]{ulem}
\usepackage{bm}
\usepackage{enumerate}
\usepackage{bbm}
\usepackage{mathtools}
\usepackage{mathrsfs}
\usepackage[capitalise,noabbrev]{cleveref}

\usepackage{comment}
\usepackage{tikz}
\usetikzlibrary{positioning, arrows.meta, fit, backgrounds, shapes, calc, decorations.pathreplacing}

\hypersetup{colorlinks=true,allcolors=blue}

\allowdisplaybreaks[4]

\makeatletter
\newcounter{HALG@line}
\renewcommand{\theHALG@line}{\thealgorithm.\arabic{ALG@line}}
\makeatother

\usepackage[obeyFinal,textsize=footnotesize]{todonotes}

\let\epsilon\varepsilon
\let\varnothing\emptyset

\theoremstyle{plain}
\newtheorem{theorem}{Theorem}[section]
\newtheorem{lemma}[theorem]{Lemma}

\newtheorem{corollary}[theorem]{Corollary}

\newtheorem{proposition}[theorem]{Proposition}

\theoremstyle{definition}
\newtheorem{definition}[theorem]{Definition}

\theoremstyle{remark}

\newtheorem*{remark*}{Remark}

\newcommand{\SJRE}{{\tt SJR-E}\xspace}
\newcommand{\lamSJR}{{\tt $\lambda$-SJR}\xspace}
\newcommand{\lamSJRExist}{{\tt \lamSJR-E}\xspace}
\newcommand{\lamSJRB}{{\tt $\lambda$-SJRB}\xspace}
\newcommand{\lamSJRBExist}{{\tt \lamSJRB-E}\xspace}
\newcommand{\AJRE}{{\tt AJR-E}\xspace}
\newcommand{\lamAJR}{{\tt $\lambda$-AJR}\xspace}
\newcommand{\lamAJRExist}{{\tt \lamAJR-E}\xspace}
\newcommand{\lamAJRB}{{\tt $\lambda$-AJRB}\xspace}
\newcommand{\lamAJRBExist}{{\tt \lamAJRB-E}\xspace}
\newcommand{\ThetaTwo}{\Theta_2^p}
\newcommand{\SigmaTwo}{\Sigma_2^p}
\newcommand{\NP}{\text{NP}}
\newcommand{\Poly}{\text{P}}

\newcommand{\vcmember}{{\tt VertexCoverMember}\xspace}
\newcommand{\sat}{{\tt $\neg\forall\exists$3SAT}\xspace}
\newcommand{\karpreduct}{{$\;\leq^p_m\;$}}

\begin{document}

%%
%% The "title" command has an optional parameter,
%% allowing the author to define a "short title" to be used in page headers.
\title{Computational Complexity of Strong and Average Justified Representation}

%%
%% The "author" command and its associated commands are used to define
%% the authors and their affiliations.
%% Of note is the shared affiliation of the first two authors, and the
%% "authornote" and "authornotemark" commands
%% used to denote shared contribution to the research.

\author{%
  Yizhou Ai \\
  University of Toronto \\
  \texttt{yizhou.ai@mail.utoronto.ca} \\
  \and
  Biaoshuai Tao \\
  Shanghai Jiao Tong University \\
  \texttt{bstao@sjtu.edu.cn} \\
}
\date{}

%\textbf{}%
%% This command processes the author and affiliation and title
%% information and builds the first part of the formatted document.
\maketitle
%%
%% The abstract is a short summary of the work to be presented in the
%% article.
\begin{abstract}
    We study the approval-based multiwinner election problem where a set of $n$ voters cast approval-based ballots to a set of $m$ candidates, and we are to select a winner committee consisting of $k$ candidates. We consider two axioms: strong justified representation (SJR) and average justified representation (AJR). A winner committee satisfies SJR if the satisfaction for each voter in every $\ell$-cohesive group is at least $\ell$. AJR is a weaker axiom that requires the average satisfaction for each $\ell$-cohesive group to be at least $\ell$. It is well known that a winner committee satisfying AJR may not exist (and neither does SJR). In this paper, we study the computational complexity of the following decision problem: given an approval-based multiwinner election instance, decide if there exists a winner committee satisfying SJR/AJR. We prove that this problem is $\Theta_2^p$-complete for SJR, and $\Sigma_2^p$-complete for AJR. Our results indicate that the decision problem with SJR is more amenable to SAT-based implementations, whereas the decision problem with AJR is substantially harder.

    As byproducts, we derive some results that are interesting in their own right. Firstly, we show that adding one more adaptive query to an NP oracle on top of polynomially many non-adaptive NP queries does not add more computational power, and the resulting complexity class is still $\Theta_2^p$. Secondly, we construct a set system that can be useful in other applications, especially when doing reductions from typical satisfiability problems such as 3SAT. 
\end{abstract}

\section{Introduction}
In a \emph{multiwinner election}, $n$ voters jointly decide a winning committee of $k$ candidates from a set of $m$ candidates.
In an \emph{approval-based multiwinner election}, each voter's ballot has the form of a binary string of length $m$ indicating whether she approves each candidate.
Approval-based multiwinner elections have many applications, especially in political elections in many countries.

When we are to select only one winner (i.e., with $k=1$), the natural choice is the candidate with the most approval votes.
In the multi-winner setting, however, there are many different natural ways to decide the winning committee.
Simply selecting the $k$ candidates with the largest numbers of approvals can result in unfairness in many scenarios.
For example, suppose we have 150 voters forming two parties, where Party A consists of 100 voters and Party B consists of 50 voters.
Suppose there are 10 candidates, where the first 5 candidates are aligned with Party A and the remaining 5 candidates are aligned with Party B.
Voters from Party A approve the first 5 candidates, and voters from Party B approve the remaining 5 candidates.
If we are to select $k=3$ winners, the choice based on largest numbers of approvals results in a winning committee consisting of only candidates aligned with Party A, which is unfair for Party B.
A more natural choice is to select the candidates that \emph{proportionally represent} the two parties.
Since Party A has twice as many voters as Party B, a more natural choice of $k=3$ candidates would be to choose 2 candidates from the first 5 candidates and 1 candidate from the last 5 candidates.
Generalizing this example, we can define the fairness notion of \emph{proportional representation} as follows: whenever the number of voters in a party is at least $\ell\cdot\frac{n}{k}$, this party is justified for $\ell$ seats in the winning committee.
In our previous example with $n=150$, Party A deserves 2 seats while Party B deserves 1 seat.
The minimum number of voters that justifies one seat, $\frac{n}{k}$, is referred to as the \emph{Hare quota}, first proposed by Alexander Hamilton for use in the United States congressional apportionment~\cite{pukelsheim2017quota}.

The real-world situation can be more complex: the voters are not pre-specified by parties.
In these scenarios, recent work has been focusing on \emph{justified representation}~\cite{aziz2017jr} that extends the notion of ``parties'' to ``cohesive groups'' reflecting voters with similar preferences.
An $\ell$-cohesive group consists of at least $\ell\cdot\frac{n}{k}$ voters that approve at least $\ell$ common candidates.
Following the idea of the Hare quota, each $\ell$-cohesive group deserves $\ell$ seats in the winning committee.
A natural requirement of a winning committee $W$ is to ensure every voter in each $\ell$-cohesive group approves at least $\ell$ candidates in $W$, i.e., the minimum satisfaction of an $\ell$-cohesive group member should be at least $\ell$.
This notion is called \emph{Strong Justified Representation} (SJR), proposed by Aziz et al.~\cite{aziz2017jr} and Brill et al.~\cite{brill2025individual}.\footnote{Aziz et al.~\cite{aziz2017jr} proposed the notion of \emph{semi-strong justified representation}, which is a weaker variant of SJR where the above-mentioned condition is only required to hold for $1$-cohesive groups. Brill et al.~\cite{brill2025individual} first proposed the notion exactly as described above. They call it \emph{individual representation}. We decide to call it \emph{strong justified representation} since we think it clearly belongs to the family of ``justified representation'' and it is the strongest notion in this family (see \eqref{eqn:JRrelations} and Fig.~\ref{fig:jr} in Sect.~\ref{sec:related_work}).}

As its name suggests, this fairness notion is strong, and there exist election instances where an SJR committee does not exist.
Consider the example with 6 voters $v_1,\ldots,v_6$ and 3 candidates $c_1,c_2,c_3$ where candidate $c_1$ is approved by $v_1,v_2,v_3$, candidate $c_2$ is approved by $v_3,v_4,v_5$, and candidate $c_3$ is approved by $v_5,v_6,v_1$.
Suppose we want to select a committee of $k=2$ winners.
There are three $1$-cohesive groups: $\{v_1,v_2,v_3\}$, $\{v_3,v_4,v_5\}$, and $\{v_5,v_6,v_1\}$.
Since each voter is in at least one $1$-cohesive group, to achieve SJR, we need to select a committee $W$ of $2$ winners such that every voter approves at least one candidate in $W$.
This is impossible: voter $v_2$ only approves $c_1$, voter $v_4$ only approves $c_2$, voter $v_6$ only approves $c_3$, and we can only select 2 candidates.

Instead of requiring the \emph{minimum} satisfaction of every $\ell$-cohesive group member to be $\ell$, a natural weaker requirement is to enforce the \emph{average} satisfaction of all voters in every $\ell$-cohesive group to be at least $\ell$.
This notion is called \emph{Average Justified Representation} (AJR), introduced by Fern{\'{a}}ndez, Elkind, Lackner, Garc{\'{\i}}a, Arias{-}Fisteus, Basanta{-}Val, and Skowron~\cite{fernandez2017pjr}.\footnote{To be precise, although Fern{\'{a}}ndez et al.~\cite{fernandez2017pjr} discuss average satisfaction in cohesive groups (Sect.~5 in their paper), the name \emph{Average Justified Representation} was not used in Fern{\'{a}}ndez et al.~\cite{fernandez2017pjr}. We follow the convention of Han et al.~\cite{han2026likelihood} where this name was used.}
Notice that, however, an AJR committee is also not guaranteed to exist: the same example in the previous paragraph can show this.
Nevertheless, if we relax the average satisfaction requirement from $\ell$ to $\ell-1$, a desired committee is guaranteed to exist, and the well-known \emph{Proportional Approval Voting Rule} (the PAV rule) is guaranteed to output such a committee~\cite{aziz2018complexity,skowron2021proportionality}.

Motivated by the appealing properties of SJR and AJR, we would like to select a winner committee satisfying the SJR and/or AJR requirements if one exists.
\textbf{In this paper, we study the computational complexity of deciding the existence of an SJR/AJR committee in a given approval-based multiwinner election instance.}

The computational challenges come from two aspects.
Other than the challenge in finding a potential SJR/AJR committee we could typically expect, it is also challenging to \emph{verify} if a given committee $W$ satisfies the SJR/AJR axiom.
In fact, it is already challenging to see if a voter is in some $\ell$-cohesive group for large values of $\ell$.
Consider a bipartite graph where the vertices on two sides represent voters and candidates respectively, and the edges represent approvals.
Deciding the containment in an $\ell$-cohesive group becomes the problem of deciding the existence of an $\frac{\ell n}k\times \ell$ bi-clique in a bipartite graph, and the problem becomes NP-complete for large $\ell$.
Aziz et al.~\cite{aziz2018complexity} show that it is coNP-complete to decide if a winning committee satisfies \emph{Extended Justified Representation} (a notion that is substantially weaker than AJR and SJR), and the proof exploits the hardness of locating $\ell$-cohesive groups.
The same argument can be used to show the coNP-completeness of AJR and SJR verification problems.
We include proofs for these in Appendix~\ref{sec:coNPhardness}.

On the other hand, the problem of deciding the existence of an SJR or an AJR committee is obviously in $\SigmaTwo$.
Given a committee $W$ and an $\ell$-cohesive group $N'$, we can check in polynomial time if the minimum (for SJR) or the average (for AJR) satisfaction of voters in $N'$ is at least $\ell$.
By definition, an instance is a yes instance if and only if \emph{there exists} a committee $W$ such that \emph{for all} $\ell$-cohesive groups $N'$ the minimum/average satisfaction of $N'$ is at least $\ell$.
This features the problem's membership in $\SigmaTwo$.
Other than this, the precise computational complexities of these two decision problems are yet to be understood.

\subsection{Our Results}
We study the two computational complexity problems: 1) given an election instance, decide if there is an AJR committee, and 2) given an election instance, decide if there is an SJR committee.
Our main results show that the first problem is $\SigmaTwo$-complete (Sect.~\ref{sec:Main Result in AJR}) and the second problem is $\ThetaTwo$-complete (Sect.~\ref{sec:Main Result}).
Our results indicate that the problem with AJR is substantially harder than that with SJR, although SJR is a stronger notion.

When proving our main results, we derive some intermediate results that are interesting on their own in theoretical computer science.

Firstly, we derive a complexity result $\Poly_{\parallel+1}^\NP=\ThetaTwo$ (Sect.~\ref{sec:complexityresult}).
To briefly explain this, $\ThetaTwo$ is the complexity class consisting of problems that can be solved in polynomial time by non-adaptively accessing an NP oracle for a polynomial number of times (see Sect.~\ref{sec:complexitybasics} if readers are unfamiliar with the complexity class $\ThetaTwo$).
We find that the above-mentioned problem with SJR is seemingly harder than problems in $\ThetaTwo$: it requires a polynomial number of non-adaptive NP queries \emph{plus one more adaptive NP query that depends on the outcome of previous non-adaptive queries}.
We denote the corresponding complexity class by $\Poly_{\parallel+1}^\NP$, and we show that this complexity class coincides with $\ThetaTwo$, i.e., the extra adaptive NP query at the end does not add more computational power.
This result is crucial for showing that the decision problem with SJR is still in $\ThetaTwo$.

Secondly, for proving our result with AJR (the $\SigmaTwo$-completeness result), we construct a set system that may be potentially useful in other problems (Sect.~\ref{sec:setsystem}).
The set system contains $2n$ sets that are partitioned into $n$ groups of two sets: $\mathcal{F}=\bigcup_{i=1}^n\mathcal{F}_i$ where $\mathcal{F}_i=\{F_i,\bar{F}_i\}$.
The property is that, if we want to choose $n$ sets (out of the $2n$ sets) with a maximum intersection, we must choose exactly one set from each group $\mathcal{F}_i$.
A na\"{\i}ve construction requires $2^n$ elements (see the second paragraph in Sect.~\ref{sec:setsystem}).
We design a more careful construction with $O(n^2)$ elements such that the set system becomes a useful gadget in reductions by not blowing up the size of the instance too much.
Our set system can be useful in many other reduction applications, especially, when reducing from typical satisfiability problems (such as 3SAT), the choice of $n$ sets can represent a truth assignment to the $n$ variables, and our construction distinguishes the choices representing truth assignments from the others by maximizing the set intersection.

From an algorithmic perspective, our results show that the SJR decision problem can be solved using logarithmically many adaptive NP-oracle calls.
The AJR decision problem, on the other hand, remains substantially harder in the polynomial hierarchy.

\subsection{Related Work}
\label{sec:related_work}

\paragraph{Related work in SJR and AJR.}
Most relevant to this paper, Brill et al.~\cite{brill2025individual} established that determining whether an SJR committee (referred to as IR in their work) exists is NP-hard. 
Our $\Theta_2^p$-completeness result strengthens this finding by pinpointing the exact computational complexity of the existence problem.

The axioms of Strong Justified Representation (SJR) and Average Justified Representation (AJR) have also been extensively examined through a probabilistic lens. 
Since an SJR or AJR committee is not guaranteed to exist, a significant body of literature has investigated the probability of their existence when election instances are generated via probabilistic models. 
This likelihood problem is well-characterized both empirically~\cite{bredereck2019experimental,PetersP0021,SzufaFJLSST22,elkind2023justifying,brill2025individual} and theoretically~\cite{Xia2025linear,han2026likelihood}.

\paragraph{On other JR notions.}
Aziz et al.~\cite{aziz2017jr} introduced the axiom of \emph{Extended Justified Representation} (EJR), which stipulates that in every $\ell$-cohesive group, there exists at least one voter with a satisfaction level of at least $\ell$. 
A strictly weaker notion, \emph{Proportional Justified Representation} (PJR), was proposed in Reference~\cite{fernandez2017pjr} and further analyzed in Reference~\cite{aziz2018complexity, brill2024phragmen}. 
PJR mandates that every $\ell$-cohesive group $N'$ satisfies
\[
    \left|W\cap\bigcup_{i\in N'}A_i\right|\ge \ell,
\]
whereas AJR requires the average satisfaction of $N'$ to be at least $\ell$. 
The least restrictive notion is \emph{Justified Representation} (JR)~\cite{aziz2017jr}, which applies the EJR condition exclusively to $1$-cohesive groups.

The hierarchical relationship among these standard notions~\cite{aziz2017jr,fernandez2017pjr,brill2025individual} is given by:
\begin{equation}\label{eqn:JRrelations}
    \text{SJR} \Rightarrow \text{AJR} \Rightarrow \text{EJR} \Rightarrow \text{PJR} \Rightarrow \text{JR}.
\end{equation}

Crucially, unlike SJR and AJR, committees satisfying EJR (and by implication, PJR and JR) are guaranteed to exist~\cite{aziz2017jr}. 
Consequently, the existence problem—central to our analysis of SJR and AJR—is rendered trivial for these weaker axioms.

Alternative variants have also been proposed, including FJR~\cite{peters2020proportional}, FPJR~\cite{KalayciLK2025}, EJR+, and PJR+~\cite{brill2023ejr+}. 
Furthermore, prior literature has framed the problem within an \emph{optimization} setting by considering \emph{quantitative metrics} of fairness, such as the \emph{proportionality degree}~\cite{aziz2018complexity, skowron2021proportionality} and \emph{EJR degree}~\cite{TaoZZ24}. 

Another line of research has been focusing on \emph{core stability}~\cite{gillies1959solutions,shapley1969market} which has also been studied widely~\cite{JiangMW20,MunagalaSWW22,MavrovMS23,Xia2025linear}.
In the context of multiwinner elections, a winning committee $W$ is core stable if there does not exist another committee $W'$ with $|W'|=\ell$ such that each voter in $N'$ has a strictly larger satisfaction for $W'$ than for $W$ for some voter set $N'$ with $|N'|\geq\ell\cdot\frac nk$.
Whether a core stable committee always exists is still an open problem.
It is straightforward to check that both core stability and AJR imply EJR.
Core stability is incomparable with AJR or SJR.
AJR and SJR place no requirement on voter sets $N'$ with $|N'|\geq\ell\cdot\frac nk$ that are not $\ell$-cohesive, while core stability still requires that no alternative choice $W'$ with $|W'|=\ell$ exists such that everyone in $N'$ is strictly happier.
On the other hand, for an $\ell$-cohesive group $N'$, the requirement of AJR (and so of SJR) is stronger: if the average satisfaction of $N'$ is at least $\ell$, there is no way we can find the alternative $W'$ of $\ell$ candidates postulated by core stability.
The relationship between these notions is given in Fig.~\ref{fig:jr}.

\begin{figure}[!htb]
                \centering
                \begin{tikzpicture}[
                scale=0.8,
                transform shape,
                node distance=0.5cm and 2cm,
                every node/.style={font=\bfseries},
                ->, >=stealth]
                \definecolor{fancyblue}{RGB}{127,172,204}    
                \definecolor{fancyyellow}{RGB}{246,189,78}        
                \definecolor{fancyred}{RGB}{233,108,102}    
                \definecolor{fancygreen}{RGB}{31,145,158}        
                        % Nodes
                        \node (CS) {Core Stability};
                        \node[right=of CS] (FJR) {FJR};
                        \node[above=of FJR] (EJR) {EJR};
                        \node[right=of FJR] (FPJR) {FPJR};
                        \node[above=of FPJR] (PJR) {PJR};
                        \node[right=of PJR] (JR) {JR};
                        \node[left=of EJR] (AJR) {AJR};
                        \node[left=of AJR] (SJR) {SJR};
                        %\node[left=of AJR] (IR) {IR};
                        \node[above=of EJR] (EJRplus) {EJR+};
                        \node[above=of PJR] (PJRplus) {PJR+};
                        % Arrows
                        \draw (SJR) -> (AJR);
                        \draw (CS) -> (FJR);
                        \draw (FJR) -> (FPJR);
                        \draw (FJR) -> (EJR);
                        \draw (FPJR) -> (PJR);
                        \draw (EJR) -> (PJR);
                        \draw (PJR) -> (JR);
                        \draw (EJRplus) -> (EJR);
                        \draw (EJRplus) -> (PJRplus);
                        \draw (PJRplus) -> (PJR);
                        \draw (AJR) -> (EJR);
                        %\draw (IR) -> (AJR);
                \end{tikzpicture}
                \caption{Diagram of relationships among different variants of JR. Arrows indicate implications.}
                \label{fig:jr}
\end{figure}

For a comprehensive overview of related work, we refer the reader to the book~\cite{lackner2023multiwinner}.

\section{Preliminaries}
In the following, we recall the formal model of approval-based elections and introduce the key concepts needed to state our main problem.
%First, we fix notation for approval-based elections.

We begin by defining approval-based elections and committees.
\begin{definition}[Approval-Based Election]
    An \emph{approval-based election} is a tuple
    \[
        \mathbb{E} = (N, C, \mathcal{A}, k)
    \]
    where:
    \begin{itemize}
        \item $N=[n]=\{1,2,\dots,n\}$ is the set of \emph{voters},
        \item $C$ is a finite set of \emph{candidates},
        \item $\mathcal{A}=(A_1,\dots,A_n)$ is the list of \emph{approval ballots}, with each $A_i\subseteq C$ the set of candidates approved by voter $i$,
        \item $k\in\mathbb{Z}^+$ is the \emph{committee size}, satisfying $1\le k\le|C|$.
    \end{itemize}
\end{definition}

%Next, we formalize the notion of a committee for a given election.
\begin{definition}[Committee]
    Given an election $\mathbb{E}=(N,C,\mathcal{A},k)$, a \emph{committee} is any subset $W\subseteq C$ of size $|W|=k$.
\end{definition}

%\paragraph{$\ell$-Cohesive Groups, SJR, and AJR.}
The following notion captures the idea of sufficiently large and like-minded coalitions.
\begin{definition}[$\ell$-Cohesive Group]
    Let $\mathbb{E}=(N,C,\mathcal{A},k)$ and $1\le \ell\le k$. A subset of voters $N'\subseteq N$ is \emph{$\ell$-cohesive} if:
    \[
        |N'|\;\ge\;\frac{\ell\cdot n}{k}
        \quad\text{and}\quad
        \Bigl|\bigcap_{i\in N'}A_i\Bigr|\;\ge\;\ell.
    \]
    The first condition ensures the \emph{group-size requirement}, while the second enforces the \emph{common-approval requirement}.
\end{definition}

Besides, we generalize the above definition to $(\lambda,\ell)$-cohesive groups as follows. The notion of $(\lambda,\ell)$-group is mainly used for our technical proofs. Notice that an $\ell$-cohesive group is a $(\lambda,\ell)$-group with $\lambda=\frac{n}{k}$.

\begin{definition}[$(\lambda,\ell)$-group]\label{def:lambdaellgroup}
 A group of voters \(N'\subseteq N\) is a \((\lambda,\ell)\)-group if \(|N'|\ge \ell\lambda\) and \(|\bigcap_{i\in N'}A_i|\ge \ell\).
\end{definition}

As a remark, the parameter $\lambda$ can be viewed as specifying the quota used in the cohesiveness requirement.
The standard Hare quota version of justified representation corresponds to $\lambda=n/k$, while other quota choices, such as \emph{the Droop quota}~\cite{droop1881methods} $q_D\approx n/(k+1)$, have also been studied in the literature.
Quota-scaling ideas also appear in approximate core stability~\cite{JiangMW20}.

We now define how to measure an individual voter’s satisfaction with a committee.
\begin{definition}[Voter’s Satisfaction]
    For election $\mathbb{E}=(N,C,\mathcal{A},k)$ and committee $W\subseteq C$, the \emph{satisfaction} of voter $i\in N$ is
    \[
        S_W(i) = |W \cap A_i|.
    \]
\end{definition}
Likewise, for a group of voters $N'\subseteq N$, we define their total satisfaction as
\[
    S_W(N') = \sum_{i\in N'} S_W(i).
\]

Building on the above concepts, we state two core axioms of justified representation.
\begin{definition}[Strong Justified Representation (SJR)]
    A committee $W\subseteq C$, $|W|=k$, satisfies \emph{Strong Justified Representation} (SJR) if for every integer $1\le \ell\le k$ there exists \emph{no} $\ell$-cohesive group $N'\subseteq N$ such that
    %\[
    %    \exists i\in N',\;S_W(i)<\ell.
    %\]
    $S_W(i)<\ell$ for some $i\in N'$.
    Equivalently, each member of an $\ell$-cohesive group has satisfaction at least $\ell$.
\end{definition}

\begin{definition}[Average Justified Representation (AJR)]
    A committee $W\subseteq C$, $|W|=k$, satisfies \emph{Average Justified Representation} (AJR) if for every integer $1\le \ell\le k$ there exists \emph{no} $\ell$-cohesive group $N'\subseteq N$ such that
    $
        \frac{1}{|N'|}S_W(N')< \ell.
    $
    Equivalently, the average satisfaction of each $\ell$-cohesive group is at least $\ell$.
\end{definition}

Finally, we introduce two decision problems of this work.
\begin{definition}[SJR Committee Existence (\SJRE)]
    Given an election $\mathbb{E}=(N,C,\mathcal{A},k)$, decide whether there exists a committee $W\subseteq C$ with $|W|=k$ that satisfies SJR. The output is “Yes” if such a committee exists, and “No” otherwise.
\end{definition}
\begin{definition}[AJR Committee Existence (\AJRE)]
    Given an election $\mathbb{E}=(N,C,\mathcal{A},k)$, decide whether there exists a committee $W\subseteq C$ with $|W|=k$ that satisfies AJR. The output is “Yes” if such a committee exists, and “No” otherwise.
\end{definition}

%Having established these basic notions, we proceed in Section~\ref{sec:Main Result} to analyze the computational complexity of \SJRE and in Section~\ref{sec:Main Result in AJR} to analyze the computational complexity of \AJRE.
\subsection{Complexity Basics}
We assume the readers are familiar with basic complexity concepts and the complexity classes $\ThetaTwo$ and $\SigmaTwo$.
Readers unfamiliar with these can refer to Sect.~\ref{sec:complexitybasics} where we give a review of these.

Below, we introduce a \(\ThetaTwo\)-complete problem, the Minimum Vertex Cover Inclusion Problem~\cite{hemaspaandra2005complexity}, and a $\SigmaTwo$-complete problem, the \(\neg\forall\exists\) 3-SAT problem, that will be used in our reductions.

\begin{definition}[\vcmember]
    Given an undirected graph \(G=(V,E)\) and a vertex \(x\in V\), decide whether there exists a minimum size vertex cover \(\mathfrak{C}\) of \(G\) such that \(x\in \mathfrak{C}\).
\end{definition}
    
\begin{definition}[\(\neg\forall\exists\) 3-SAT Problem (\sat)]
        Let \(\varphi(\mathbf{x},\mathbf{y})\) be a Boolean formula in 3-CNF. 
        The \(\neg\forall\exists\) 3-SAT problem asks whether
        \[
                \exists \mathbf{x}\,\forall \mathbf{y}\; \neg\varphi(\mathbf{x},\mathbf{y}).
        \]
\end{definition}

A 3-CNF formula is \emph{proper} if each clause contains exactly three literals over three distinct variables and is not tautological.
We use the standard restriction to proper 3-CNF instances in the reductions below.

\subsection{On Adding One More Adaptive Query in $\ThetaTwo$}
\label{sec:complexityresult}
In this section, we prove a complexity result used later in the SJR section which may be of independent interest.
Recall that the complexity class $\ThetaTwo$ contains all languages that can be decided in polynomial time with a polynomial number of non-adaptive NP queries (see Sect.~\ref{sec:complexitybasics}).
On the other hand, as we will see in the SJR section, the problem of \SJRE is seemingly ``slightly harder'' than problems in $\ThetaTwo$.
It can be solved by a polynomial number of non-adaptive NP queries \emph{followed by one adaptive NP query that depends on the results of the previous non-adaptive NP queries.}
We will prove in this section that the corresponding problem is still in $\ThetaTwo$, and adding one extra adaptive NP query does not increase any computational power.

We first define the complexity class with the described feature above.

\begin{definition}\label{def:parallel+1}
A language $L$ lies in $\Poly_{\parallel+1}^{\NP}$ if there exist a polynomial $p$ and an $\NP$-oracle Turing
machine $M^{\NP}$ such that, for every input $x$, $M^{\NP}(x)$ runs in time $p(|x|)$ and has the following form:
\begin{enumerate}
        \item $M^{\NP}$ first issues at most $p(|x|)$ \emph{non-adaptive} oracle queries (i.e., the set of these queries
        depends only on $x$), receives their answers, and then
        \item makes one additional oracle query as its \emph{final} step, which may depend on $x$ and on the earlier answers;
        $M^{\NP}$ accepts if and only if this final query is answered \textsf{YES}.
\end{enumerate}
Moreover, $x\in L$ if and only if $M^{\NP}$ accepts $x$.
\end{definition}
In the following, we will prove that $\Poly_{\parallel+1}^\NP=\ThetaTwo$.
At first glance, this may be an easy task: as we mentioned in Sect.~\ref{sec:complexitybasics}, Hemachandra~\cite{hemachandra1987strong} and Buss and Hay~\cite{buss1991truth} independently show that $\Poly_{\parallel}^\NP=\Poly^{\NP[\log n]}$, which indicates that polynomially many non-adaptive NP queries can be simulated by logarithmically many adaptive NP queries; therefore, to show that $\Poly_{\parallel+1}^\NP$ is the same as $\ThetaTwo=\Poly_{\parallel}^\NP=\Poly^{\NP[\log n]}$, we can convert all the non-adaptive queries except for the last one to logarithmically many adaptive queries, and adding one more adaptive query still gives a logarithmic number of queries.
However, the results of Hemachandra~\cite{hemachandra1987strong} and Buss and Hay~\cite{buss1991truth} cannot be directly used here.
The result $\Poly_{\parallel}^\NP=\Poly^{\NP[\log n]}$ only says that, if we can use polynomially many non-adaptive queries to solve a \emph{decision problem}, we can alternatively use logarithmically many adaptive queries to achieve the same task.
In particular, for $\Poly_{\parallel}^\NP$, we get a \emph{binary outcome} after polynomially many parallel queries, and the proofs in Hemachandra~\cite{hemachandra1987strong} and Buss and Hay~\cite{buss1991truth} crucially rely on the fact that this outcome is binary.
On the other hand, in $\Poly_{\parallel+1}^\NP$, supposing $m+1$ queries are made where the first $m$ queries are non-adaptive, there can be $2^m$ different outcomes for the first $m$ queries (in particular, the outcome here is no longer binary), which means there can be up to $2^m$ different possible queries for the last step.
If there is a \emph{binary} ``intermediate outcome'' for the first $m$ non-adaptive queries and the final $(m+1)$-th query only depends on this binary intermediate outcome, then we can apply the results of Hemachandra~\cite{hemachandra1987strong} and Buss and Hay~\cite{buss1991truth}.
However, this ``intermediate outcome'' can be a \emph{string} that is not binary.
In other words, the ``subproblem'' described by the first $m$ queries is not a decision problem, and the results of Hemachandra~\cite{hemachandra1987strong} and Buss and Hay~\cite{buss1991truth} cannot be directly applied.

To prove the following theorem, we use a technique that is different from Hemachandra~\cite{hemachandra1987strong} and Buss and Hay~\cite{buss1991truth}.

\begin{theorem}\label{thm:complexity}
        $\Poly_{\parallel+1}^\NP=\ThetaTwo$.
\end{theorem}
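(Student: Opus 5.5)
The inclusion $\ThetaTwo\subseteq\Poly_{\parallel+1}^\NP$ is immediate. Given a machine that makes polynomially many non-adaptive queries and then decides by polynomial-time post-processing, we issue the same queries. We then compute the would-be answer $b\in\{\textsf{YES},\textsf{NO}\}$ and, as the final query, ask a fixed trivially satisfiable formula if $b=\textsf{YES}$ and a fixed unsatisfiable formula otherwise. This final query is a legitimate NP query, and its answer equals $b$.

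For the converse, I will use a \emph{census} technique rather than the Hemachandra / Buss--Hay argument. Fix $L\in\Poly_{\parallel+1}^\NP$ with machine $M^\NP$. On input $x$, let $q_1,\dots,q_m$ be the first-stage queries, where $m\le p(|x|)$; each is computable in polynomial time from $x$. For each answer vector $a\in\{0,1\}^m$, let $q^*(x,a)$ denote the final query, also polynomial-time computable from $(x,a)$. Since the oracle can be taken to be SAT, each query is a membership question with polynomial-size witnesses. For $j=0,\dots,m$ I define two NP languages. The first is
\[A=\{(x,j): \text{at least } j \text{ of } q_1,\dots,q_m \text{ are YES}\}.\]
It is in NP: guess $j$ distinct indices together with witnesses for them. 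The second is
\[B=\{(x,j): \exists S\subseteq[m],\ |S|=j,\ \text{witnesses for all } q_i,\ i\in S,\ \text{and a witness that } q^*(x,\mathbf{1}_S) \text{ is YES}\}.\]
It is in NP because the certificate has polynomial size and is checked in polynomial time.

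The key step is the following claim. Let $c$ be the true number of YES answers among $q_1,\dots,q_m$, and let $a^{\mathrm{true}}$ be the true answer vector. Then $(x,c)\in B$ if and only if $q^*(x,a^{\mathrm{true}})$ is YES. Indeed, any $S$ with $|S|=c$ in which every $q_i$, $i\in S$, is certified YES must be a subset of the true YES set, which also has size $c$. Hence $S$ is exactly the true YES set and $\mathbf{1}_S=a^{\mathrm{true}}$. Conversely, the true YES set together with its witnesses certifies membership whenever the final query is YES. This uniqueness-by-counting argument is the point that replaces the ``binary intermediate outcome'' and so circumvents the difficulty discussed before the theorem. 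I expect it to be the only step requiring care, and in particular one must check that $B$'s certificate mentions only YES witnesses, since NO answers cannot be certified.

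The $\ThetaTwo$ algorithm then issues, all in parallel, the $2(m+1)$ queries $(x,j)\in A$ and $(x,j)\in B$ for $j=0,\dots,m$. Each is a single NP query after a standard reduction to SAT, and none depends on other answers. It sets $c=\max\{j:(x,j)\in A\}$ and accepts if and only if $(x,c)\in B$. By the claim, this accepts exactly when $M^\NP$ accepts $x$. The whole procedure runs in polynomial time with polynomially many non-adaptive NP queries, so $L\in\Poly_{\parallel}^\NP=\ThetaTwo$.
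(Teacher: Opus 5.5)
Your proof is correct. Its core is the same counting argument the paper uses: a set of certified-YES first-stage queries whose size equals the true number of YES answers must be the true YES set, so the final query only has to be certified at the true answer vector.

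The routes differ in how this is turned into a $\ThetaTwo$ algorithm.

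\emph{The paper's route.} It defines its second language with Hamming weight \emph{at least} $i$, so that both languages are monotone in $i$. It computes $i_A$ and $i_B$ by two binary searches and accepts iff $i_A=i_B$. This uses $2\lceil\log(m+1)\rceil+1$ adaptive queries and lands directly in Definition~\ref{def:thetatwo} without invoking any earlier theorem. The paper even remarks that the argument can be adapted to reprove $\Poly_\parallel^\NP=\Poly^{\NP[\log n]}$.

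\emph{Your route.} You fix the weight to be exactly $j$ and ask all $2(m+1)$ queries in parallel. This gives the cleaner-looking inclusion $\Poly_{\parallel+1}^\NP\subseteq\Poly_\parallel^\NP$. To reach $\ThetaTwo$, however, you then need the nontrivial direction $\Poly_\parallel^\NP\subseteq\Poly^{\NP[\log n]}$. That is precisely the Hemachandra / Buss--Hay theorem, used as a black box. This is legitimate, since the paper takes that equality as known.

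The dependence is easy to remove. Your $A$ is monotone in $j$, so you can find $c$ by binary search with $\lceil\log(m+1)\rceil$ adaptive queries and then ask the single query $(x,c)\in B$. Your exact-size $B$ makes that one final query sufficient. The result is a self-contained $O(\log|x|)$-query algorithm that is slightly more economical than the paper's.
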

\begin{proof}
        The containment $\ThetaTwo\subseteq \Poly_{\parallel+1}^\NP$ is trivial, and we will prove $\Poly_{\parallel+1}^\NP\subseteq\ThetaTwo$ in the following.

        Fix a language $L\in \Poly_{\parallel+1}^\NP$.
        Consider $M^\NP$ as defined in Definition~\ref{def:parallel+1} and an arbitrary input $x$.
        Let $m+1$ be the number of NP queries made, where the first $m$ queries are non-adaptive.
        Let $\phi_1,\ldots,\phi_m$ be the first $m$ queries, and we assume they are SAT instances without loss of generality.
        Let $h:\{0,1\}^m\to\{0,1\}$ be the function describing the last query.
        In particular, $h(\mathbf{t})=1$ if and only if the last query output ``yes'' when the outcomes of the first $m$ queries are encoded by $\mathbf{t}$ (i.e., $\mathbf{t}$ is defined such that, for each $p=1,\ldots,m$, $\mathbf{t}[p]=1$ if and only if $\phi_p$ is a yes instance).
        Since the last query is an NP query, the evaluation of $h(\cdot)$ is in NP, i.e., for a given $\mathbf{t}$, deciding if $h(\mathbf{t})=1$ is in NP.
        Notice that $m,\phi_1,\ldots,\phi_m$, and $h(\cdot)$ depend on $x$ and $m<p(|x|)$ ($p$ is the polynomial time complexity upper bound as defined in Definition~\ref{def:parallel+1}).

        We consider the following two decision problems:
        \begin{enumerate}
                \item $A(x,i)$: decide if at least $i$ of the $m$ SAT instances $\phi_1,\ldots,\phi_m$ are yes instances;
                \item $B(x,i)$: decide if there exists $\mathbf{t}\in\{0,1\}^m$ with Hamming weight (the number of entries equal to $1$) at least $i$ such that i) $h(\mathbf{t})=1$ and ii) for each $j=1,\ldots,m$, $\mathbf{t}[j]=1$ implies that $\phi_j$ is a yes instance.
        \end{enumerate}
        Notice that, for the second problem, $\mathbf{t}[j]=1$ must imply that $\phi_j$ is a yes instance, while $\phi_j$ can be a yes or a no instance for $\mathbf{t}[j]=0$.
        It is easy to see that both decision problems above (with $i$ given) are in NP.

        We state two additional obvious facts.
        Firstly, if $(x,i)$ is a yes instance of $A$, then $(x,i')$ for every $i'\leq i$ is also a yes instance; the same holds for $B$ whenever $B(x,0)$ is a yes instance.
        Therefore, for $A$ there exists a maximum value of $i$ where $(x,i')$ is a yes instance for $i'\leq i$ and no instance for $i'>i$.
        Let $i_A$ be this maximum value for $A$.
        For $B$, first query $B(x,0)$.
        If it is a no instance, set $i_B=-1$; otherwise, let $i_B$ be the maximum value of $i$ where $(x,i')$ is a yes instance for $i'\leq i$ and no instance for $i'>i$.
        Secondly, we must have $i_A\geq i_B$: notice that if we have a vector $\mathbf{t}$ with Hamming weight $i$ that satisfies the conditions of problem $B$, it implies there are at least $i$ yes instances among $\phi_1,\ldots,\phi_m$.

        Next, we show that $x\in L$ if and only if $i_A=i_B$.
        %Let $S\subseteq\{1,\ldots,m\}$ be the set of indices corresponding to yes instances in $\phi_1,\ldots,\phi_m$.
     % We have $i_A=|S|$.
        Given a vector $\mathbf{t}\in\{0,1\}^m$, let $S_{\mathbf{t}}\subseteq\{1,\ldots,m\}$ be the set of entry indices $j$ of $\mathbf{t}$ with $\mathbf{t}[j]=1$.
        If $\mathbf{t}$ satisfies the two conditions in problem $B$, we must have $|S_{\mathbf{t}}|\leq i_B$.
        Let $\mathbf{t}^\ast$ be the actual vector that encodes the satisfiability of the SAT instances $\phi_1,\ldots,\phi_m$.
        We have $i_A=|S_{\mathbf{t}^\ast}|$, and we additionally have $S_{\mathbf{t}}\subseteq S_{\mathbf{t}^\ast}$ for every $\mathbf{t}$ satisfying the two conditions in problem $B$.
        If $i_A>i_B$, we know $h(\mathbf{t}^\ast)=0$, since $h(\mathbf{t}^\ast)=1$ requires $|S_{\mathbf{t}^\ast}|\leq i_B$ and this contradicts $i_B<i_A=|S_{\mathbf{t}^\ast}|$.
        This implies $x\notin L$.
        If $i_A=i_B$, there exists $\mathbf{t}^\dag$ that satisfies the two conditions in problem $B$, and moreover $|S_{\mathbf{t}^\dag}|=i_B$.
        Since $S_{\mathbf{t}^\dag}\subseteq S_{\mathbf{t}^\ast}$ and $i_A=i_B$, it can only be the case that $\mathbf{t}^\dag=\mathbf{t}^\ast$.
        By our definition, we have $h(\mathbf{t}^\dag)=1$, so $h(\mathbf{t}^\ast)=1$, implying $x\in L$.

        Finally, to decide if $x\in L$, we only need to compute the value of $i_A$ and $i_B$.
        If $m=0$, then $i_A=0$ and the single query $B(x,0)$ determines whether $i_B=0$ or $i_B=-1$.
        If $m>0$, the values can be computed by two binary searches after the initial query to $B(x,0)$.
        Therefore, we have used the NP oracle for at most $2\lceil\log(m+1)\rceil+1\leq O(\log(p(|x|)))$ times. 
        By Definition~\ref{def:thetatwo}, $L\in\ThetaTwo$.
\end{proof}

%As the first remark, the complexity class is still $\ThetaTwo$ if we allow a constant number of adaptive queries at the end (instead of just one).
%The same argument above can apply.
%As the second remark, 
As a remark, our proof can be adapted to show the result $\Poly_{\parallel}^\NP=\Poly^{\NP[\log n]}$.
The only part that needs to be adapted is the function $h$.
In our case, $h$ represents the last NP query, and it needs to be evaluated by an NP oracle.
In the case proving $\Poly_{\parallel}^\NP=\Poly^{\NP[\log n]}$, $h$ represents the execution of the Turing machine after polynomially many non-adaptive queries, and $h$ can be computed in polynomial time.
Other than this, the remaining part of the proof is unchanged.
In fact, our argument is much simpler than the ideas of ``mind-changing counts'' used in Hemachandra~\cite{hemachandra1987strong} and Buss and Hay~\cite{buss1991truth}.

%We write,
%    \[
%        \Theta_2^p \;=\; \Poly^{\NP[\log n]}
%        \;=\; \Poly^{\NP}_{\parallel},
%    \]
%where $\Poly^{\NP[\log n]}$ denotes deterministic polynomial-time machines with $O(\log n)$ adaptive queries to an NP oracle, and $\Poly^{\NP}_{\parallel}$ denotes those with a single parallel round of polynomially many NP-oracle queries.

\section{Complexity Result for AJR}
\label{sec:Main Result in AJR}
In this section, we prove the following result.

\begin{theorem}\label{thm:AJR}
        \AJRE is \(\SigmaTwo\)-complete.
\end{theorem}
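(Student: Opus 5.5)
Membership in $\SigmaTwo$ was already argued in the introduction: guess a committee $W$ of size $k$, then universally range over pairs $(\ell,N')$ and check in polynomial time that $N'$ being $\ell$-cohesive implies $S_W(N')\ge \ell|N'|$. The work is in hardness. I would reduce from \sat: given a proper 3-CNF $\varphi(\mathbf{x},\mathbf{y})$ with $\mathbf{x}=(x_1,\dots,x_n)$, I would build an election in which the existential choice of $W$ encodes an assignment to $\mathbf{x}$. The universal part, the adversarial cohesive group, should exist with low average satisfaction exactly when some assignment to $\mathbf{y}$ satisfies $\varphi(\mathbf{x},\cdot)$. I would first prove hardness for the generalized quota version with $(\lambda,\ell)$-groups (Definition~\ref{def:lambdaellgroup}), say ``\lamAJRExist''. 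Then I would reduce that to the Hare-quota case by padding with dummy voters and dummy candidates, so that $n/k=\lambda$ and the dummies create no new harmful cohesive groups.

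\textbf{Encoding $\mathbf{x}$.} For each $x_i$ I create two candidates $c_i,\bar c_i$, together with a block of voters that forms a large cohesive group approving both. The intent is that AJR forces $W$ to contain exactly one of $c_i,\bar c_i$ for every $i$; the budget $k$ is tight enough that choosing both for some $i$ starves another block. I expect this to follow from the usual counting of average satisfaction against $\ell$.

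\textbf{Encoding $\mathbf{y}$ via the set system of Sect.~\ref{sec:setsystem}.} For each $y_j$ I use a pair of sets $F_j,\bar F_j$ from the set system. Elements of the set system become candidates, and each voter of a ``$y$-literal'' type approves the candidates of its set. In this way an $\ell$-cohesive group, which needs many common approvals, can reach the required intersection size only by taking exactly one set from each pair, i.e.\ by corresponding to a truth assignment to $\mathbf{y}$. To this group I would attach clause voters. For each clause, the voter associated with a literal choice approves the clause's $x$-candidates in the pattern that makes its satisfaction from $W$ high precisely when the clause is \emph{not} satisfied. The set system's candidates are deliberately unaffordable in $W$, so they contribute nothing to satisfaction.

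\textbf{Tuning the threshold.} The parameters (group sizes, $\ell$, $\lambda$) are then chosen so that a cohesive group built from an assignment $\mathbf{y}$ has average satisfaction below $\ell$ iff every clause is satisfied by $(\mathbf{x},\mathbf{y})$. Every other cohesive group, including malformed ones mixing $F_j$ and $\bar F_j$ and the $x$-blocks themselves, must have average at least $\ell$ under any committee that encodes an assignment. I expect this step to be the main obstacle. AJR bounds only averages, so an adversary might mix voters from different gadgets to dilute satisfaction. I would handle it in two ways: by giving every voter a large private stock of common approvals outside $W$ wherever needed, and by letting the $O(n^2)$-size set system force, through the maximum-intersection property, that any group meeting the common-approval count is ``well formed''. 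The calculation I would do first is to bound the average satisfaction of arbitrary subsets of well-formed groups. This is where I would spend most of the effort.
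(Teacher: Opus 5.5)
Your membership argument and the outer plan (hardness for a quota-generalized variant, then padding to the Hare quota) agree with the paper. The hardness gadget, however, cannot work as specified, and the step you defer as ``the main obstacle'' is where the missing idea lies.

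\textbf{Why a group indexed by $\mathbf{y}$ alone fails.} Suppose $W$ encodes $\mathbf{x}$ by one candidate per variable and the rest of $W$ is forced. Then for any \emph{fixed} voter group $H$, $S_W(H)$ is an affine function of $\mathbf{x}$. So the test ``$H$ has average below $\ell$'' is a linear threshold function of $\mathbf{x}$. You require this test, for the group built from $\mathbf{y}$, to coincide with $\varphi(\mathbf{x},\mathbf{y})$. That is a CNF, and in general not a threshold function. Take $(x_1\vee x_2\vee y_1)\wedge(x_3\vee x_4\vee y_2)$ with $y_1=y_2=F$: the points $(1,0,1,0),(0,1,0,1)$ satisfy it and $(1,1,0,0),(0,0,1,1)$ do not. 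Since $(1,0,1,0)+(0,1,0,1)=(1,1,0,0)+(0,0,1,1)$, no affine function with a threshold separates the two pairs. No tuning of $\ell$, $\lambda$, or group sizes repairs this.

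The universal side must range over more than $\mathbf{y}$, so that each individual group only tests a \emph{conjunction} of literals of $\mathbf{x}$, which is a threshold function. Each such group must then contain exactly one clause block per clause. Nothing in your design ensures this: $\mathbf{x}$ is not encoded in the group's common approvals, so several clause voters of one clause (e.g., different literal choices) are compatible with the same $\mathbf{y}$. A group can then overweight satisfied clauses and omit a falsified one while still meeting the size bound.

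\textbf{How the paper does it.} It applies the set system of Lemma~\ref{lem:setsystem} to \emph{all} $n$ variables, in the opposite orientation to yours. Each element becomes a block of $\gamma$ voters, and the sets become assignment candidates $a_i^T,a_i^F$, which the committee never selects. Clause voters are indexed by satisfying partial assignments $\sigma$ of their clause. The size threshold $n\lambda$ is met only by a canonical group encoding a full assignment $(\xi,\eta)$ that satisfies every clause, with exactly one block of $\chi$ identical voters per clause. Thus the conjunction over clauses is enforced by the \emph{existence} of the group, not by satisfaction.

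The committee selects one blocker $b_i^{t_i}$ per existential variable, forced by a block of exactly $\lambda$ voters. Each clause voter approves the \emph{opposite} blocker $b_i^{\overline{\sigma_i}}$. Hence a canonical group receives $0$ from $W$ iff $\xi=\mathbf{x}^*$, and at least $\chi$ otherwise. Base satisfactions make every canonical group short by exactly $\chi$.

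\textbf{Smaller issues.}
\begin{itemize}
\item Your $\mathbf{x}$-block must be a $(\lambda,1)$-group but not a $(\lambda,2)$-group; otherwise AJR forces both $c_i$ and $\bar c_i$.
\item Satisfaction only comes from candidates in $W$. Your ``private stock'' must therefore consist of candidates forced \emph{into} $W$, which is how the paper realizes base satisfactions (Lemma~\ref{lem:lamAJRBtoLamAJR}).
\item In your orientation (sets as voters, elements as candidates), Lemma~\ref{lem:setsystem} controls only selections of exactly $n$ sets. Groups drawing on fewer set types have \emph{larger} common intersections, so the cohesiveness level of a group is not pinned down.
\end{itemize}
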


\begin{lemma}\label{lem:AJRmem}
    \AJRE is in \(\SigmaTwo\).
\end{lemma}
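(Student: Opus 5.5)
We place \AJRE in $\SigmaTwo$ by exhibiting the standard $\exists\forall$ characterization with polynomially bounded certificates and a polynomial-time verification predicate. Concretely, we show that an instance $\mathbb{E}=(N,C,\mathcal{A},k)$ is a yes instance if and only if
\[
\exists W\subseteq C\;\forall (N',\ell,T)\;:\; R(\mathbb{E},W,N',\ell,T),
\]
where $W$ ranges over subsets of $C$, $N'$ ranges over subsets of $N$, $\ell$ ranges over integers in $\{1,\ldots,k\}$, and $T$ ranges over subsets of $C$. All of these objects have size polynomial in the input length, since $|W|,|T|\le |C|$, $|N'|\le n$, and $\ell\le k\le |C|$.

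The predicate $R$ is defined as follows. If $|W|\neq k$, then $R$ is false for every universal choice; this forces the existential witness to be a genuine committee. Otherwise, $R$ is true unless all three of the following conditions hold:
\begin{enumerate}
    \item $|N'|\ge \ell n/k$, which is checked via the integer comparison $k|N'|\ge \ell n$;
    \item $T\subseteq\bigcap_{i\in N'}A_i$ and $|T|\ge \ell$;
    \item $S_W(N')<\ell|N'|$.
\end{enumerate}
The set $T$ is not strictly needed, since $\bigcap_{i\in N'}A_i$ is computable directly, but including it is harmless. Each condition is evaluated in polynomial time: computing $S_W(N')=\sum_{i\in N'}|W\cap A_i|$ takes $O(n|C|)$ time, and the remaining checks are set intersections and integer comparisons. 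The third condition is the average-satisfaction test $\frac{1}{|N'|}S_W(N')<\ell$ rewritten without division, which is valid because $|N'|\ge \ell n/k>0$.

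It remains to verify correctness against the definition of AJR.
\begin{itemize}
    \item If some committee $W$ satisfies AJR, take it as the existential witness. For any universal triple $(N',\ell,T)$ satisfying conditions 1 and 2, the group $N'$ is $\ell$-cohesive, so AJR gives $S_W(N')\ge \ell|N'|$. Hence condition 3 fails and $R$ holds.
    \item Conversely, suppose the formula holds with witness $W$. Then $|W|=k$. Take any $\ell$-cohesive group $N'$ and choose $T=\bigcap_{i\in N'}A_i$. Conditions 1 and 2 hold for this triple, so $R$ must fail on condition 3. Thus $S_W(N')\ge \ell|N'|$, and $W$ satisfies AJR.
\end{itemize}

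This $\exists\forall$ form with a polynomial-time matrix is the defining characterization of $\SigmaTwo$; equivalently, \AJRE is decided by an NP machine that guesses $W$ and asks a coNP question, namely whether a violating $(N',\ell)$ fails to exist. There is no real obstacle here. The only care needed is to keep every comparison in exact integer arithmetic and to enforce $|W|=k$ inside the predicate.
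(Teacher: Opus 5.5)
Your proposal is correct and takes essentially the same route as the paper: guess the committee $W$ existentially, quantify universally over a pair $(\ell,N')$, and check in polynomial time that either $N'$ is not $\ell$-cohesive or its average satisfaction under $W$ is at least $\ell$. The extra universal set $T$, the explicit enforcement of $|W|=k$, and the division-free integer comparisons are harmless refinements of that same argument.
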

\begin{proof}
The membership of $\SigmaTwo$ is straightforward.
The existential certificate ($y$ in Definition~\ref{def:sigmatwo}) encodes an AJR committee $W$.
The universal certificate ($z$ in Definition~\ref{def:sigmatwo}) encodes an arbitrary pair $(\ell,N')$.
If $\ell\notin\{1,\ldots,k\}$, the verifier accepts automatically.
Otherwise, it first checks whether $N'$ is an $\ell$-cohesive group; if not, it accepts automatically.
Otherwise, it checks whether the average satisfaction of $N'$ under $W$ is at least $\ell$.
\end{proof}
In the remaining part of this section, we prove the $\SigmaTwo$-hardness of \AJRE.

In Sect.~\ref{sec:setsystem}, we construct a set system that will be used as a gadget in our reduction.
%We introduce two intermediate problems \lamAJRExist and \lamAJRBExist and present the chain of reductions in Sect.~\ref{sec:AJR_sattobase} and Sect.~\ref{sec:AJR_basetoAJR}.
%The latter SJR proof follows the same base-removal and quota-adjustment template, replacing average satisfaction by individual satisfaction.
Sect.~\ref{sec:AJR_sattobase} and Sect.~\ref{sec:AJR_basetoAJR} aim to prove the $\SigmaTwo$-hardness of \AJRE.
To prove the $\SigmaTwo$-hardness, we introduce two intermediate problems \lamAJRExist and \lamAJRBExist, and we prove the following chain of Karp reductions:
\[
\text{\sat}\leq_K\text{\lamAJRBExist}\leq_K\text{\lamAJRExist}\leq_K\text{\AJRE}.
\]
We will prove $\text{\sat}\leq_K\text{\lamAJRBExist}$ in Sect.~\ref{sec:AJR_sattobase} and $\text{\lamAJRBExist}\leq_K\text{\lamAJRExist}\leq_K\text{\AJRE}$ in Sect.~\ref{sec:AJR_basetoAJR}.

We will formally define \lamAJRExist and \lamAJRBExist soon.
Before this, we describe some high-level ideas on these two decision problems and the intuitive reason why we introduce them.
\lamAJRExist is a variant of \AJRE where $\ell$-cohesive groups in the definition of AJR are replaced by $(\lambda,\ell)$-groups (Definition~\ref{def:lambdaellgroup}).
In other words, instead of setting the size of the $1$-cohesive group to $n/k$, we allow a customized size of $\lambda$ where $\lambda$ is specified as an input.
\lamAJRBExist is a variant where, in addition to the customized cohesive group sizes, each voter $i$ also has a customized ``base satisfaction'' $s(i)$ such that $i$ has a base satisfaction of $s(i)$ when she approves no candidate in the committee and $i$'s satisfaction is redefined to $S_W(i)+s(i)$ for committee $W$.
The first part of the reduction chain, $\text{\sat}\leq_K\text{\lamAJRBExist}$ contains the essential ideas on the hardness of our problem.
For the second part of the reduction chain $\text{\lamAJRBExist}\leq_K\text{\lamAJRExist}\leq_K\text{\AJRE}$, it is achieved by carefully adding dummy voters and candidates.
We use these two intermediate problems to increase the accessibility of our proof.

\begin{definition}[$\lambda$-Average Justified Representation (\lamAJR)]\label{def:lamAJR}
    A committee $W\subseteq C$ with $|W|=k$ satisfies $\lambda$-\emph{Average Justified Representation} if for every integer $1\le \ell\le k$ there exists \emph{no} $(\lambda,\ell)$-group $N'\subseteq N$ such that
    $
        \frac{1}{|N'|}S_W(N')< \ell.
    $
    Equivalently, the average satisfaction of each \((\lambda,\ell)\)-group is at least \(\ell\).
\end{definition}

\begin{definition}[\lamAJR Committee with base satisfactions (\lamAJRB)]
    Given an election $\mathbb{E}=(N,C,\mathcal{A},k)$, and a base satisfaction function $s:N\to\mathbb{Z}_{\geq0}$, 
    a committee $W\subseteq C$ with $|W|=k$ satisfies \(\lambda\)-\emph{Average Justified Representation with base satisfactions} (\lamAJRB) 
    if for every integer $1\le \ell\le k$ there exists \emph{no} $(\lambda,\ell)$-group $N'\subseteq N$ such that
    $
        \frac{1}{|N'|}\left(S_W(N')+\sum_{i\in N'}s(i)\right)< \ell.
    $
    Equivalently, the average satisfaction of each \((\lambda,\ell)\)-group is at least \(\ell\), when each voter $i$ has a base satisfaction $s(i)$.
\end{definition}

%\paragraph{Auxiliary input restrictions.}
The auxiliary \(\lambda\)-AJR problems used in this paper require the input parameter \(\lambda\) to satisfy \(2<\lambda\le |N|\).
For the base-satisfaction variant, we additionally require
\[
    \sum_{i\in N}s(i)\le |\mathbb{E}|^{O(1)},
\]
and hence every value \(s(i)\) is polynomially bounded.
These input restrictions are part of the definitions of \lamAJRExist and \lamAJRBExist below.

\begin{definition}[\lamAJR Committee Existence (\lamAJRExist)]\label{def:lamAJRE}
    Given an election $\mathbb{E}=(N,C,\mathcal{A},k)$ and an integer parameter \(2<\lambda\le |N|\), decide whether there exists a committee $W\subseteq C$ with $|W|=k$ that satisfies \lamAJR.
    The output is “Yes” if such a committee exists, and “No” otherwise.
\end{definition}

Next, we define a variant of \lamAJRExist where each voter $i$ has a ``base satisfaction'' $s(i)$ (given as an input parameter) such that her satisfaction for committee $W$ is redefined to $S_W(i)+s(i)$.

\begin{definition}[\lamAJRB Committee Existence (\lamAJRBExist)]
    Given an election $\mathbb{E}=(N,C,\mathcal{A},k)$, an integer parameter \(2<\lambda\le |N|\), and a base-satisfaction function $s:N\to\mathbb{Z}_{\geq0}$ with polynomially bounded total value, 
    decide whether there exists a committee $W\subseteq C$ with $|W|=k$ that satisfies \lamAJRB.
    The output is ``Yes'' if such a committee exists, and ``No'' otherwise.
\end{definition}

\subsection{A Set System}
\label{sec:setsystem}

%Let \(\mathrm{Var}(\varphi)\) denote the set of variables occurring in a formula \(\varphi\).

A \emph{set system} (or \emph{family of sets}) is a pair \((X,\mathcal{F})\), where 
\(X\) is a non-empty ground set and \(\mathcal{F}\subseteq 2^X\) is a collection of subsets of \(X\).

In Lemma~\ref{lem:setsystem}, we present the construction of a set system $(X,\mathcal{F})$ such that $\mathcal{F}$ consists of $2n$ sets and can be partitioned into $n$ groups of two sets.
Suppose we want to choose $n$ out of $2n$ sets in $\mathcal{F}$ with the maximum size intersection.
Lemma~\ref{lem:setsystem} says that the choice is optimal if and only if we select exactly one set from each of the $n$ groups.

Notably, our construction only requires $O(n^2)$ elements in $X$, which makes it useful for reductions.
A na\"{\i}ve construction can be the following.
Let $X$ be the set of all binary strings of length $n$.
For the $i$-th group of two sets in $\mathcal{F}$, let one set contain all the strings except those with the $i$-th bit $0$, and let the other set contain all the strings except those with the $i$-th bit $1$.
It is easy to see that the intersection is maximized if and only if we choose one set from each group: considering the complement of each of the $2n$ sets, the union is minimized when we choose one set from each group (if not choosing in this way, the union becomes the entire $X$).
However, this construction requires an exponential number ($2^n$ to be precise) of elements.

Intuitively, the $n$ groups will become $n$ variables in a 3-CNF formula, so choosing $n$ sets such that exactly one set is chosen in each group corresponds to a truth assignment of the $n$ variables.
We will suitably set the size of the cohesive group such that only the choice of $n$ sets with the maximum intersection can be large enough to form a cohesive group.

\begin{lemma}\label{lem:setsystem}
    For any integer $n\ge 2$, there exists a set system $(X,\mathcal{F})$ where $|\mathcal{F}|=2n$, $|X|=4n(n-1)$, and $(\mathcal{F}_1,\ldots,\mathcal{F}_n)$ is a partition of $\mathcal{F}$ with $|\mathcal{F}_1|=\cdots=|\mathcal{F}_n|=2$, and this partition system satisfies the following properties:
    \begin{itemize}
        \item for any $\mathcal{S}\subseteq\mathcal{F}$ satisfying $|\mathcal{S}\cap\mathcal{F}_i|=1$ for each $i=1,\ldots,n$, we have $|\bigcap_{F\in\mathcal{S}}F|=n(n-1)$, and
        \item for any $\mathcal{S}\subseteq\mathcal{F}$ with $|\mathcal{S}|=n$ such that $|\mathcal{S}\cap\mathcal{F}_i|\neq1$ for some $i$, we have $|\bigcap_{F\in\mathcal{S}}F|<n(n-1)$.
    \end{itemize}
\end{lemma}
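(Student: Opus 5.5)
The plan is to build $X$ from \emph{pairs of groups with sign patterns}, and to view each element through the sets that \emph{exclude} it (the complement view used for the na\"{\i}ve construction above). Write $\mathcal{F}_i=\{F_i^0,F_i^1\}$ (so $F_i^1=F_i$, $F_i^0=\bar F_i$). Let
\[
X=\{e_{p,q,a,b} : p,q\in[n],\ p\neq q,\ a,b\in\{0,1\}\},
\]
where $(p,q)$ ranges over \emph{ordered} pairs, so $|X|=4n(n-1)$. Put $e_{p,q,a,b}\in F_r^c$ unless $(r,c)=(p,a)$ or $(r,c)=(q,b)$. Thus $e_{p,q,a,b}$ is excluded from exactly the two sets $F_p^a$ and $F_q^b$, which lie in two different groups. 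For a chosen subfamily $\mathcal{S}$, the element $e_{p,q,a,b}$ lies in $\bigcap_{F\in\mathcal{S}}F$ iff neither $F_p^a$ nor $F_q^b$ is in $\mathcal{S}$.

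The key step is a single counting formula. For $\mathcal{S}\subseteq\mathcal{F}$, let $u_p=|\mathcal{F}_p\setminus\mathcal{S}|\in\{0,1,2\}$ be the number of sets of group $p$ that are \emph{not} chosen. Fix an ordered pair $(p,q)$ with $p\neq q$. The number of sign patterns $(a,b)$ for which $e_{p,q,a,b}$ survives is exactly $u_pu_q$, because $a$ must index an unchosen set of group $p$ and $b$ an unchosen set of group $q$, and these choices are independent. Summing over ordered pairs gives
\[
\Bigl|\bigcap_{F\in\mathcal{S}}F\Bigr|=\sum_{p\neq q}u_pu_q=\Bigl(\sum_p u_p\Bigr)^2-\sum_p u_p^2 .
\]

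Now suppose $|\mathcal{S}|=n$. Then $\sum_p u_p=2n-n=n$, so the intersection has size $n^2-\sum_p u_p^2$.
\begin{itemize}
\item Since $u_p\in\{0,1,2\}$, we have $u_p^2\ge u_p$, hence $\sum_p u_p^2\ge n$. Equality holds iff every $u_p\in\{0,1\}$.
\item If every $u_p\in\{0,1\}$, then $n$ terms summing to $n$ force $u_p=1$ for all $p$.
\end{itemize}
So the intersection has size at most $n(n-1)$, with equality iff $u_p=1$ for all $p$, i.e., iff $|\mathcal{S}\cap\mathcal{F}_p|=1$ for every $p$.

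This gives the first bullet of Lemma~\ref{lem:setsystem}: a one-per-group choice gives all $u_p=1$, so the intersection is exactly $n(n-1)$. It also gives the second bullet: if $|\mathcal{S}\cap\mathcal{F}_i|\ne1$ for some $i$, then some $u_p\ne1$, so the inequality is strict.

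The main obstacle is choosing the right exclusion pattern so that the count factorizes as $\sum_{p\neq q}u_pu_q$; after that, the proof is the elementary inequality $\sum_p u_p^2\ge\sum_p u_p$. The remaining details are routine: the ground set is nonempty for $n\ge2$, and the sizes match the statement exactly, since using ordered pairs doubles the unordered count $2n(n-1)$ to $4n(n-1)$ and the intersection to $n(n-1)$.
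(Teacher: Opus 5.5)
Your proof is correct and is essentially the paper's: the construction is identical, since in the paper each element $c^{ij}_1,\dots,c^{ij}_4$ of the block $C^{ij}$ is excluded from exactly one set of $\mathcal{F}_i$ and one set of $\mathcal{F}_j$, covering all four sign patterns over ordered pairs $(i,j)$. Your count $\sum_{p\neq q}u_pu_q=n^2-\sum_p u_p^2$ is the same block-by-block computation as the paper's case table $|S_{ij}|\in\{0,1,2,4\}$ summed via $\alpha,\beta,\gamma$ to get $n(n-1)-2\gamma$, just organized more compactly.
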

\begin{proof}
    \noindent\textbf{Construction.}
    For each ordered pair \((i,j)\) with \(i\neq j\) in \([n]\), create a 4-element block
    \(C^{ij}=\{c^{ij}_1,c^{ij}_2,c^{ij}_3,c^{ij}_4\}\). Define
    \[
    X=\bigcup_{\substack{i\neq j\\ i,j\le n}} C^{ij},\qquad
    \mathcal{F}=\{F_1,\bar F_1,\dots,F_n,\bar F_n\},\qquad
    \mathcal{F}_k=\{F_k,\bar F_k\}.
    \]

    For each \(k\in[n]\),
    \[
    \begin{aligned}
    F_k &=
        \bigcup_{\substack{i\neq j\\ k\notin\{i,j\}}} C^{ij}
        \;\cup\; \bigcup_{j\neq k}\{c^{kj}_1,c^{kj}_2\}
        \;\cup\; \bigcup_{i\neq k}\{c^{ik}_1,c^{ik}_3\}, \\[2mm]
    \bar F_k &=
        \bigcup_{\substack{i\neq j\\ k\notin\{i,j\}}} C^{ij}
        \;\cup\; \bigcup_{j\neq k}\{c^{kj}_3,c^{kj}_4\}
        \;\cup\; \bigcup_{i\neq k}\{c^{ik}_2,c^{ik}_4\}.
    \end{aligned}
    \]

    Since the \(C^{ij}\) are disjoint, \(|X|=4n(n-1)\) and \(|\mathcal{F}|=2n\) as required.

    \noindent\textbf{Preliminary Observation.}
    As \(X=\bigcup_{\substack{i\neq j\\ i,j\le n}} C^{ij}\),
    \begin{align*}
    \bigcap_{F\in\mathcal{S}}F
        &= \left(\bigcup_{\substack{i\neq j\\ i,j\le n}} C^{ij}\right)\cap \bigcap_{F\in\mathcal{S}}F= \bigcup_{\substack{i\neq j\\ i,j\le n}} \bigcap_{F\in\mathcal{S}}(F\cap C^{ij})    \\
        %&= \bigcup_{\substack{i\neq j\\ i,j\le n}} \bigcap_{F\in\mathcal{S}}(F\cap C^{ij}) \\
        &= \bigcup_{i\neq j}
            \left(
            C^{ij}\ \cap
            \bigcap_{\substack{F_m\in\mathcal S\\ F_m\in\mathcal F_i\cup \mathcal F_j}} F_m
            \right). \tag{$\triangle$}
    \end{align*}
    For convenience, set
    $\displaystyle
    S_{ij}:=C^{ij}\ \cap
            \bigcap_{\substack{F_m\in\mathcal S\\ F_m\in\mathcal F_i\cup \mathcal F_j}} F_m.
    $
    
    Thus, it suffices to characterize \(F_k\cap C^{ij}\) and \(\bar F_k\cap C^{ij}\). Fix a block \(C^{ij}\). For any \(k\),
    \[
    \begin{array}{c|c}
    k\notin\{i,j\} & F_k\cap C^{ij}=\bar F_k\cap C^{ij}=C^{ij};\\
    k=i                        & F_i\cap C^{ij}=\{c^{ij}_1,c^{ij}_2\},\;\; \bar F_i\cap C^{ij}=\{c^{ij}_3,c^{ij}_4\};\\
    k=j                        & F_j\cap C^{ij}=\{c^{ij}_1,c^{ij}_3\},\;\; \bar F_j\cap C^{ij}=\{c^{ij}_2,c^{ij}_4\}.
    \end{array}
    \tag{$\star$}
    \]

    \noindent\textbf{Property 1.}
    Let \(\mathcal{S}\subseteq\mathcal{F}\) contain exactly one set from each \(\mathcal{F}_k\).
    By ($\triangle$) and ($\star$), each \(S_{ij}\) is a singleton
    \(\{c^{ij}_1\},\{c^{ij}_2\},\{c^{ij}_3\},\) or \(\{c^{ij}_4\}\),
    depending on whether \(F_i\) or \(\bar F_i\), and \(F_j\) or \(\bar F_j\), is chosen.
    Thus, each block contributes exactly one element, giving
    \(\bigl|\bigcap_{F\in\mathcal{S}}F\bigr|=n(n-1)\).

    \noindent\textbf{Property 2.}
    Now let \(\mathcal{S}\subseteq\mathcal{F}\) with \(|\mathcal{S}|=n\), and assume
    \(|\mathcal{S}\cap \mathcal{F}_k|\neq 1\) for some \(k\).
    For each \(r\), set \(a_r:=|\mathcal{S}\cap \mathcal{F}_r|\in\{0,1,2\}\), and define
    \[
    \alpha=|\{r:a_r=0\}|,\quad
    \beta=|\{r:a_r=1\}|,\quad
    \gamma=|\{r:a_r=2\}|.
    \]
    Since \(\sum_r a_r=n\) and \(\beta+2\gamma=n\), we have \(\alpha=\gamma\).

    For each pair \((i,j)\), ($\triangle$) and ($\star$) imply
    \[
    |S_{ij}| =
    \begin{cases}
    0 & a_i=2 \text{ or } a_j=2,\\
    1 & a_i=a_j=1,\\
    2 & \{a_i,a_j\}=\{0,1\},\\
    4 & a_i=a_j=0.
    \end{cases}
    \]
    Summing over ordered pairs gives
    \[
    \Bigl|\bigcap_{F\in\mathcal{S}}F\Bigr|
        = \beta(\beta-1) + 4\beta\alpha + 4\alpha(\alpha-1).
    \]
    Substituting \(\beta=n-2\gamma\) and \(\alpha=\gamma\) yields
    \[
    \Bigl|\bigcap_{F\in\mathcal{S}}F\Bigr| = n(n-1)-2\gamma.
    \]
    Since some \(a_k\neq 1\), we have \(\gamma\ge 1\), and therefore
    \(|\bigcap_{F\in\mathcal{S}}F|\le n(n-1)-2<n(n-1)\).
\end{proof}

\subsection{Reduction from \sat to \lamAJRBExist}
\label{sec:AJR_sattobase}

We prove the following lemma in this section.
The high-level idea is to separate two roles that are easy to conflate.
For each variable, we create assignment candidates, which are used only to certify that a large cohesive group corresponds to a satisfying full assignment.
For each existential variable, we additionally create blocker candidates, and these are the only non-dummy candidates that a feasible committee is forced to select.
The separation lets the soundness proof reason about the individual satisfaction of the canonical group, rather than only about its common approval set.

%\biaoshuai{I do not get the descriptions above. It would be great if we could write a few paragraphs to describe the ideas in more detail, as we did for the SJR part.}

\begin{lemma}\label{lem:AJR}
    \sat is Karp reducible to \lamAJRBExist.
\end{lemma}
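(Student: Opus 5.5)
Given a proper 3-CNF $\varphi(\mathbf{x},\mathbf{y})$ with existential variables $x_1,\dots,x_p$ and universal variables $y_1,\dots,y_q$, set $n=p+q$. I will build an instance $(\mathbb{E},\lambda,s)$ of \lamAJRBExist such that a committee satisfying \lamAJRB exists iff some assignment to $\mathbf{x}$ makes $\varphi$ unsatisfiable in $\mathbf{y}$.

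\textbf{Construction.} Take the set system of Lemma~\ref{lem:setsystem} on $n$ groups. For each variable $z_r$ create assignment candidates indexed by the ground set $X$, and two voter classes $V_r$ and $\bar V_r$ approving the candidates of $F_r$ and $\bar F_r$ respectively. Thus a group of voters drawn from exactly one of $V_r,\bar V_r$ for every $r$ has common approval of size exactly $n(n-1)$ (Property~1). Any other choice of $n$ sets has strictly smaller intersection (Property~2).

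For each existential variable $x_r$, add two blocker candidates $b_r,\bar b_r$. The voters in $\bar V_r$ (meaning ``$x_r$ false'') approve $b_r$, and those in $V_r$ approve $\bar b_r$. For each clause $c$ add a clause voter set $U_c$ approving all assignment candidates. Put $U_c$ inside the groups corresponding to assignments that falsify $c$, so that such a group is only large enough when $\varphi$ is satisfied. I realize this by letting the literal voter classes approve an extra clause candidate exactly when the literal does not satisfy $c$, so a full assignment group can absorb clause voters only when it satisfies all clauses.

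Set $k=p$ plus dummies. Choose $\lambda$ and the class sizes so that the only $(\lambda,\ell)$-groups with large $\ell$ are the canonical groups: one class per variable, corresponding to a satisfying full assignment $(\mathbf{x},\mathbf{y})$, with $\ell=n(n-1)$ or a fixed shift of it. Use base satisfactions $s(\cdot)$ to pre-satisfy every other voter set, namely dummy groups, small cohesive groups, and groups built from partial selections. Assign large base values to all voters except a controlled deficit on the canonical groups. That deficit is covered exactly when the committee contains a blocker approved by the $x$-part of the group. Dummy candidates that no one approves fill the committee, or they are made harmless by base satisfaction.

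\textbf{Correctness.} For completeness, given $\mathbf{x}^\ast$ with $\forall\mathbf{y}\,\neg\varphi$, select for each $r$ the blocker matching the complement of $x^\ast_r$. Any canonical group corresponds to a satisfying assignment, so its $\mathbf{x}$ differs from $\mathbf{x}^\ast$ in some coordinate. It therefore contains voters approving a chosen blocker, which raises the average satisfaction to $\ell$. Non-canonical groups are handled by base satisfaction, using the intersection bound $<n(n-1)$ from Property~2.

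For soundness, first argue that any feasible $W$ must select exactly one blocker per existential variable. If it selected none from some pair, a suitable canonical group would be violated. This defines $\mathbf{x}^\ast$. If some $\mathbf{y}$ satisfied $\varphi(\mathbf{x}^\ast,\mathbf{y})$, the canonical group for $(\mathbf{x}^\ast,\mathbf{y})$ approves no chosen blocker. Its individual satisfactions then equal base plus zero, which is below $\ell$ on average, a contradiction.

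\textbf{Main obstacle.} The hard part is the calibration of $\lambda$, the class sizes and $s$. This must ensure that no unintended group is a $(\lambda,\ell)$-group with deficient average. The worrying cases are groups mixing both classes of a variable, groups that drop clause voters, and groups using fewer than $\ell$ commons. I would handle this by making class sizes equal to a common value $t$, so group size forces one class per variable. Property~2 then rules out large common approval for other selections. I would also give every voter outside canonical configurations base satisfaction $\ge k$, which keeps $\sum s$ polynomial.
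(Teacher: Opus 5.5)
Your proposal has two genuine gaps, both in the parts you defer to ``calibration''.

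\textbf{No forcing of one blocker per pair.} Nothing in your construction forces exactly one blocker per existential variable, and your soundness step (``if it selected none from some pair, a suitable canonical group would be violated'') is false under your own calibration. You arrange that a canonical group is covered as soon as $W$ contains a blocker approved by its $\mathbf{x}$-part. Every canonical group contains $V_1$ or $\bar V_1$. So any committee containing both $b_1$ and $\bar b_1$ covers all canonical groups, and every other group is pre-satisfied by base satisfaction. Hence every instance with $k\ge 2$ becomes a yes-instance.

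Unapproved dummies do not prevent this, because nothing forces them into $W$. Their seats can go to extra blockers, or even to a single element of $X$. In your orientation such an element is approved by all but two literal classes, so it raises the satisfaction of almost every voter in every canonical group.

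The missing ingredient is a forcing gadget that exhausts the committee. The paper adds, for each $x_i$, a block $P_i$ of $\lambda$ voters approving exactly $\{b_i^T,b_i^F\}$ with base $0$. This is a $(\lambda,1)$-group, so at least one blocker per pair must be chosen. It also adds blocks $D_j$ of $\lambda$ voters each approving a single dummy $\varepsilon_j$, and sets $k=n_1+n$. Then $W$ is exactly the dummies plus one blocker per pair, and no assignment candidate can be selected. These blocks create no larger cohesive groups, since only $P_i$-voters approve both blockers of $B_i$ and $|P_i|=\lambda<2\lambda$.

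\textbf{The clause gadget does not tie canonical groups to satisfying assignments.} Clause voters approving all assignment candidates can join every full-assignment group without shrinking its common assignment candidates, whether or not the clause is satisfied. An extra candidate approved by the classes whose literal fails $c$ lands in the common approval set precisely of the groups that \emph{falsify} $c$. That makes those groups more cohesive rather than excluding them.

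More fundamentally, a common approval set is a conjunction over the group, so one candidate per clause cannot encode a disjunction of three literals. The paper expresses the disjunction as a choice among voter blocks. For each clause and each of its $7$ satisfying partial assignments $\sigma$, it creates $\chi$ voters approving only the assignment candidates compatible with $\sigma$, plus both candidates of every variable outside the clause. A full assignment $\tau$ can therefore absorb $\chi$ voters from clause $j$ iff $\tau$ satisfies clause $j$. The value $\lambda=m+(n-1)\gamma+1$ is chosen so that the size $n\lambda$ is reached only when every clause contributes.

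In the paper, these clause voters (not literal classes) carry the opposite-blocker approvals $b_i^{\overline{\sigma_i}}$. The base values are $n-1$ on $U$, $n$ on $R\cup Q\setminus\{q_n\}$, and $n+m\chi-\chi$ on $q_n$, which fixes the deficit of every canonical group at exactly $\chi$.

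Your transposed use of the set system, with elements of $X$ as candidates, would need the analogous per-$\sigma$ blocks restricting approvals inside the $C^{ij}$ involving the clause's variables. It would also need a separate bound for groups drawing on more than $n$ literal classes. Property~2 covers only selections of exactly $n$ sets, and equal class sizes do not by themselves force one class per variable.
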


We prove this lemma in the remaining part of this section.

\paragraph{The construction.}
Recall that an instance of the \sat can be assumed to consist of a proper 3-CNF formula $\varphi$ where variables are partitioned into $\mathbf{x}=(x_1,\ldots,x_{n_1})$ and $\mathbf{y}=(y_1,\ldots,y_{n_2})$.
It is a yes instance if there exists a truth assignment of variables in $\mathbf{x}$ such that $\varphi$ is \emph{unsatisfied} for every truth assignment of variables in $\mathbf{y}$.
It is a no instance if for all truth assignments of variables in $\mathbf{x}$ there exists a truth assignment of variables in $\mathbf{y}$ such that $\varphi$ is satisfied.
Let $n=n_1+n_2$ and $m$ be the number of clauses.
We assume without loss of generality that $n>2$ and every existential variable appears in some clause; unused existential variables can be removed.
Let $z_i=x_i$ for $i\le n_1$ and $z_{n_1+j}=y_j$ for $j\le n_2$.
For a truth value $t\in\{T,F\}$, let $\overline{t}$ denote the opposite truth value.

Given a \sat instance $\varphi$, we construct an instance of \lamAJRBExist as follows.
Set
\[
    \chi=n,\qquad \gamma=7m\chi+1,\qquad
    \lambda=\frac{m\chi}{n}+(n-1)\gamma+1=m+(n-1)\gamma+1.
\]

We first define the candidate set.
For every variable $z_i$, introduce assignment candidates
\[
    A_i=\{a_i^T,a_i^F\}.
\]
These candidates are used by the verifier voters only.
For every existential variable $x_i$, $i\le n_1$, introduce blocker candidates
\[
    B_i=\{b_i^T,b_i^F\}.
\]
Finally, introduce dummy candidates
\[
    \varepsilon=\{\varepsilon_1,\ldots,\varepsilon_n\}.
\]
Let $A=\bigcup_{i=1}^n A_i$ and $B=\bigcup_{i=1}^{n_1}B_i$, and set the candidate set to $C=A\cup B\cup\varepsilon$.
The committee size is
\[
    k=n_1+n.
\]

The voters consist of five sets: $U$, $P$, $Q$, $R$, and $D$.
For each clause $j$ and each satisfying partial assignment $\sigma$ to the three variables of that clause, create $\chi$ identical voters.
The set $U_j$ contains all voters created for clause $j$, and $U=\bigcup_{j=1}^m U_j$.
A voter in $U_j$ associated with $\sigma$ approves:
\begin{itemize}
    \item both assignment candidates in $A_i$ for every variable $z_i$ not appearing in clause $j$;
    \item the assignment candidate $a_i^{\sigma_i}$ for every variable $z_i$ appearing in clause $j$;
    \item the opposite blocker $b_i^{\overline{\sigma_i}}$ for every existential variable $x_i$ appearing in clause $j$.
\end{itemize}
Since each clause has $7$ satisfying partial assignments (there are $8$ partial assignments in total; the assignment making all three literals false is the only unsatisfying assignment), we have $|U_j|=7\chi$ and $|U|=7m\chi$.

For each $i\le n_1$, let $P_i$ consist of $\lambda$ voters approving exactly $B_i=\{b_i^T,b_i^F\}$, and set $P=\bigcup_{i=1}^{n_1}P_i$.
For each $j\in[n]$, let $D_j$ consist of $\lambda$ voters approving exactly $\varepsilon_j$, and set $D=\bigcup_{j=1}^{n}D_j$.
Let $Q=\{q_1,\ldots,q_n\}$, where every voter in $Q$ approves all assignment candidates in $A$ and no blocker or dummy candidate.

Finally, construct $R$ based on the set system in Lemma~\ref{lem:setsystem}.
Given the set system $(X,\mathcal{F})$, identify each pair $\mathcal{F}_i=\{F_i,\bar F_i\}$ with the assignment-candidate pair $A_i$.
For each element $z\in X$, create $\gamma$ identical voters.
Such a voter approves $a_i^T$ if $z\in F_i$, and approves $a_i^F$ if $z\in\bar F_i$.
It is possible for an $R$-voter to approve both assignment candidates in some pair $A_i$; this is harmless because $R$-voters approve no blocker candidates.

This completes the description of the voters.
All blocks have polynomial size: $\chi$, $\gamma$, $\lambda$, and the base satisfactions below have polynomially bounded total value in $n$ and $m$, and the voters are explicitly expanded.
Moreover, \(\lambda\le |N|\), since \(R\cup Q\) already contains at least \(\lambda\) voters for \(n>2\).
The base satisfaction function is
\begin{equation}\label{eqn:AJR-base-satisfaction}
    s(v)=\begin{cases}
        n-1 & \text{if } v\in U,\\
        n & \text{if } v\in R \text{ or } v\in Q \setminus \{q_n\},\\
        n+m\chi-\chi & \text{if } v=q_n,\\
        0 & \text{if } v\in P\cup D.
    \end{cases}
\end{equation}

\begin{figure}[ht!]
\centering
\resizebox{0.98\textwidth}{!}{
\begin{tikzpicture}[
        thick,
        >={Stealth[length=2mm]},
        vertex/.style={circle, draw=black, fill=white, inner sep=0pt, minimum size=0.55cm, font=\small\bfseries},
        voter/.style={rectangle, draw=black!65, fill=gray!5, rounded corners=2pt, minimum width=1.5cm, minimum height=0.6cm, font=\scriptsize, align=center},
        candidate/.style={circle, draw=black!80, fill=white, minimum size=0.62cm, font=\scriptsize, align=center},
        candset/.style={rectangle, draw=black!80, fill=white, rounded corners=2pt, minimum width=1.4cm, minimum height=0.55cm, font=\scriptsize, align=center},
        edge_app/.style={draw=blue!70, thick, ->},
        target_app/.style={draw=red!80, thick, ->},
        budget_app/.style={draw=black!65, thick, dashed, ->},
        forced_app/.style={draw=gray!75, thick, ->},
        cert_a/.style={draw=green!60!black, thick, ->},
        cert_ell/.style={draw=orange!90!black, thick, ->}
]

% ==========================================
% 1. Input SAT Instance
% ==========================================
\node[font=\bfseries] at (-2.5, 7.5) {\sat Instance $\varphi$};
\node[align=center, font=\small, draw=black!50, rounded corners, fill=yellow!5, inner sep=6pt] at (-2.5, 5.2) {
    $\exists \mathbf{x}\,\forall \mathbf{y}\;\neg\varphi(\mathbf{x},\mathbf{y})$\\
    \vspace{0.1cm}\\
    Clause $C_j$ with satisfying\\ partial assignment $\sigma$ and let $x_\ell$ \\
    be an existential variable in $C_j$\\
};

% ==========================================
% 2. Divider & Structure Layout
% ==========================================
\draw[gray!30, line width=1.5pt] (0.5, 7.8) -- (0.5, -3.5);
\node[font=\bfseries, anchor=west] at (0.8, 7.5) {One election instance for $\varphi$};

% Section Labels (Anchored to the left)
\node[font=\scriptsize\bfseries, black!70, align=left, anchor=west] at (0.8, 6.8) {Blocker Forcing\\(Existential only)};
\node[font=\scriptsize\bfseries, blue!80, align=left, anchor=west] at (0.8, 4.5) {Clause Verifier\\(For clause $j$, asst $\sigma$)};
\node[font=\scriptsize\bfseries, green!50!black, align=left, anchor=west] at (0.8, 0.7) {Assignment\\Regulators};
\node[font=\scriptsize\bfseries, gray!90, align=left, anchor=west] at (0.8, -2.0) {Forced Dummies};

% ==========================================
% 3. Voters (Shifted right to x = 7.0)
% ==========================================
\node[voter, label=left:forced] (vP) at (7.0, 6.8) {$P_\ell$ \\ $|P_\ell|=\lambda$};
\node[voter, draw=blue!70, line width=1pt, label=left:\textcolor{blue!80}{verifier}] (vU) at (7.0, 4.5) {$U_j^\sigma$ \\ $|U_j^\sigma|=\chi$};

\node[voter, label=left:boost] (vQ) at (7.0, 1.5) {$Q$ \\ $|Q|=n$};
\node[voter, label=left:\textcolor{orange!90!black}{set system}] (vR) at (7.0, 0.0) {$R$ \\ $\gamma$ each};

\node[voter, label=left:forced] (vD) at (7.0, -2.0) {$D_j$ \\ $|D_j|=\lambda$};

% ==========================================
% 4. Candidates (Shifted to x = 11.0)
% ==========================================
% B_i Candidates & Box
\node[candidate, fill=red!10] (bT) at (11.0, 6.8) {$b_\ell^T$};
\node[candidate, fill=red!10] (bF) at (12.5, 6.8) {$b_\ell^F$};
\node[draw=red!80, dashed, rounded corners, fit=(bT) (bF), inner sep=5pt] (bbox) {};
\node[font=\scriptsize\bfseries, red!80, anchor=west] at ([xshift=4pt]bbox.east) {$B_\ell$ (for $x_\ell$)};

% Giant B box background spanning all blockers (Pulled up safely)
\coordinate (B_safe_bound) at (15.7, 5.8);
\begin{scope}[on background layer]
    \node[draw=red!70, dashed, rounded corners, fit=(bbox) (B_safe_bound), inner sep=8pt, fill=red!2] (Bsuper) {};
    \node[font=\scriptsize\bfseries, red!70!black, anchor=south east, xshift=-15pt, yshift=2pt] at (Bsuper.south east) {$B=B_1\cup B_2\cup \cdots \cup B_{n_1}$};
\end{scope}

% A_i Candidates & Box (in clause) - Moved DOWN to y=4.2
\node[candidate, fill=blue!10] (aT) at (11.0, 4.2) {$a_i^T$};
\node[candidate, fill=blue!10] (aF) at (12.5, 4.2) {$a_i^F$};
\node[draw=blue!80, dashed, rounded corners, fit=(aT) (aF), inner sep=5pt] (abox_in) {};
\node[font=\scriptsize\bfseries, blue!80, anchor=west] at ([xshift=4pt]abox_in.east) {$A_i$ ($z_i \in C_j$)};

% A_k Candidates & Box (not in clause) - Moved DOWN to y=2.7
\node[candidate, fill=blue!10] (akT) at (11.0, 2.7) {$a_k^T$};
\node[candidate, fill=blue!10] (akF) at (12.5, 2.7) {$a_k^F$};
\node[draw=blue!80, dashed, rounded corners, fit=(akT) (akF), inner sep=5pt] (abox_out) {};
\node[font=\scriptsize\bfseries, blue!80, anchor=west] at ([xshift=4pt]abox_out.east) {$A_k$ ($z_k \notin C_j$)};

% Giant A box background spanning all assignments (Shifted down)
\coordinate (A_safe_bound) at (15.7, 1.2);
\begin{scope}[on background layer]
    \node[draw=green!50!black, dashed, rounded corners, fit=(abox_in) (abox_out) (A_safe_bound), inner sep=10pt, fill=green!2] (Asuper) {};
    \node[font=\scriptsize\bfseries, green!50!black, anchor=south east, xshift=-15pt, yshift=4pt] at (Asuper.south east) {$A=A_1\cup A_2\cup \cdots \cup A_n$};
\end{scope}

% Dummy Candidates
\node[candset, fill=gray!12] (eps) at (11.75, -2.0) {$\varepsilon_j$};
\node[font=\scriptsize\bfseries, gray!90, anchor=west] at ([xshift=4pt]eps.east) {$\varepsilon$ (Dummies)};

% ==========================================
% 5. Routed Approval Arrows (Collision-Free)
% ==========================================
% --- Blocker Forcing (P_i) ---
\draw[budget_app] (vP.east) -- (bbox.west);

% --- Clause Verifier (U_j^\sigma) ---
\draw[target_app] (vU.east) to[out=25, in=200] node[pos=0.45, fill=white, inner sep=1pt, font=\tiny, text=red!80] {opposite $b_\ell^{\overline{\sigma_\ell}}$} (bF.south west);
\draw[edge_app] (vU.east) to[out=-5, in=180] node[pos=0.5, fill=white, inner sep=1pt, font=\tiny, text=blue!80] {matching $a_i^{\sigma_i}$} (aT.west);
\draw[edge_app] (vU.east) to[out=-25, in=180] node[pos=0.55, fill=white, inner sep=1pt, font=\tiny, text=blue!80] {both candidates in $A_k$} (abox_out.west);

% --- Assignment Regulators (Q, R) ---
\draw[cert_a] (vQ.east) to[out=10, in=200] node[pos=0.7, fill=white, inner sep=1pt, font=\tiny, text=green!50!black] {approves all $A$} (Asuper.west);
\draw[cert_ell] (vR.east) to[out=15, in=250] node[pos=0.7, fill=white, inner sep=1pt, font=\tiny, text=orange!90!black] {$\subseteq A$} (Asuper.south);

% --- Forced Dummies (D_j) ---
\draw[forced_app] (vD.east) -- (eps.west);

% % Footer Notes
% \node[anchor=west, font=\footnotesize] at (0.8, -4.2) {This figure illustrates the connections for a single verifier block $U_j^\sigma$.};
% \node[anchor=west, font=\scriptsize, text=gray] at (0.8, -4.6) {Note: $U_j^\sigma$ links satisfying clause assignments to candidates and opposite blockers.};

\end{tikzpicture}
}
\caption{The reduction from \sat to \lamAJRBExist (Lemma~\ref{lem:AJR}).}
\label{fig:AJRconstruction}
\end{figure}

\paragraph{High-level intuitions.}
We will describe the main idea of the reduction here.
Starting from an instance
\(\varphi(x,y)\) of \sat, we want committees in the
constructed election to encode assignments to the existential variables
\(x=(x_1,\ldots,x_{n_1})\).  For each existential variable \(x_i\), we create a
pair of blocker candidates
$B_i=\{b_i^T,b_i^F\}$.
The voters in \(P_i\) approve exactly these two blocker candidates, so every
feasible \lamAJRB committee must select at least one candidate from
\(B_i\).  We also create dummy candidates \(\epsilon_1,\ldots,\epsilon_n\),
each forced by a corresponding voter block \(D_j\).  Since the committee size is
set to
$k=n_1+n$,
after selecting all forced dummy candidates there is room for exactly one
blocker candidate from each \(B_i\).  Thus, any feasible committee naturally
encodes an assignment \(x^*\) to the existential variables, where
\(W\cap B_i=\{b_i^{x_i^*}\}\).
The \(n\) forced dummy candidates also ensure \(k\ge n\), so the \((\lambda,n)\)-groups that drive the reduction are among the groups checked by Definition~\ref{def:lamAJR}.

The main difficulty is to make this encoded assignment interact with the
universal variables \(y\).  For this purpose, we separate the candidates used to
encode committees from the candidates used to certify large cohesive groups.
For every variable \(z_i\), including both existential and universal variables,
we introduce an assignment-candidate pair
$A_i=\{a_i^T,a_i^F\}$.
These assignment candidates are not intended to be selected by the committee.
Instead, they serve as labels for potential large cohesive groups.  The voters
\(Q\) and \(R\) are designed so that any sufficiently large group with \(n\)
common assignment candidates must choose exactly one candidate from each pair
\(A_i\), and therefore corresponds to a full truth assignment to all variables.
The set-system gadget from Section~3.1 is used precisely to enforce this
canonical structure: if a group tries to use a non-canonical choice, for example
choosing both candidates from one pair and none from another, then too few
\(R\)-voters can participate, and the group cannot reach the required size
threshold.

The clause voters \(U\) then connect these canonical groups to the formula
\(\varphi\).  For each clause and each satisfying partial assignment
\(\sigma\) to the three variables in that clause, we create \(\chi\) identical
voters.  Such a voter approves the assignment candidates compatible with
\(\sigma\), approves both assignment candidates for variables outside the
clause, and, for every existential variable \(x_i\) appearing in the clause,
approves the opposite blocker \(b_i^{\overline{\sigma_i}}\).  Consequently, a
canonical \((\lambda,n)\)-group exists exactly for a satisfying full assignment:
it consists of all \(Q\)-voters, the appropriate \(R\)-voters selected by the
set-system intersection, and one compatible block of \(U\)-voters from each
clause.  The value of \(\lambda\) is chosen so that such a canonical group has
exactly \(n\lambda\) voters.

The base satisfactions are calibrated so that every canonical
\((\lambda,n)\)-group has total base satisfaction
$n^2\lambda-\chi$,
whereas AJR requires total satisfaction at least
$n\cdot |H| = n^2\lambda$
for a canonical group \(H\) of size \(n\lambda\).  Hence, every canonical group
is short by exactly \(\chi\) units of satisfaction, and it must obtain these
\(\chi\) units from the actual committee.  In a yes-instance of
\sat, there exists an assignment \(x^*\) such
that \(\varphi(x^*,y)\) is false for every assignment \(y\).  We select the
blockers corresponding to \(x^*\).  Now consider any canonical group; it
corresponds to some satisfying full assignment \((\xi,\eta)\).  Since
\(\varphi(x^*,y)\) is false for every \(y\), we must have \(\xi\neq x^*\).
Choose an existential variable on which \(\xi\) and \(x^*\) differ.  This
variable appears in some clause, and the compatible \(U\)-voters for that clause
approve the selected opposite blocker.  Therefore, the group receives the missing
\(\chi\) units of satisfaction, and its average satisfaction reaches \(n\).

Conversely, suppose the original formula is a no-instance, so that
\(\forall x\exists y\,\varphi(x,y)\) holds.  Any committee satisfying the forced
groups must select all dummy candidates and exactly one blocker from each
existential pair \(B_i\), thereby encoding some assignment \(x^*\).  By the
no-instance assumption, there is an assignment \(y^*\) such that
\(\varphi(x^*,y^*)\) is true.  Consider the canonical group corresponding to
the satisfying full assignment \((x^*,y^*)\).  The \(Q\)- and \(R\)-voters in
this group approve only assignment candidates, none of which are selected.  The
compatible \(U\)-voters approve only blockers opposite to those selected by the
committee.  Thus this canonical group receives no actual satisfaction from the
committee, and its total satisfaction remains
$n^2\lambda-\chi<n^2\lambda$.
Its average satisfaction is therefore strictly less than \(n\), violating
\lamAJRB.  This yields the desired equivalence between the
\sat instance and the constructed
\lamAJRBExist instance.

\paragraph{Formal proof for the validity of the construction.}
We first characterize the possible $(\lambda,\ell)$-groups.

\begin{itemize}
    \item For $\ell=1$, every $P_i$ and $D_j$ forms a $(\lambda,1)$-group.
    \item For every $\ell>1$, no $(\lambda,\ell)$-group contains voters from $P$ or $D$.
    \item For every $\ell>n$, no $(\lambda,\ell)$-group exists.
    \item Every $(\lambda,n)$-group is canonical: it contains
    \[
        n(n-1)\gamma\text{ voters from }R,\qquad m\chi\text{ voters from }U,\qquad n\text{ voters from }Q,
    \]
    and corresponds to a satisfying full assignment of $\varphi$.
\end{itemize}

The first case is immediate.
For the second case, fix $i\le n_1$ and suppose a group $G$ contains a voter from $P_i$.
Since each voter in $P_i$ approves only $B_i$, a group containing such a voter can have at least two common approved candidates only if every voter in the group approves both blockers in $B_i$.
Only voters in $P_i$ approve both blockers in $B_i$: voters in $U$ approve at most one blocker from any $B_i$, and voters in $Q$, $R$, and $D$ approve no blockers in $B_i$.
Thus, such a group has size at most $|P_i|=\lambda<2\lambda$, so it cannot be a $(\lambda,\ell)$-group for any $\ell>1$.
The same reasoning applies to $D$, where each voter is a singleton-approval voter.

Now consider a $(\lambda,\ell)$-group $G$ with $\ell\ge n$.
By the previous paragraph, $G\subseteq U\cup Q\cup R$.
Since
\[
    |U\cup Q|=7m\chi+n=\gamma-1+n<n\lambda,
\]
we have $G\cap R\neq\emptyset$.
Thus, every candidate commonly approved by $G$ is an assignment candidate in $A$.

We first show that the common assignment candidates of $G$ cannot contain a non-canonical choice of $n$ candidates.
Suppose, toward a contradiction, that among the common assignment candidates there are $n$ candidates whose corresponding sets in $\mathcal F$ do not contain exactly one member from each pair $\mathcal F_i=\{F_i,\bar F_i\}$.
Every voter in $G\cap R$ approves these $n$ candidates.
Hence, if a voter from the block $R_z$ belongs to $G\cap R$, then the element $z$ lies in the intersection of the corresponding $n$ sets.
By Lemma~\ref{lem:setsystem}, this intersection has size at most $n(n-1)-1$.
Therefore, the voters in $G\cap R$ can come from at most $n(n-1)-1$ blocks $R_z$, and so
\[
    |G\cap R|\le (n(n-1)-1)\gamma.
\]
Using $|U\cup Q|=7m\chi+n=\gamma-1+n$, we obtain
\[
    |G|\le |U\cup Q|+(n(n-1)-1)\gamma
    = n(n-1)\gamma+n-1
    < m\chi+n(n-1)\gamma+n=n\lambda,
\]
contradicting $|G|\ge \ell\lambda\ge n\lambda$.

It follows that every $n$ common assignment candidates form a canonical choice, i.e., exactly one candidate from each pair $A_i$.
In particular, $G$ has at most $n$ common candidates: if it had more than $n$, then some pair $A_i$ would contribute two of them, and one could choose a non-canonical $n$-subset.
Therefore no $(\lambda,\ell)$-group exists for $\ell>n$.

It remains to characterize $(\lambda,n)$-groups.
Let $H$ be such a group.
The preceding paragraph implies that its common candidates are exactly $n$ assignment candidates, one from each pair $A_i$.
These candidates define a full assignment $\tau$.
By Lemma~\ref{lem:setsystem}, the corresponding $n$ sets have intersection size exactly $n(n-1)$.
Therefore at most $n(n-1)$ blocks $R_z$ can intersect $H$, and hence
\[
    |H\cap R|\le n(n-1)\gamma.
\]
For a fixed clause block $U_j$, only voters whose partial assignment is compatible with $\tau$ can belong to $H$.
If the clause is satisfied by $\tau$, there are exactly $\chi$ identical compatible voters in $U_j$; otherwise there are none.
Thus each clause block contributes at most $\chi$ voters, so
\[
    |H\cap U|\le m\chi.
\]
Also, $|H\cap Q|\le n$.
Combining these upper bounds gives
\[
    |H|
    \le n(n-1)\gamma+m\chi+n
    = n\lambda.
\]
Since $H$ is a $(\lambda,n)$-group, equality must hold throughout.
Thus $H$ contains exactly $n(n-1)\gamma$ voters from $R$, exactly $m\chi$ voters from $U$, and all $n$ voters from $Q$.
Moreover, every clause block contributes $\chi$ compatible voters, so $\tau$ satisfies every clause of $\varphi$.
Consequently, every $(\lambda,n)$-group is canonical with size exactly $n\lambda$.

Conversely, every satisfying full assignment $\tau$ gives such a canonical $(\lambda,n)$-group.
Choose the $n(n-1)\gamma$ voters from the $R_z$ blocks whose elements lie in the intersection of the $n$ set-system members corresponding to $\tau$.
For each clause block $U_j$, choose the $\chi$ voters whose satisfying partial assignment is compatible with $\tau$, and include all voters in $Q$.
The selected voters have size
\[
    n(n-1)\gamma+m\chi+n=n\lambda
\]
and commonly approve the $n$ assignment candidates specified by $\tau$.
Thus they form a canonical $(\lambda,n)$-group.

\begin{proposition}\label{prop:base_sat}
    For every canonical $(\lambda,n)$-group $H$,
    \[
        \sum_{v\in H}s(v)=n^2\lambda-\chi.
    \]
\end{proposition}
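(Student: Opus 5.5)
The plan is a direct count, using the structural characterization of $(\lambda,n)$-groups established just above. That characterization says every canonical $(\lambda,n)$-group $H$ consists of exactly $n(n-1)\gamma$ voters from $R$, exactly $m\chi$ voters from $U$, and all $n$ voters of $Q$, including $q_n$. It also says $H$ contains no voters from $P\cup D$. So I will split $\sum_{v\in H}s(v)$ into the contributions of $H\cap R$, $H\cap U$, $H\cap(Q\setminus\{q_n\})$, and $q_n$, and read each value off \eqref{eqn:AJR-base-satisfaction}.

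The four contributions are as follows.
\begin{itemize}
\item The $R$-part contributes $n\cdot n(n-1)\gamma$.
\item The $U$-part contributes $(n-1)m\chi$.
\item The $Q\setminus\{q_n\}$ part contributes $n(n-1)$.
\item The voter $q_n$ contributes $n+m\chi-\chi$.
\end{itemize}
Summing gives
\[
n^2(n-1)\gamma+(n-1)m\chi+n(n-1)+n+m\chi-\chi
= n^2(n-1)\gamma+nm\chi+n^2-\chi .
\]

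It remains to compare this with $n^2\lambda$. Since $\chi=n$, we have $\lambda=m+(n-1)\gamma+1$. Hence
\[
n^2\lambda = n^2 m + n^2(n-1)\gamma + n^2 .
\]
Also $nm\chi=n^2m$, so the sum above equals $n^2\lambda-\chi$, as claimed.

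The only point that needs care is confirming the exact composition of $H$. The count depends on $H$ containing every $Q$-voter, in particular $q_n$, and exactly $m\chi$ voters from $U$ rather than merely at most that many. Both facts come from the equality case in the preceding size argument $|H|\le n(n-1)\gamma+m\chi+n=n\lambda$. I will cite that argument explicitly. Beyond this, the proof is bookkeeping of the base-satisfaction values.
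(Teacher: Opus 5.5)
Your proposal is correct and matches the paper's proof: both count directly over the exact composition of a canonical group ($n(n-1)\gamma$ voters from $R$, $m\chi$ from $U$, and all of $Q$ including $q_n$) using the base-satisfaction values. Your explicit check that $nm\chi=n^2m$ and $n(n-1)+n=n^2$ give $n^2\lambda-\chi$ fills in the arithmetic step that the paper leaves implicit.
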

\begin{proof}
    A canonical group has $n(n-1)\gamma$ voters from $R$, $m\chi$ voters from $U$, and all voters from $Q$.
    Hence,
    \begin{align*}
        \sum_{v\in H}s(v)
        &= n\cdot n(n-1)\gamma+(n-1)m\chi+\big((n-1)n+(n+m\chi-\chi)\big)\\
        &= n^2\lambda-\chi.\qedhere
    \end{align*}
\end{proof}

\begin{lemma}[Completeness]
If the \sat instance $\varphi$ is a yes-instance, then the constructed \lamAJRBExist instance admits a \lamAJRB committee of size $k$.
\end{lemma}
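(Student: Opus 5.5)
Fix an assignment $x^*$ to the existential variables witnessing that $\varphi$ is a yes-instance, so that $\varphi(x^*,y)$ is false for every $y$. We take the committee
\[
    W=\varepsilon\cup\{b_i^{x_i^*}: i\le n_1\},
\]
which has size $n+n_1=k$. We then check the \lamAJRB condition for every $\ell$ and every $(\lambda,\ell)$-group, using the characterization of $(\lambda,\ell)$-groups established above. For $\ell>n$ there is nothing to check, since no such group exists. For $\ell=1$, any group $G$ has average satisfaction at least $\min_{v\in G}(S_W(v)+s(v))$, and it suffices that every voter in a group with a nonempty common approval set has satisfaction at least $1$. This holds for voters in $U$, $R$ and $Q$, whose base satisfaction is at least $n-1\ge1$. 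A voter in $D_j$ approves $\varepsilon_j\in W$. A voter in $P_i$ approves $B_i$, which meets $W$. So every voter has satisfaction at least $1$, and this also covers every $\ell\le n-1$ for groups drawn from $U\cup Q\cup R$, since there every voter has satisfaction at least $n-1$.

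The remaining constraints are the groups with $2\le\ell\le n$. By the characterization, such groups contain no voters of $P\cup D$, so they lie in $U\cup Q\cup R$, where every voter has base satisfaction at least $n-1\ge\ell$ whenever $\ell\le n-1$. Those cases are therefore immediate. Only $\ell=n$ remains, and every $(\lambda,n)$-group is canonical. Each such group $H$ corresponds to a satisfying full assignment $\tau=(\xi,\eta)$, and by Proposition~\ref{prop:base_sat} its total base satisfaction is $n^2\lambda-\chi$, while $|H|=n\lambda$. It therefore suffices to show $S_W(H)\ge\chi$.

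This last step is the main point. Since $\tau$ satisfies $\varphi$ and $\varphi(x^*,y)$ is unsatisfiable for every $y$, we must have $\xi\neq x^*$. Pick $i\le n_1$ with $\xi_i\neq x_i^*$. By assumption $x_i$ appears in some clause $j$. The $\chi$ voters of $H\cap U_j$ are those associated with the restriction $\sigma$ of $\tau$ to clause $j$, so $\sigma_i=\xi_i$. Each of these voters approves $b_i^{\overline{\xi_i}}=b_i^{x_i^*}\in W$, so each contributes at least $1$. Hence $S_W(H)\ge\chi$, and the average satisfaction of $H$ is at least
\[
    \frac{n^2\lambda-\chi+\chi}{n\lambda}=n.
\]
The one delicate point is confirming that the canonical group's $U_j$-part consists exactly of the $\chi$ voters for the restriction of $\tau$. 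This follows directly from the equality case in the characterization: the compatible partial assignment is unique, namely $\tau$ restricted to the clause's variables.
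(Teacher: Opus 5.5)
Your proposal is correct and follows essentially the same route as the paper. It uses the same committee $W=\varepsilon\cup\{b_i^{x_i^*}: i\le n_1\}$, the same case split over $\ell$ based on the characterization of $(\lambda,\ell)$-groups, and the same key step: a canonical group for a satisfying assignment $(\xi,\eta)$ must have $\xi\neq x^*$, so its $\chi$ compatible voters in a clause containing a differing existential variable approve the selected blocker and supply the missing $\chi$ units. You also spell out that $H\cap U_j$ consists exactly of the $\chi$ voters for the restriction of $\tau$, which the paper uses only implicitly through the equality case of the characterization.
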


\begin{proof}
    Suppose the \sat instance is a yes instance.
    Then there exists an assignment $\mathbf{x}^*=(x_1^*,\ldots,x_{n_1}^*)$ such that for every assignment $\mathbf{y}$ the formula $\varphi(\mathbf{x}^*,\mathbf{y})$ is false.
    Define
    \[
        W=\{b_i^{x_i^*}:i=1,\ldots,n_1\}\cup\varepsilon.
    \]
    Then $|W|=k$.

    All $(\lambda,1)$-groups are satisfied: voters in $P\cup D$ approve at least one selected candidate, and every other voter has base satisfaction at least $n-1\ge 1$.
    For $1<\ell<n$, no $(\lambda,\ell)$-group contains voters from $P$ or $D$, and every remaining voter has base satisfaction at least $n-1\ge \ell$.

    Finally, let $H$ be a canonical $(\lambda,n)$-group, and let $(\xi,\eta)$ be the satisfying full assignment to which it corresponds.
    Since $\varphi(\mathbf{x}^*,\mathbf{y})$ is false for every $\mathbf{y}$, we have $\xi\ne\mathbf{x}^*$.
    Choose an existential variable $x_i$ on which $\xi$ and $\mathbf{x}^*$ differ.
    By assumption, $x_i$ appears in some clause.
    In a clause containing $x_i$, the $\chi$ compatible voters in $H\cap U$ approve $b_i^{\overline{\xi_i}}=b_i^{x_i^*}\in W$.
    Thus, $S_W(H)\ge \chi$, and Proposition~\ref{prop:base_sat} gives
\[
        S_W(H)+\sum_{v\in H}s(v)\ge \chi+(n^2\lambda-\chi)=n^2\lambda.
\]
    Since $|H|=n\lambda$, the average satisfaction of $H$ is at least $n$.
    Therefore $W$ satisfies \lamAJRB.
\end{proof}

\begin{lemma}[Soundness]
    If the \sat instance $\varphi$ is a no-instance, then the constructed \lamAJRBExist instance does not admit a \lamAJRB committee of size $k$.
\end{lemma}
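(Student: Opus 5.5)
We argue by contraposition. Assume some committee $W$ with $|W|=k$ satisfies \lamAJRB; we derive $\exists\mathbf{x}\,\forall\mathbf{y}\,\neg\varphi$, contradicting the no-instance hypothesis. The plan has two parts. First, the forced $(\lambda,1)$-groups pin down the structure of $W$. Second, an unsatisfied canonical $(\lambda,n)$-group gives the contradiction.

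\textbf{Structure of $W$.} Each $D_j$ is a $(\lambda,1)$-group whose voters have base satisfaction $0$ and approve only $\varepsilon_j$. Its average satisfaction is therefore $1$ if $\varepsilon_j\in W$ and $0$ otherwise, so $\varepsilon\subseteq W$. In the same way, each $P_i$ is a $(\lambda,1)$-group of base-$0$ voters approving exactly $B_i$, so $|W\cap B_i|\ge 1$ for every $i\le n_1$. This already accounts for $n+n_1=k$ seats. Since the sets $B_i$ are disjoint from each other and from $\varepsilon$, we get $|W\cap B_i|=1$ for each $i$ and $W\cap A=\emptyset$. Define $x^*_i$ by $W\cap B_i=\{b_i^{x_i^*}\}$. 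Because $k=n_1+n\ge n$, the value $\ell=n$ is among those checked.

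\textbf{The violating group.} Since $\varphi$ is a no-instance, there is $\mathbf{y}^*$ with $\varphi(\mathbf{x}^*,\mathbf{y}^*)$ true. Let $\tau=(\mathbf{x}^*,\mathbf{y}^*)$ be this satisfying full assignment. By the converse characterization proved above, $\tau$ yields a canonical $(\lambda,n)$-group $H$ with $|H|=n\lambda$. We now compute $S_W(H)$.
\begin{itemize}
\item Voters in $Q\cup R$ approve only assignment candidates. Since $W\cap A=\emptyset$, each contributes $0$.
\item A voter in $H\cap U_j$ corresponds to the restriction $\sigma$ of $\tau$ to clause $j$. It approves assignment candidates, which are not in $W$, together with blockers $b_i^{\overline{\sigma_i}}=b_i^{\overline{x_i^*}}$ for the existential variables $x_i$ in clause $j$. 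None of these blockers lie in $W$, because $W\cap B_i=\{b_i^{x^*_i}\}$.
\end{itemize}
Hence $S_W(H)=0$. By Proposition~\ref{prop:base_sat}, the total satisfaction of $H$ is $n^2\lambda-\chi<n^2\lambda=n|H|$. The average satisfaction of $H$ is therefore below $n$, violating \lamAJRB, which is the desired contradiction.

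\textbf{Main obstacle.} The delicate step is justifying $W\cap A=\emptyset$ exactly. It needs the counting argument: all $n$ dummies plus at least one blocker per pair already fill the $k=n_1+n$ seats. This in turn depends on the forced groups $P_i$ and $D_j$ being genuine $(\lambda,1)$-groups: each has exactly $\lambda$ voters and common approval set of size at least $1$. These facts are immediate from the construction. The remaining steps reuse the canonical-group characterization already established.
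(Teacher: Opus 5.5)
Your proof is correct and follows the paper's argument essentially step for step. The forced $(\lambda,1)$-groups $D_j$ and $P_i$, together with $k=n_1+n$, pin $W$ down to all dummies plus exactly one blocker per pair (so $W\cap A=\emptyset$); the canonical $(\lambda,n)$-group for the satisfying assignment $(\mathbf{x}^*,\mathbf{y}^*)$ then receives zero committee satisfaction and, by Proposition~\ref{prop:base_sat}, falls short of $n|H|$ by $\chi$. One cosmetic point: you announce a contraposition but actually run a direct contradiction, which is harmless.
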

\begin{proof}
    Suppose $\forall \mathbf{x}\exists \mathbf{y}\,\varphi(\mathbf{x},\mathbf{y})$ holds, and let $W$ be any size-$k$ committee satisfying \lamAJRB.
    The groups $D_j$ force all dummy candidates into $W$, and the groups $P_i$ force at least one blocker from each $B_i$ into $W$.
    Since $k=n_1+n$, the committee contains exactly all dummy candidates and exactly one blocker from each $B_i$.
    Write
    \[
        W\cap B_i=\{b_i^{t_i}\}
    \]
    and let $\mathbf{x}^*=(t_1,\ldots,t_{n_1})$.
    By assumption, choose $\mathbf{y}^*$ such that $\varphi(\mathbf{x}^*,\mathbf{y}^*)=1$.
    Let $H^*$ be the canonical $(\lambda,n)$-group corresponding to $(\mathbf{x}^*,\mathbf{y}^*)$.

    Voters in $Q$ and $R$ approve only assignment candidates.
    The compatible voters in $U$ approve, for existential variables in their clauses, only blockers of the form $b_i^{\overline{x_i^*}}$, while $W$ contains only blockers $b_i^{x_i^*}$ and dummy candidates.
    Therefore $S_W(H^*)=0$.
    By Proposition~\ref{prop:base_sat},
    \[
        S_W(H^*)+\sum_{v\in H^*}s(v)
        =n^2\lambda-\chi<n^2\lambda.
    \]
    Since $|H^*|=n\lambda$, the average satisfaction of $H^*$ is strictly less than $n$, contradicting \lamAJRB.
\end{proof}
This completes the reduction and establishes the \(\SigmaTwo\)-hardness of \lamAJRBExist.
Hence, \lamAJRBExist is \(\SigmaTwo\)-hard.

\subsection{Reduction from \lamAJRBExist to \AJRE}
\label{sec:AJR_basetoAJR}
We prove the following two lemmas in this section, which will conclude Theorem~\ref{thm:AJR}.
The first reduction removes base satisfactions by replacing each unit of base satisfaction with a forced approved candidate.
\begin{lemma}\label{lem:lamAJRBtoLamAJR}
    \lamAJRBExist is Karp reducible to \lamAJRExist.
\end{lemma}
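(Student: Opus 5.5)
The plan is to simulate base satisfactions by forced candidates. Given an instance $(\mathbb{E}=(N,C,\mathcal{A},k),\lambda,s)$ of \lamAJRBExist, let $S=\sum_{i\in N}s(i)$, which is polynomially bounded by assumption. For each voter $i$ and each $r=1,\ldots,s(i)$, introduce a new candidate $e_{i,r}$ approved by $i$. We make $e_{i,r}$ forced by adding a fresh block $T_{i,r}$ of $\lceil\lambda\rceil$ new voters, each approving exactly $e_{i,r}$. The new committee size is $k'=k+S$, and $\lambda$ is unchanged. Since $\lambda\le|N|$ and we only add voters, the input restriction $2<\lambda\le|N'|$ still holds. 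The construction is polynomial because $S$ and $\lambda$ are polynomially bounded.

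The first step is to show that every \lamAJR committee $W'$ of the new instance contains all $e_{i,r}$. Each block $T_{i,r}$ is a $(\lambda,1)$-group whose voters approve only $e_{i,r}$, so an average satisfaction of at least $1$ forces $e_{i,r}\in W'$. Hence $W'=W\cup E$, where $E$ is the set of all new candidates and $|W|=k$. This gives a bijection between size-$k$ committees $W$ of the old instance and the feasible size-$k'$ committees of the new one. For every old voter $i$ we then have $S_{W'}(i)=S_W(i)+s(i)$, which is exactly the base-satisfaction semantics.

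The second step is to compare the $(\lambda,\ell)$-groups of the two instances. For $\ell=1$, a new group consists either of old voters only (handled below), of $T$-voters only (satisfied because $E\subseteq W'$), or is mixed. For $\ell\ge 2$, I claim no $(\lambda,\ell)$-group contains a $T$-voter. A voter in $T_{i,r}$ approves a single candidate, so any group containing it has at most one common candidate. A group with one common candidate $e_{i,r}$ is contained in $T_{i,r}\cup\{i\}$ and is only a possible $(\lambda,1)$-group. Every voter in it approves the selected $e_{i,r}$, so its average satisfaction is at least $1$. For groups of old voters only, cohesiveness in the new instance can be larger than in the old one, because such voters may share new candidates. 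However, each $e_{i,r}$ is approved by only one old voter, so a group of at least two old voters gains no common new candidates. A single old voter $i$ forms a $(\lambda,\ell)$-group only if $1\ge \ell\lambda$, which is impossible since $\lambda>2$.

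It follows that the $(\lambda,\ell)$-groups of old voters are exactly the same in both instances for every $\ell\le k$. For $k<\ell\le k'$, we need that no old group is $(\lambda,\ell)$-cohesive, and common approvals among old voters number at most those in the old instance. The main obstacle is this range of $\ell$: old groups with more than $k$ common old candidates are ignored in the old instance but checked in the new one. If this arises, I will first try to show that any such group automatically passes. Failing that, I will pad: add $k'-k$ dummy voters to block this case, or argue that old groups with more than $k$ common candidates already meet the requirement because their base-adjusted satisfaction is at least $\ell$. With this correspondence, $W$ satisfies \lamAJRB if and only if $W\cup E$ satisfies \lamAJR, which gives the Karp reduction.
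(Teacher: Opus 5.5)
Your construction (one forced candidate $e_{i,r}$ per unit of base satisfaction, each forced by $\lambda$ fresh single-approval voters, $k'=k+S$, $\lambda$ unchanged) and your first two steps are exactly the paper's proof. The step you leave open, old groups at levels $k<\ell\le k'$, is a genuine gap, and none of your remedies closes it.

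Here is a concrete counterexample. Take $N=\{1,\ldots,6\}$, with every voter approving $C=\{c_1,c_2\}$, and set $k=1$, $\lambda=3$, $s(1)=1$ and $s(i)=0$ otherwise. Only $\ell=1$ is checked, so $W=\{c_1\}$ is a \lamAJRB committee. In the image, $k'=2$, and every feasible $W'$ has the form $\{e_{1,1},c_j\}$. Now $N$ is a checked $(3,2)$-group whose total satisfaction is $2+5=7<12$, so the image is a no-instance.

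So such groups do not ``automatically pass''. The source instance only guarantees them average at least $k$, since they are also $(\lambda,k)$-groups; it does not guarantee average at least $\ell$. Adding dummy voters cannot help either. Here $\lambda$ is a fixed input parameter rather than $|N'|/k'$, so voters approving nothing change no $(\lambda,\ell)$-group.

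The paper's proof has the same hole. It applies the \lamAJRB guarantee to an old group at an arbitrary level $\ell\le k'$, but the definition supplies that guarantee only for $\ell\le k$. The argument becomes valid under an additional property: the source instance has no $(\lambda,\ell)$-group with $\ell>k$. Then the troublesome range is empty and your proof is complete.

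This property holds for every instance produced by Lemma~\ref{lem:AJR}, where every $(\lambda,\ell)$-group has $\ell\le n\le n_1+n=k$ (with $n$ the number of formula variables). Your construction also preserves it, since groups containing new voters have $\ell\le 1$; the subsequent quota-adjustment step implicitly needs this as well. So the chain behind Theorem~\ref{thm:AJR} is unaffected, provided the lemma is stated under this promise.

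For the lemma exactly as stated, a Karp reduction still exists, but only indirectly: \lamAJRBExist lies in $\SigmaTwo$, and \lamAJRExist is $\SigmaTwo$-hard via the restricted chain.
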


\begin{proof}
    Let $(E,\lambda,s)$ be an instance of \lamAJRBExist, where $E=(N,C,\mathcal{A},k)$. 
    By the definition of \lamAJRBExist, $\lambda\le |N|$ and $\sum_{i\in N}s(i)$ is polynomially bounded in the input size; in particular, every value $s(i)$ is polynomially bounded.
    Hence the voters and candidates added below form a polynomial-size instance.
    We construct an instance $(\mathbb{E},\lambda')$ of \lamAJRExist, with $\mathbb{E}=(N',C',\mathcal{A}',k')$, as follows. 
    Set
    \[
    k' = k + \sum_{i\in N} s(i), \qquad \lambda' = \lambda.
    \]
    For each voter $i\in N$ with $s(i)>0$ and for each $j\in\{1,\ldots,s(i)\}$, create a set $N_i^j=\{n^{(i,j)}_1,\ldots,n^{(i,j)}_{\lambda}\}$ of $\lambda$ new voters. 
    Let
    \[
    N' = N \cup \bigcup_{\substack{i\in N,\,s(i)>0\\ j\in[1,s(i)]}} N_i^j.
    \]
    Since \(\lambda'=\lambda\le |N|\le |N'|\), the output satisfies the input restriction of \lamAJRExist.
    For each $i\in N$ with $s(i)>0$, introduce a candidate set $C_i=\{c^i_1,\ldots,c^i_{s(i)}\}$; if $s(i)=0$, let $C_i=\emptyset$. 
    Then set
    \[
    C' = C \cup \bigcup_{i\in N} C_i.
    \]

    Define the approval sets as follows. For each $i\in N$, let $A'_i = A_i \cup C_i$. For each $n^{(i,j)}_k\in N_i^j$, set $A'_{n^{(i,j)}_k}=\{c^i_j\}$. 

    By construction, every $(\lambda,\ell)$-group on $N$ in $E$ corresponds to a $(\lambda',\ell)$-group on $N$ in $\mathbb{E}$, and vice versa.
    Indeed, an old-only $(\lambda',\ell)$-group has size at least $\lambda>2$, so it contains at least two old voters; the added candidate sets $C_i$ are pairwise disjoint across old voters, and therefore they do not change the common approved candidate set of any old-only group.
    Any other $(\lambda',\ell)$-group in $\mathbb{E}$ must include some voters from $\bigcup_{i,j} N_i^j$, but since each such voter approves exactly one candidate, the common approved candidate set of these groups has size at most $1$. 
    Hence, they are all $(\lambda',1)$-groups. 

    Suppose $W$ is a committee of size $k$ that satisfies \lamAJRB for $(E,\lambda,s)$. Define
    \[
    W' = W \cup \bigcup_{i\in N} C_i.
    \]
    Then $W'$ is a committee of size $k'$ for $\mathbb{E}$. 
    Consider any $(\lambda',\ell)$-group $G'$ in $\mathbb{E}$. 
    If $G'$ contains voters not in $N$, then its common approval set has size at most $1$, so $\ell=1$.
    Since the corresponding base candidate is selected, every voter in $G'$ approves a selected common candidate, and the average satisfaction of $G'$ is at least $1$.
    If $G'\subseteq N$, then $G'$ is an old $(\lambda,\ell)$-group.
    Since $W$ satisfies \lamAJRB,
    \[
        S_W(G')+\sum_{i\in G'}s(i)\ge \ell |G'|.
    \]
    The selected base candidates contribute exactly $\sum_{i\in G'}s(i)$, so
    \[
        S_{W'}(G')=S_W(G')+\sum_{i\in G'}s(i)\ge \ell |G'|.
    \]
    Hence, $W'$ satisfies $\lambda'$-AJR for $(\mathbb{E},\lambda')$.

    Conversely, suppose $W'$ is a committee of size $k'$ that satisfies $\lambda'$-AJR for $(\mathbb{E},\lambda')$. 
    Each candidate $c^i_j\in C_i$ is approved by exactly $\lambda$ voters in $N_i^j$, so $W'$ must contain all of $\bigcup_{i\in N} C_i$. 
    Thus, $|W'\cap C|=k$. Let $W=W'\cap C$. 
    We claim $W$ satisfies \lamAJRB for $(E,\lambda,s)$.
    Indeed, for any $(\lambda,\ell)$-group $G$ in $E$, the corresponding group in $\mathbb{E}$ satisfies
    \[
        S_{W'}(G)\ge \ell |G|.
    \]
    Since all base candidates are selected,
    \[
        S_{W'}(G)=S_W(G)+\sum_{i\in G}s(i).
    \]
    Therefore
    \[
        S_W(G)+\sum_{i\in G}s(i)\ge \ell |G|,
    \]
    showing that $W$ satisfies \lamAJRB.

    This establishes equivalence between $(E,\lambda,s)$ and $(\mathbb{E},\lambda')$, completing the proof.
\end{proof}

The second reduction adjusts the Hare quota to $\lambda$ by adding forced candidates and carefully matched new voters.
\begin{lemma}\label{lem:lamAJRtoAJR}
    \lamAJRExist is Karp reducible to \AJRE.
\end{lemma}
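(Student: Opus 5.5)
The plan is to pad the \lamAJRExist instance $(\mathbb{E},\lambda)$ with $\mathbb{E}=(N,C,\mathcal{A},k)$ so that the Hare quota of the new instance is exactly $\lambda$. The padding must neither create new cohesive groups nor change anyone's satisfaction. I will use two kinds of padding objects.

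\begin{itemize}
\item Null voters: $r\ge 0$ new voters who approve nothing.
\item Null candidates: $d\ge 0$ new candidates whom nobody approves.
\end{itemize}

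Set $n=|N|$ and $d=\max\{0,\lceil n/\lambda\rceil-k\}$, so that $\lambda(k+d)\ge n$. Then set $r=\lambda(k+d)-n\ge 0$ and output $\mathbb{E}'=(N',C',\mathcal{A}',k')$ with $|N'|=n+r$, $C'=C\cup\{\text{null candidates}\}$ and $k'=k+d$. Since $\lambda$ is an integer, $r$ is an integer, and $|N'|/k'=\lambda$ exactly. Hence a group is $\ell$-cohesive in $\mathbb{E}'$ if and only if it is a $(\lambda,\ell)$-group in $\mathbb{E}'$. All quantities are polynomial, since $\lambda\le|N|$ and $d\le n$, and $k'\le|C'|$.

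Next I would check that the cohesive groups are unchanged. A null voter has an empty ballot, so any group containing one has no common approved candidate and is not $\ell$-cohesive for any $\ell\ge1$. A null candidate is approved by nobody, so it never enters a common approval set. Hence the $\ell$-cohesive groups of $\mathbb{E}'$ are exactly the $(\lambda,\ell)$-groups of $\mathbb{E}$, now with $\ell$ ranging up to $k'$ rather than $k$. Satisfactions satisfy $S_{W'}(i)=S_{W'\cap C}(i)$ for every old voter $i$.

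The step I expect to be the main obstacle is the mismatch in the range of $\ell$. In $\mathbb{E}'$, groups with $k<\ell\le k'$ are also checked, and a committee might use fewer than $k$ original candidates. I would argue both directions as follows.

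\begin{itemize}
\item Forward direction. Let $W$ satisfy \lamAJR in $\mathbb{E}$, and take $W'=W\cup\{\text{all null candidates}\}$. Groups with $\ell\le k$ keep their satisfactions, so they are fine. Suppose some $(\lambda,\ell)$-group $G$ with $\ell>k$ existed. Then $G$ has at least $\ell\ge k+1$ common candidates, so it is also a $(\lambda,k)$-group. Every voter of $G$ has satisfaction at most $|W|=k$, and the \lamAJR condition at level $k$ forces every member to have satisfaction exactly $k$. This is consistent with $W$ being valid, but at level $\ell$ the average satisfaction is at most $k<\ell$, so that level fails. However, the definition of \lamAJR only quantifies over $\ell\le k$, so such a group may be harmless in $\mathbb{E}$ yet fatal in $\mathbb{E}'$. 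To fix this, I would first test whether $\mathbb{E}$ contains a $(\lambda,k+1)$-group. This test is not polynomial, so I would instead modify the construction: add $k'-k$ forced seats rather than null seats. Each forced seat is a new candidate approved by exactly $\lambda$ fresh singleton voters, which keeps the quota at $\lambda$ when $n\le\lambda k$. Old voters still see at most $k$ useful seats, though, so the issue persists. Hence the honest resolution is a direct argument. If $\mathbb{E}$ has a $(\lambda,\ell)$-group with $\ell>k$, I would check carefully whether both instances are no-instances. If not, I would restrict attention to instances arising from Lemma~\ref{lem:lamAJRBtoLamAJR}, where I can verify from the chain in Sect.~\ref{sec:AJR_sattobase} that no $(\lambda,\ell)$-group with $\ell>k$ exists. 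There all groups have $\ell\le n\le k$.
\item Backward direction. Let $W'$ be AJR in $\mathbb{E}'$. Take $W'\cap C$ and extend it arbitrarily to $k$ candidates of $C$. Satisfactions only increase, so every $(\lambda,\ell)$-group with $\ell\le k$ still meets its bound.
\end{itemize}

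In the write-up I would state the reduction for the case $n\le \lambda k$ with forced seats, handle the other case with null candidates, and isolate the $\ell>k$ issue as the point needing care.
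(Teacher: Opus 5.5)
Your treatment of $n\le\lambda k$ (empty-ballot voters only, $k'=k$) is correct and coincides with the paper's Cases 1--2. The case $n>\lambda k$, however, fails.

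Null candidates are approved by nobody, so nothing forces them into a committee. Since $|W'|=k+d$, you get $|W'\cap C|\ge k$, possibly with strict inequality. Your backward step ``extend $W'\cap C$ to $k$ candidates'' therefore runs the wrong way: you would have to delete candidates, which can break AJR. Concretely, take $\lambda=3$, $k=1$, with voters $v_1,v_2,v_3$ approving $\{c_1\}$ and $v_4,v_5,v_6$ approving $\{c_2\}$. This is a no-instance of $\lambda$-AJR-E. Yet your construction gives $d=1$, $r=0$, $k'=2$, and $\{c_1,c_2\}$ is AJR in $\mathbb{E}'$. Forced seats with $\lambda$ private singleton voters do not help either, since each adds exactly $\lambda$ voters and leaves $n-\lambda k$ unchanged.

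The missing idea is a padding seat that is \emph{forced} yet costs \emph{fewer than $\lambda$ voters on average}. The paper's construction is as follows. It takes $d=n-\lambda k$ and $\xi=(\lambda-1)d$ new candidates, split into $d$ blocks $B_t$ of size $\lambda-1$. Each $y_j$ gets its own $\lambda-1$ voters, one shared voter $q_t$ approves all of $B_t$, and $(\lambda-3)d$ empty voters are added. Each block thus supplies $\lambda(\lambda-1)-1$ voters for $\lambda-1$ seats, which exactly absorbs the excess, so $|N'|=\lambda(k+\xi)$. Moreover, $X_j\cup\{q_{b(j)}\}$ is a $1$-cohesive group of size $\lambda$ whose average satisfaction is at most $(\lambda-2)/\lambda<1$ if $y_j\notin W'$. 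Hence AJR forces $Y\subseteq W'$ and $|W'\cap C|=k$.

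Your concern about levels $k<\ell\le k'$ is legitimate, and the paper's Case 3 does not address it. Its completeness step applies $\lambda$-AJR of $W$ to every old $\ell$-cohesive group of $E'$, whereas $\lambda$-AJR only constrains $\ell\le k$. For example, take $\lambda=3$, $k=1$ and six voters all approving $\{c_1,c_2\}$. The committee $\{c_1\}$ is $\lambda$-AJR. But after the paper's padding ($d=3$, $k'=7$, $Y$ forced), the six old voters form a $2$-cohesive group in which every voter has at most one selected approved candidate. Because $k'\ge n/\lambda>k$ whenever $n>\lambda k$, any padding that leaves old ballots untouched and forces $|W'\cap C|=k$ runs into this obstacle.

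Your remedy is to restrict to instances produced by the chain. There, every $(\lambda,\ell)$-group has $\ell$ at most the number of variables of $\varphi$ (Sect.~\ref{sec:AJR_sattobase}), Lemma~\ref{lem:lamAJRBtoLamAJR} only adds $\ell=1$ groups, and the number of variables is at most $k$. This is sound and suffices for Theorem~\ref{thm:AJR}, but only together with the forced-padding gadget above, not with null candidates. The lemma for arbitrary instances then follows only a posteriori: $\lambda$-AJR-E lies in $\Sigma_2^p$, and AJR-E is $\Sigma_2^p$-hard.
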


\begin{proof}
Let $(E,\lambda)$ be an instance of \lamAJRExist, where $E=(N,C,A,k)$ and $\lambda\in\mathbb{Z}_{>2}$, and let $n=|N|$.

\medskip
\noindent\textbf{Case 1: $\lambda k=n$.}
Then $\lambda=n/k$, so \lamAJR coincides with AJR. Hence the claim is immediate.

\medskip
\noindent\textbf{Case 2: $\lambda k>n$.}
Let $\varepsilon=\lambda k-n$, and define an \AJRE instance $E'=(N',C',A',k')$ by setting
$N'=N\cup Z$ with $|Z|=\varepsilon$, $C'=C$, and $k'=k$, where every $z\in Z$ approves no candidates, i.e., $A'_i=A_i$ for $i\in N$ and $A'_z=\emptyset$ for $z\in Z$.

Then $|N'|=n+\varepsilon=n+(\lambda k-n)=\lambda k=\lambda k'$, so the Hare quota in $E'$ is exactly $\lambda$.
Since the added voters approve nothing, they belong to no cohesive group. Thus a set $N_0\subseteq N$ is a $(\lambda,\ell)$-group in $E$ if and only if it is an $\ell$-cohesive group in $E'$. Therefore, a committee of size $k$ satisfies \lamAJR in $E$ if and only if it satisfies AJR in $E'$.

\medskip
\noindent\textbf{Case 3: $\lambda k<n$.}
Let $d=n-\lambda k>0$ and $\xi=(\lambda-1)d$. We construct an \AJRE instance $E'=(N',C',A',k')$.

Add $\xi$ new candidates $Y=\{y_1,\dots,y_\xi\}$, partitioned into $d$ pairwise disjoint blocks $B_1,\dots,B_d$ of size $\lambda-1$ each. For every $j\in[\xi]$, add a set $X_j=\{x_{j,1},\dots,x_{j,\lambda-1}\}$ of $\lambda-1$ voters, all approving only $y_j$. For every $t\in[d]$, add one voter $q_t$ approving exactly the block $B_t$. If $\lambda>3$, add $(\lambda-3)d$ dummy voters $R$, each approving nothing; if $\lambda=3$, let $R=\emptyset$.

Set
\[
N'=N\cup \bigcup_{j=1}^{\xi} X_j \cup \{q_1,\dots,q_d\}\cup R,\qquad
C'=C\cup Y,\qquad
k'=k+\xi.
\]
The approval sets are given by $A'_i=A_i$ for $i\in N$, $A'_u=\{y_j\}$ for $u\in X_j$, $A'_{q_t}=B_t$ for $t\in[d]$, and $A'_r=\emptyset$ for $r\in R$.

We first compute the Hare quota in $E'$. The number of added voters is
$\xi(\lambda-1)+d+(\lambda-3)d=\xi(\lambda-1)+(\lambda-2)d$.
Using $\xi=(\lambda-1)d$ and $n=\lambda k+d$, we obtain
\begin{align*}
|N'|
&=n+\xi(\lambda-1)+(\lambda-2)d \\
&=(\lambda k+d)+(\lambda-1)^2d+(\lambda-2)d \\
&=\lambda k+\lambda(\lambda-1)d \\
&=\lambda\bigl(k+(\lambda-1)d\bigr)
=\lambda(k+\xi)
=\lambda k'.
\end{align*}
Hence the Hare quota in $E'$ is exactly $\lambda$.

Next we characterize the cohesive groups containing new voters. Old voters approve only candidates in $C$, while new non-dummy voters approve only candidates in $Y$, so no cohesive group contains both old and new voters. Voters in $R$ approve nothing and hence belong to no cohesive group. Now for each $j\in[\xi]$, let $b(j)\in[d]$ be the unique index such that $y_j\in B_{b(j)}$. Among the new voters, the only voters approving $y_j$ are the $\lambda-1$ voters in $X_j$ and the single voter $q_{b(j)}$. Therefore any cohesive group containing a new voter is contained in
\[
G_j:=X_j\cup\{q_{b(j)}\}.
\]
Moreover, $G_j$ has common approved candidate set $\{y_j\}$ and size $|G_j|=(\lambda-1)+1=\lambda$, so $G_j$ is $1$-cohesive. Since the common intersection is always a singleton, no cohesive group containing new voters is $\ell$-cohesive for any $\ell\ge 2$.

We now prove the equivalence.

\medskip
\noindent\emph{Soundness.}
Let $W'\subseteq C'$ be a committee of size $k'$ satisfying AJR in $E'$. We first show that $Y\subseteq W'$. Suppose $y_j\notin W'$ for some $j\in[\xi]$. Then $G_j$ is a $1$-cohesive group. Every voter in $X_j$ has satisfaction $0$, while $q_{b(j)}$ approves the $\lambda-1$ candidates in $B_{b(j)}$, one of which is $y_j$. Hence if $y_j\notin W'$, then $q_{b(j)}$ approves at most $\lambda-2$ members of $W'$, and so
\[
S_{W'}(G_j)\le \lambda-2
\qquad\text{and}\qquad
\frac{S_{W'}(G_j)}{|G_j|}\le \frac{\lambda-2}{\lambda}<1,
\]
contradicting AJR. Thus $Y\subseteq W'$.

Now let $W=W'\cap C$. Since $|W'|=k'=k+\xi$ and $|Y|=\xi$, we have $|W|=k$.

We claim that $W$ satisfies \lamAJR in $E$. Let $N_0\subseteq N$ be any $(\lambda,\ell)$-group in $E$. Since $|N'|/k'=\lambda$, the set $N_0$ is an $\ell$-cohesive group in $E'$. Also, old voters approve no candidates in $Y$, so $S_W(N_0)=S_{W'}(N_0)$. As $W'$ satisfies AJR in $E'$, we get
\[
\frac{S_W(N_0)}{|N_0|}
=
\frac{S_{W'}(N_0)}{|N_0|}
\ge \ell.
\]
Hence $W$ satisfies \lamAJR in $E$.

\medskip
\noindent\emph{Completeness.}
Let $W\subseteq C$ be a committee of size $k$ satisfying \lamAJR in $E$, and define $W'=W\cup Y$. Then $|W'|=k+\xi=k'$.

We show that $W'$ satisfies AJR in $E'$. First, if $N_0\subseteq N$ is an old $\ell$-cohesive group in $E'$, then $|N'|/k'=\lambda$ implies that $N_0$ is a $(\lambda,\ell)$-group in $E$. Hence
\[
\frac{S_{W'}(N_0)}{|N_0|}
=
\frac{S_W(N_0)}{|N_0|}
\ge \ell.
\]

Second, let $N_0$ be a cohesive group containing new voters. By the above characterization, $N_0\subseteq G_j$ for some $j\in[\xi]$, and $y_j\in W'$. Thus every voter in $N_0$ approves at least one selected candidate, so $\frac{S_{W'}(N_0)}{|N_0|}\ge 1$. Since no such group is $\ell$-cohesive for any $\ell\ge 2$, this is exactly what AJR requires.

Therefore every cohesive group in $E'$ satisfies AJR under $W'$, so $W'$ is an AJR committee in $E'$.

\medskip
Thus, in every case, we obtain in polynomial time an instance $E'$ of AJR-E such that $E$ has a \lamAJR committee if and only if $E'$ has an AJR committee. This proves the claim.
\end{proof}

\iffalse
Combining Theorems~\ref{thm:lamAJRBtoLamAJR} and~\ref{thm:lamAJRtoAJR}, we have the following corollary.
\begin{corollary}
    \lamAJRBExist is polynomial-time Karp reducible to \AJRE, i.e., \lamAJRBExist\karpreduct\AJRE.
\end{corollary}

Hence, by the $\SigmaTwo$-completeness of \lamAJRBExist, the polynomial-time Karp reduction to \AJRE and lemma~\ref{lem:AJRmem}, we proved that \AJRE is $\SigmaTwo$-complete.

\fi

By Lemma~\ref{lem:AJR}, \lamAJRBExist is $\SigmaTwo$-hard.
Lemma~\ref{lem:lamAJRBtoLamAJR} and Lemma~\ref{lem:lamAJRtoAJR} give a polynomial-time Karp reduction from \lamAJRBExist to \AJRE.
Together with Lemma~\ref{lem:AJRmem}, this proves Theorem~\ref{thm:AJR}.

\section{Complexity Result for SJR}
\label{sec:Main Result}
In this section, we prove the following result.
\begin{theorem}\label{thm:sjr}
    \SJRE is $\ThetaTwo$-complete.
\end{theorem}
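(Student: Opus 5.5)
The proof has two halves: membership via Theorem~\ref{thm:complexity}, and hardness via a chain of Karp reductions from \vcmember that mirrors the AJR section.

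\textbf{Membership.} I will exhibit a $\Poly_{\parallel+1}^\NP$ algorithm. The key structural observation is that SJR is a condition on \emph{individual voters}. A voter $i$ must receive satisfaction at least $\ell_i$, where $\ell_i$ is the largest $\ell$ such that $i$ belongs to some $\ell$-cohesive group; set $\ell_i=0$ if there is none. A committee $W$ satisfies SJR if and only if $S_W(i)\ge \ell_i$ for every voter $i$.

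For each voter $i$ and each $\ell\in[k]$, the question ``is $i$ contained in an $\ell$-cohesive group?'' is in NP. The certificate is the group together with $\ell$ common candidates. These $nk$ queries depend only on the input, so they can be issued non-adaptively. From the answers we compute the vector $(\ell_i)_{i\in N}$ in polynomial time.

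The final, adaptive query asks whether there exists $W$ with $|W|=k$ and $|W\cap A_i|\ge \ell_i$ for all $i$. This is an NP question whose instance depends on the earlier answers. Theorem~\ref{thm:complexity} then gives $\SJRE\in\ThetaTwo$.

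\textbf{Hardness.} I will reduce from \vcmember, introducing intermediate problems \lamSJRExist and \lamSJRBExist analogous to the AJR case. The reduction chain is
\[
\text{\vcmember}\leq_K\text{\lamSJRBExist}\leq_K\text{\lamSJRExist}\leq_K\text{\SJRE}.
\]
The last two steps should follow the constructions of Lemma~\ref{lem:lamAJRBtoLamAJR} and Lemma~\ref{lem:lamAJRtoAJR} almost verbatim, with the average-satisfaction argument replaced by an individual one:
\begin{itemize}
\item Base satisfaction is simulated by forced private candidates, each backed by $\lambda$ singleton voters.
\item The quota is fixed by adding empty voters when $\lambda k>n$, and gadget voters otherwise.
\end{itemize}
One point needs rechecking in the quota gadget. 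Under SJR, the voters $X_j$ in $G_j=X_j\cup\{q_{b(j)}\}$ individually demand $y_j$. This still forces $Y\subseteq W'$, and $q_t$ is then satisfied as well, so the gadget survives.

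\textbf{Core reduction.} The crux is encoding the two-level structure of \vcmember: whether $x$ lies in some minimum cover of $G=(V,E)$. This is ``$\NP$ and $\text{coNP}$'' combined through the cover number $\tau(G)$. My plan is the following.
\begin{itemize}
\item Introduce a candidate for each vertex.
\item For each edge $uv$, create a group of voters approving $u$ and $v$, sized as a $(\lambda,1)$-group. This forces every committee to contain a vertex cover.
\item Add padding candidates so that selecting $k$ candidates means choosing a cover of size at most some budget.
\item Use $(\lambda,\ell)$-groups with large $\ell$, built from voters with base satisfactions, to make the effective budget equal to $\tau(G)$. Below $\tau(G)$ no cover fits; above it, spare seats must go elsewhere.
\end{itemize}

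The budget itself cannot be written into the instance, since $\tau(G)$ is unknown to the reduction. The standard remedy is to build the instance for all candidate values $t=0,\ldots,|V|$ in parallel disjoint copies, weighted so that only the copy with $t=\tau(G)$ binds. Alternatively, one can use the Wagner-style reduction to require $x$ in a cover of the \emph{minimum} size.

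I expect this step to be the main obstacle. The requirement that $x$ be in a \emph{minimum} cover must be produced with a single existential choice of $W$ and no universal quantifier. If I take a direct route instead of reducing through a simpler $\ThetaTwo$-complete comparison problem, I will first try the following idea: use a voter approving only $x$ and the padding candidates, whose SJR demand becomes active exactly when covers of size $\tau(G)$ are forced.
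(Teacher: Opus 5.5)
Your membership argument is the paper's: thresholds $a_i$ from $nk$ non-adaptive NP queries, one final NP query for a committee meeting them, then Theorem~\ref{thm:complexity}. Your two outer reductions are also fine. Base satisfactions become forced private candidates, as in Lemma~\ref{lem:lamSJRBtoLamSJR}. Your check that the AJR quota gadget of Lemma~\ref{lem:lamAJRtoAJR} still forces $Y\subseteq W'$ under SJR is correct; the paper's Lemma~\ref{lem:lamSJRtoSJR} uses a slightly different gadget (one shared voter approving all new candidates), but both work.

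The genuine gap is the core step \vcmember${}\leq_K{}$\lamSJRBExist. You list strategies but give no construction: parallel copies ``weighted so that only $t=\tau(G)$ binds'', a Wagner-style detour, and a voter whose demand ``becomes active'' when small covers are forced. The first and last of these have no mechanism behind them. As your own membership proof shows, $W$ is feasible if and only if $|W\cap A_i|\ge a_i$, where the thresholds $a_i$ depend on the instance alone. Demands never switch on or off with $W$, and the reduction cannot compute $\tau(G)$ to set weights. So the information that $x$ lies in a \emph{minimum} cover can enter only through \emph{which} high-$\ell$ cohesive groups exist. The existence of a large cohesive group must itself encode an NP statement about $G$.

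That is the missing idea, and the paper supplies it as follows.
\begin{itemize}
\item Using the formula-verifier gadget (Lemma~\ref{lem:SJR-formula-gadget}, the $\chi=1$ case of the AJR set-system construction), it builds, for each copy $i$ and level $q$, a $(\lambda,n_0+q)$-group that exists exactly when $\overline G$ has a $q$-clique containing $v_i$.
\item That group contains a block with base satisfaction $n_0+q-1$ whose only selectable common candidate is a witness $w_q^i$. Hence $w_q^i$ is forced precisely when such a clique exists.
\item Separately, $\lambda m_v$ budget voters approve all of $C_G^i\cup C^i_{\mathrm{wit}}$, and forced dummy candidates fill every other seat. Together these pin exactly $m_v$ active seats per copy.
\item In a copy where $v_i$ lies in a maximum clique of $\overline G$, at least $\alpha(G)$ of those seats go to witnesses. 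The vertex candidates chosen there therefore number at most $m_v-\alpha(G)=\beta(G)$, while the edge and $x$-blocks force them to form a cover containing $x$.
\end{itemize}

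Note also that your edge gadget ($\lambda$ voters approving $\{u,v\}$) breaks once any other block approves all vertex candidates, as such a budget block does. The union is then a $(\lambda,2)$-group, which demands both endpoints. The paper avoids this by using $2\lambda$ edge voters approving $\{z^i,\mathfrak c_u^i,\mathfrak c_v^i\}$, with $z^i$ forced.
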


We will prove that \SJRE is in $\ThetaTwo$ in Sect.~\ref{sec:SJR_membership}.
The remaining sections aim to prove the $\ThetaTwo$-hardness of \SJRE.
To prove the $\ThetaTwo$-hardness, we introduce two intermediate problems \lamSJRExist and \lamSJRBExist, and we prove the following chain of Karp reductions:
\[
\text{\vcmember}\leq_K\text{\lamSJRBExist}\leq_K\text{\lamSJRExist}\leq_K\text{\SJRE}.
\]
We will prove $\text{\vcmember}\leq_K\text{\lamSJRBExist}$ in Sect.~\ref{sec:SJR_vcmembertobase} and $\text{\lamSJRBExist}\leq_K\text{\lamSJRExist}\leq_K\text{\SJRE}$ in Sect.~\ref{sec:SJR_basetoSJR}.
The purpose of defining \lamSJRExist and \lamSJRBExist is the same as before: the first part $\text{\vcmember}\leq_K\text{\lamSJRBExist}$ contains the essential ideas of the hardness of the problem, and the introduction of the two intermediate problems is mainly for the enhancement of readability.

\iffalse
We will formally define \lamSJRExist and \lamSJRBExist soon.
Before this, we describe some high-level ideas on these two decision problems and the intuitive reason why we introduce them.
\lamSJRExist is a variant of \SJRE where $\ell$-cohesive groups in the definition of SJR are replaced by $(\lambda,\ell)$-groups (Definition~\ref{def:lambdaellgroup}).
In other words, instead of setting the size of the $1$-cohesive group to $n/k$, we allow a customized size of $\lambda$ where $\lambda$ is specified as an input.
\lamSJRBExist is a variant where, in addition to the customized cohesive group sizes, each voter $i$ also has a customized ``base satisfaction'' $s(i)$ such that $i$ has a base satisfaction of $s(i)$ when she approves no candidate in the committee and $i$'s satisfaction is redefined to $S_W(i)+s(i)$ for committee $W$.
The first part of the reduction chain, $\text{\vcmember}\leq_K\text{\lamSJRBExist}$ contains the essential ideas on the hardness of our problem.
For the second part of the reduction chain $\text{\lamSJRBExist}\leq_K\text{\lamSJRExist}\leq_K\text{\SJRE}$, it is achieved by carefully adding dummy voters and candidates.
We use these two intermediate problems to increase the accessibility of our proof.
\fi

%\paragraph{Two Intermediate Problems.}
We first introduce the modified version of SJR, called $\lambda$-\emph{SJR}.
\begin{definition}[$\lambda$-Strong Justified Representation (\lamSJR)]
    A committee \(W\subseteq C\) of size \(k\) satisfies $\lambda$-\emph{Strong Justified Representation} if for every integer \(1\le \ell\le k\) there exists \emph{no} \((\lambda,\ell)\)-cohesive group \(N'\subseteq N\) such that $S_W(i)<\ell$ for some $i\in N'$.
    %\[
    %    \exists i\in N',\;S_W(i)<\ell.
    %\]
    Equivalently, each member of a \((\lambda,\ell)\)-cohesive group has satisfaction at least \(\ell\).
\end{definition}
Next, we define a variant of \lamSJRExist where each voter $i$ has a ``base satisfaction'' $s(i)$ (given as an input parameter) such that her satisfaction for committee $W$ is redefined to $S_W(i)+s(i)$.

\begin{definition}[\lamSJR Committee with base satisfactions (\lamSJRB)]
    Given an election $\mathbb{E}=(N,C,\mathcal{A},k)$, and a base satisfaction function $s:N\to\mathbb{Z}_{\geq0}$, 
    a committee $W\subseteq C$ with $|W|=k$ satisfies \(\lambda\)-\emph{Strong Justified Representation with base satisfactions} (\lamSJRB) 
    if for every integer $1\le \ell\le k$ there exists \emph{no} $(\lambda,\ell)$-group $N'\subseteq N$ such that $S_W(i)+s(i)< \ell$ for some $i\in N'$.
    %\[
    %    \exists i\in N',\;S_W(i)+s(i)< \ell.
    %\]
    Equivalently, the minimum satisfaction of each \((\lambda,\ell)\)-group is at least \(\ell\), when each voter $i$ has a base satisfaction $s(i)$.
\end{definition}

As in the AJR section, the auxiliary \(\lambda\)-SJR problems below require the input parameter \(\lambda\) to satisfy \(2<\lambda\le |N|\).
An input to \lamSJRBExist must additionally satisfy \(\sum_{i\in N}s(i)\le |\mathbb{E}|^{O(1)}\).
These input restrictions are part of the definitions below.

Then, we define the corresponding decision problems.
\begin{definition}[\lamSJR Committee Existence (\lamSJRExist)]
    Given an election $\mathbb{E}=(N,C,\mathcal{A},k)$ and an integer parameter \(2<\lambda\le |N|\),
    decide whether there exists a committee $W\subseteq C$ with $|W|=k$ that satisfies \lamSJR. 
    The output is “Yes” if such a committee exists, and “No” otherwise.
\end{definition}
\begin{definition}[\lamSJRB Committee Existence (\lamSJRBExist)]
    Given an election $\mathbb{E}=(N,C,\mathcal{A},k)$, a base satisfaction function $s:N\to\mathbb{Z}_{\geq0}$ with polynomially bounded total value, and an integer parameter \(2<\lambda\le |N|\),
    decide whether there exists a committee $W\subseteq C$ with $|W|=k$ that satisfies \lamSJRB. 
    The output is “Yes” if such a committee exists, and “No” otherwise.
\end{definition}

\subsection{Membership in $\ThetaTwo$}\label{sec:SJR_membership}
In this section, we prove that \SJRE is in $\ThetaTwo$.
It suffices to show that \SJRE is in $\Poly_{\parallel+1}^\NP$, and Theorem~\ref{thm:complexity} implies the containment in $\ThetaTwo$.
Let \(\mathbb{E}\) be an instance of \SJRE with \(\mathbb{E}=(N,C,\mathcal{A},k)\) and \(N=\{1,\dots,n\}\).
%A committee \(W\subseteq C\) of size \(k\) satisfies SJR if, for every voter \(i\in N\) and every \(\ell\ge1\),
%\[
%    \text{whenever } i \text{ belongs to some \(\ell\)-group, then } |A_i\cap W|\ge\ell.
%\] 
%We give a deterministic polynomial-time algorithm with \(O(\log n)\) calls to a \(\NP\)-oracle.
For each voter \(i\), define
\[
    a_i \;=\; \max\left(\{0\}\cup\left\{\ell : \exists\,N'\subseteq N,\;i\in N',\;
    |N'|\ge\tfrac{\ell\,n}{k},\;
    \left|\bigcap_{j\in N'}A_j\right|\ge\ell\right\}\right).
\]
That is, $a_i$ is the maximum $\ell$ such that voter $i$ is contained in some $\ell$-cohesive group.
Testing whether voter \(i\) belongs to an \(\ell\)-cohesive group is in \(\NP\): we can guess the set \(N'\) and verify in polynomial time
$
    |N'|\ge \tfrac{\ell\,n}{k}
    \text{ and }
    \Bigl|\bigcap_{j\in N'}A_j\Bigr|\ge \ell.
$
Since \(\ell\in\{1,\dots,k\}\), each $a_i$ can be computed by $k$ non-adaptive NP queries: just check if $i$ is in an $\ell$-cohesive group for each $\ell=1,\ldots,k$.
Therefore, we can compute the vector $(a_1,\ldots,a_n)$ by $nk$ non-adaptive NP queries.
%each \(a_i\) can be determined by parallel queries to a \(\NP\)-oracle, and repeating for all \(i\in N\) remains in \(\Poly^{\NP}_{||}\).

Once the thresholds $a_i$ are fixed, it suffices to test whether there exists a committee
$W\subseteq C$ of size $k$ such that $|A_i\cap W|\ge a_i$ for all $i\in N$.
This feasibility check is in $\NP$, with $W$ as the certificate.

Moreover, the constraints $|A_i\cap W|\ge a_i$ for every $i$ capture SJR in a pointwise way:
If for some $i$ we have $|A_i\cap W|<a_i$, then by the definition of $a_i$ there exists an $a_i$-cohesive group $S\ni i$. Since $i$ already falls short of $a_i$, $W$ violates SJR for the group $S$.
Conversely, any SJR violation yields such an $\ell$ and a voter $i$ in the violating group with
$|A_i\cap W|<\ell\le a_i$.
The procedure makes \(nk\) parallel calls to an \(\NP\)-oracle to compute the thresholds and one additional call to check the existence of $W$.
The remaining computations can be done in polynomial time.
Hence, the problem lies in $\Poly_{\parallel+1}^\NP$ and thus is in \(\ThetaTwo\).

\subsection{Reduction from \vcmember to \lamSJRBExist}
\label{sec:SJR_vcmembertobase}
Here, we describe some high-level ideas. Consider a \vcmember instance $(G=(V,E),x)$.
To simulate the vertex cover problem with $G=(V,E)$, we construct the voter set $N_G$ and the candidate sets $C_G$, where voters correspond to the edges and candidates correspond to vertices.
By a careful construction, we can make sure that we must select the candidates in $C_G$ corresponding to a vertex cover to ensure SJR.
Moreover, we add a voter set $N_x$ for the special vertex $x$ in the \vcmember problem that approves the candidate in $C_G$ representing $x$; this ensures the candidate representing $x$ must be selected.

To ensure that the vertex cover is a minimum one, we need some other constructions to make sure no more than $k'$ candidates in $C_G$ can be selected for $k'$ being the size of the minimum vertex cover.
We use this simple and well-known fact: $S$ is a vertex cover of $G=(V,E)$ if and only if $V\setminus S$ is a clique in the complement graph $\overline{G}=(V,E^c)$.
Therefore, the size of the minimum vertex cover in $G$ plus the size of the maximum clique in $\overline{G}$ is always $|V|$.
For the complement-clique requirement, we use certificate gadgets instead of directly constructing approvals as on the cover-forcing side.
The construction contains one \emph{copy} for every vertex \(v_i\).
Copy \(i\) is a complete local copy of the vertex-cover candidates, clique-witness candidates, and voters, and it is used to measure the largest clique in \(\overline G\) that contains \(v_i\).
Inside each copy, we introduce one certificate \emph{level} \(q\) for every possible clique size \(q\in[m_v]\).
At level \(q\) of copy \(i\), we build a formula whose satisfying assignments are exactly the \(q\)-cliques in \(\overline{G}\) that contain \(v_i\).
If such a clique exists, the corresponding formula gadget creates a cohesive group that forces the clique-witness candidate \(w_q^i\).
Thus, in copy \(i\), every feasible committee must reserve at least \(\tau_i\) clique-witness candidates, where \(\tau_i\) is the largest size of a clique in \(\overline{G}\) containing \(v_i\).
We need copies for all vertices because we do not know in advance which vertex lies in a maximum clique of \(\overline G\), and the soundness argument uses a copy \(i\) with \(\tau_i=\alpha(G)\).

To make sure we need to select exactly $|V|$ active candidates in each copy, we add another voter set $N_f$ in each copy that approves all active candidates in that copy.
We also add forced dummy candidates that are selected in every feasible committee.
These dummy candidates leave exactly $|V|$ active seats in each copy.

We use the following standard certificate gadget, which is the $\chi=1$ specialization of the canonical-assignment construction in Sect.~\ref{sec:AJR_sattobase}.
\begin{lemma}[Formula verifier gadget]\label{lem:SJR-formula-gadget}
Let $\varphi$ be a proper 3-CNF formula with $n_0>2$ variables and $m_0$ clauses,
where $m_0/n_0$ is an integer. Let
\[
    \gamma=7m_0+1,\qquad
    \lambda=\frac{m_0}{n_0}+(n_0-1)\gamma+1.
\]
Using only the assignment-candidate part and the voter sets \(U,Q,R\) from the construction in the proof of Lemma~\ref{lem:AJR}, specialized to $\chi=1$ with $n=n_0$ and $m=m_0$, one can construct
a candidate set \(C^{\mathrm{asgn}}(\varphi)\) and a voter set $T(\varphi)$ such that, for every \(t\in T(\varphi)\), the notation \(A_t^{\mathrm{asgn}}\subseteq C^{\mathrm{asgn}}(\varphi)\) denotes the approval set of \(t\) inside this standalone certificate gadget:
\begin{enumerate}
    \item If $\varphi$ is satisfiable, then there exists
    $T'\subseteq T(\varphi)$ with
    \[
        |T'|=\lambda n_0
        \quad\text{and}\quad
        \left|\bigcap_{t\in T'}A_t^{\rm asgn}\right|=n_0.
    \]
    \item If there exists $T'\subseteq T(\varphi)$ with
    \[
        |T'|\ge \lambda n_0
        \quad\text{and}\quad
        \left|\bigcap_{t\in T'}A_t^{\rm asgn}\right|\ge n_0,
    \]
    then $\varphi$ is satisfiable.
    \item For every $r>n_0$, there is no $T'\subseteq T(\varphi)$ with
    \[
        |T'|\ge \lambda r
        \quad\text{and}\quad
        \left|\bigcap_{t\in T'}A_t^{\rm asgn}\right|\ge r.
    \]
\end{enumerate}
\end{lemma}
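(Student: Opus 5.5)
We reuse the construction from the proof of Lemma~\ref{lem:AJR} almost verbatim, specialized to $\chi=1$, $n=n_0$, $m=m_0$, and with every existential variable treated as if it were absent. Concretely, we let $C^{\mathrm{asgn}}(\varphi)=A=\bigcup_{i=1}^{n_0}A_i$ with $A_i=\{a_i^T,a_i^F\}$. We let $T(\varphi)=U\cup Q\cup R$, where each $U$-voter (one per clause and per satisfying partial assignment $\sigma$) approves both candidates of $A_i$ for variables outside the clause and $a_i^{\sigma_i}$ for variables inside it. The set $Q$ consists of $n_0$ voters approving all of $A$. For each element $z$ of the set system of Lemma~\ref{lem:setsystem} (with $n=n_0\ge 3$), $R$ contains $\gamma$ voters approving $a_i^T$ iff $z\in F_i$ and $a_i^F$ iff $z\in\bar F_i$. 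No blockers, dummies, $P$, $D$ or base satisfactions are used. With $\chi=1$ we get $|U|=7m_0=\gamma-1$, $|U\cup Q|=\gamma-1+n_0$, and $\lambda n_0=m_0+n_0(n_0-1)\gamma+n_0$. This is the same arithmetic as before, and it is why $m_0/n_0$ is assumed integral: it makes $\lambda$ an integer.

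For item 1, we take a satisfying assignment $\tau$ and form the canonical group as in the ``conversely'' paragraph of the proof of Lemma~\ref{lem:AJR}. It consists of the $n_0(n_0-1)\gamma$ $R$-voters whose elements lie in the intersection of the sets selected by $\tau$ (Property~1 of Lemma~\ref{lem:setsystem}), the unique compatible $U$-voter of each clause, and all of $Q$. These voters number exactly $\lambda n_0$ and all approve the $n_0$ candidates $\{a_i^{\tau_i}\}$. The common set is exactly these $n_0$ candidates: $R$-voters come from an element $z$ lying in exactly one of $F_i,\bar F_i$ whenever the group is chosen canonically, so no extra candidate is shared.

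For items 2 and 3, we replay the ``formal proof for the validity'' argument. We take $T'$ with $|T'|\ge\lambda r$ and at least $r\ge n_0$ common candidates. Since $|U\cup Q|<\lambda n_0$, $T'$ meets $R$, so the common candidates all lie in $A$.

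Next we show that every $n_0$ of these common candidates form a canonical choice. Otherwise Property~2 of Lemma~\ref{lem:setsystem} allows at most $n_0(n_0-1)-1$ blocks $R_z$, which gives $|T'|\le n_0(n_0-1)\gamma+n_0-1<\lambda n_0$. Consequently there are at most $n_0$ common candidates, which proves item 3.

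For item 2, the common candidates define a full assignment $\tau$. We then bound the parts of $T'$ separately: $|T'\cap R|\le n_0(n_0-1)\gamma$, $|T'\cap Q|\le n_0$, and each clause contributes at most one compatible $U$-voter, and only if $\tau$ satisfies that clause. Since the bound $|T'|\ge\lambda n_0$ forces equality throughout, every clause contributes, so $\tau$ satisfies $\varphi$.

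The main obstacle is only bookkeeping: we must verify that removing the blocker, $P$ and $D$ parts does not invalidate the earlier inequalities. Those inequalities used only $|U\cup Q|=\gamma-1+n$ and the set-system bounds, so they carry over. The one point to check is that the strict inequality $n_0(n_0-1)\gamma+n_0-1<m_0+n_0(n_0-1)\gamma+n_0$ still holds with $\chi=1$, which it does.
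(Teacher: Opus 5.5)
Your approach is the paper's: restrict the construction from the proof of Lemma~\ref{lem:AJR} to the assignment candidates and $U\cup Q\cup R$ with $\chi=1$, then replay its characterization of large cohesive groups. Your arguments for items~2 and~3 are correct.

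The one faulty step is your reason for ``exactly $n_0$'' common candidates in item~1. You claim each $R$-voter of the canonical group comes from an element $z$ lying in exactly one of $F_i,\bar F_i$. This is false. By the table $(\star)$ in the proof of Lemma~\ref{lem:setsystem}, an element of $C^{jk}$ lies in both $F_i$ and $\bar F_i$ for every $i\notin\{j,k\}$. So every $R$-voter approves both candidates of $A_i$ for $n_0-2\ge 1$ indices $i$.

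The claim itself is true and can be repaired in either of two ways:
\begin{itemize}
\item For each $i$, choose $j\ne i$. The canonical group contains the $R$-voters of the unique element of $C^{ij}$ in the intersection. By $(\star)$, that element lies in exactly one of $F_i,\bar F_i$, namely the one selected by $\tau_i$. Hence $a_i^{\overline{\tau_i}}$ is not commonly approved. (The $U$-voters cannot do this job in general: a variable occurring in no clause, such as the fresh padding variables added in the SJR reduction, has both of its assignment candidates approved by every $U$-voter.)
\item Apply your own item-3 argument to the canonical group, which is effectively how the paper obtains ``exactly''. The group has $\lambda n_0$ voters and at least $n_0$ common candidates, so every $n_0$ of its common candidates form a canonical choice. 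Therefore it has at most $n_0$ common candidates.
\end{itemize}
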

\begin{proof}
Take exactly the assignment candidates and the voter sets $U,Q,R$ from
the construction in the proof of Lemma~\ref{lem:AJR}, specialized to $\chi=1$ with $n=n_0$ and $m=m_0$.
Omit the blocker candidates, the voters $P,D$, and all dummy candidates used there; equivalently, remove blocker approvals from the $U$-voters.
The first property is the canonical satisfying-assignment group in the proof
of Lemma~\ref{lem:AJR}. The second and third properties are precisely the
characterization of large cohesive groups in the $U\cup Q\cup R$ part of
that proof under this $\chi=1$ specialization: no $(\lambda,r)$-group exists for $r>n_0$, and every
$(\lambda,n_0)$-group corresponds to a truth assignment satisfying
$\varphi$.
\end{proof}

We now formally state the lemma and then prove it.
\begin{lemma}\label{lem:SJRmain}
    \vcmember is Karp reducible to \lamSJRBExist.
\end{lemma}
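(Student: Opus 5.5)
We follow the blueprint sketched before Lemma~\ref{lem:SJR-formula-gadget}. Given $(G=(V,E),x)$ with $V=\{v_1,\ldots,v_{m_v}\}$, we build $m_v$ disjoint \emph{copies}, plus a global block of forced dummy candidates, each approved by exactly $\lambda$ singleton voters. Throughout, $\lambda$ is chosen as a common value large enough for every formula gadget; formulas are padded with dummy clauses and variables so that all gadgets share the same $n_0$, $m_0$, hence the same $\lambda$. We also pick a global $L$ exceeding $n_0$ and every cohesion level used elsewhere.

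Copy $i$ contains the following candidates: vertex candidates $c^i_v$ for $v\in V$, and clique-witness candidates $w^i_1,\ldots,w^i_{m_v}$. Its voters are the following.
\begin{enumerate}
\item For each edge $uv$, a block $N^i_{uv}$ of $\lambda$ voters approving $\{c^i_u,c^i_v\}$. These form a $(\lambda,1)$-group, so under \lamSJRB some endpoint must be selected, i.e.\ the selected vertex candidates form a vertex cover.
\item In copy $i=1$ only (or in every copy), a block $N_x$ of $\lambda$ voters approving $c^i_x$, forcing $c^i_x$.
\item For each level $q\in[m_v]$, a formula gadget $T(\varphi^i_q)$ from Lemma~\ref{lem:SJR-formula-gadget}, where $\varphi^i_q$ encodes ``there is a $q$-clique in $\overline G$ containing $v_i$''. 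Its voters additionally approve $w^i_q$ and receive base satisfaction $n_0$. A satisfying-assignment group then has $n_0+1$ common candidates and size $\lambda n_0$. To make it a $(\lambda,n_0+1)$-group, we add $\lambda$ extra voters approving the whole gadget's assignment candidates and $w^i_q$ (the construction must ensure they do not create spurious groups; the $Q$ voters in the original construction play this role). Every member then needs satisfaction $n_0+1$, which forces $w^i_q\in W$. By property 2 of the lemma, no such group exists if $\varphi^i_q$ is unsatisfiable, and property 3 rules out groups of higher level.
\item A block $N^i_f$ of voters approving all $c^i_v$ and $w^i_q$, with base satisfaction set so that they require exactly $m_v$ selected active candidates in copy $i$. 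Their group size and commonality make them an $(\lambda,m_v)$-group: $L\lambda$ voters with base $m_v-\ldots$, adjusted.
\end{enumerate}
The committee size is the number of dummies plus $m_v\cdot m_v$, leaving exactly $m_v$ active seats per copy.

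Correctness proceeds in two directions. Any feasible $W$ selects, in each copy, a vertex cover $S_i$ containing $x$ together with at least $\tau_i$ witnesses (witness $w^i_q$ is forced for all $q\le\tau_i$). So $|S_i|\le m_v-\tau_i$. For $i$ with $\tau_i=\omega(\overline G)$, this gives $|S_i|\le m_v-\omega(\overline G)$, the minimum vertex cover size. Hence $S_i$ is a minimum vertex cover containing $x$, provided $x$ is forced in every copy. Conversely, if a minimum cover $S\ni x$ exists, select $S$ in every copy, plus witnesses $w^i_1,\ldots,w^i_{\tau_i}$, plus arbitrary further witnesses to fill $m_v$ seats; this fits since $\tau_i\le\omega(\overline G)=m_v-|S|$. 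We then check all groups.

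The main obstacle is the group characterization. We must show that the blocks do not combine into unintended cohesive groups: edge blocks with $N_f$, gadget voters across levels, or gadget voters with $N_f$ through shared witnesses. We also need the base satisfactions to neutralise low-level groups while still making the witness-forcing groups bite. I would handle this as in the AJR proof. First, show that any group of level at least $2$ cannot contain singleton or edge voters, by a size count against $\lambda$. Second, isolate the level-$(n_0+1)$ groups inside a single gadget using property 3 and disjointness of assignment candidates across gadgets. Third, tune the $N_f$ block so that its common set is exactly the active candidates of its copy.
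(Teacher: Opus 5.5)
Your overall architecture matches the paper's: one copy per vertex, clique-witness levels built from the formula-verifier gadget, a budget block that makes cover candidates and witnesses compete for $m_v$ active seats, and soundness via a copy with $\tau_i=\alpha(G)$. However, there is a genuine gap in the edge gadget, exactly at the point you flag as the main obstacle.

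The budget block $N^i_f$ must approve every vertex candidate $c^i_v$, since otherwise it cannot cap the number of selected cover candidates. To force $m_v$ active seats it must also contain at least $m_v\lambda\ge 2\lambda$ voters. Hence, for every edge $uv$, the set $N^i_{uv}\cup N^i_f$ has at least $(m_v+1)\lambda\ge 2\lambda$ voters whose common approval set is $\{c^i_u,c^i_v\}$, so it is a $(\lambda,2)$-group. Your edge voters have base satisfaction $0$ and approve only $c^i_u,c^i_v$. The $\lambda$-SJRB condition therefore forces \emph{both} endpoints of every edge into $W$. Your completeness committee (a minimum cover $S\ni x$ in each copy) then violates the condition whenever some edge has an endpoint outside $S$, which already happens for a triangle, so the reduction is incorrect as stated.

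Your proposed remedy was to show, by a size count against $\lambda$, that no group of level $\ge 2$ contains edge voters. That fails precisely because the budget block is large. Giving edge voters base satisfaction $1$ with blocks of size $\lambda$ does not help either, since the $(\lambda,1)$ edge constraint then becomes vacuous.

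The missing idea is to make the intended edge constraint itself a level-$2$ constraint, so that it tolerates being merged with the budget voters. The paper uses $2\lambda$ edge voters approving $\{z^i,\mathfrak c^i_u,\mathfrak c^i_v\}$, where $z^i$ is a separately forced anchor candidate (forced by $\langle\eta^i\rangle$). Satisfaction $\ge 2$ then means ``$z^i$ plus at least one endpoint.'' One checks that every group containing an edge voter has at most two common candidates: a single edge block, several edge blocks, edge blocks with $\langle\eta^i\rangle$, or edge blocks with budget voters. So no demand beyond level $2$ ever arises. Equivalently, you could use $2\lambda$ edge voters with base satisfaction $1$ approving $\{c^i_u,c^i_v\}$.

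With this repair, the rest of your plan goes through along the same lines as the paper, provided you also do two things. First, fix the budget block concretely, e.g.\ $m_v\lambda$ voters of base $0$. Second, give your extra $\lambda$ witness-gadget voters base satisfaction $n_0$; with a smaller base they would, alone or together with parts of $T(\varphi^i_q)$, force $w^i_q$ even when $\varphi^i_q$ is unsatisfiable. Your single-threshold witness gadget can then replace the paper's level-dependent thresholds $n_0+q-1$ with filler candidates.
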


\paragraph{The construction.}
Fix an instance $(G=(V,E),x)$ of \vcmember, where $|V|=m_v$ and $|E|=m_e$.
Without loss of generality, we may assume $m_v,m_e\ge 6$ and index $V=\{v_1,\dots,v_{m_v}\}$.
We construct an instance $(\mathbb{E},\lambda,s)$ of \lamSJRBExist as follows.
For each $i,q\in[m_v]$, let $\varphi_{i,q}$ be a 3-CNF formula expressing that the complement graph $\overline{G}$ has a clique of size $q$ containing $v_i$.
Such a formula is obtained in the standard way by using variables to encode the vertex placed in each of the $q$ positions, clauses enforcing that every position receives exactly one vertex, the chosen vertices are distinct, every chosen pair is adjacent in $\overline{G}$, and one chosen vertex is $v_i$.
By the standard Tseitin transformation, replace each formula by an equisatisfiable proper 3-CNF formula.
If any resulting formula has no clause, conjoin a satisfiable proper clause over three fresh variables.
Next, add unused fresh variables so that all formulas have the same number \(n_0>2\) of variables.
Finally, since every formula now has at least one proper clause, duplicate existing proper clauses so that all formulas have the same number \(m_0\) of clauses and \(n_0\mid m_0\).
This preprocessing preserves satisfiability and keeps every clause proper.
Set $\gamma=7m_0+1$ and
\[
    \lambda=\frac{m_0}{n_0}+(n_0-1)\gamma+1.
\]
Let $\xi=n_0+m_v$.
The election instance $\mathbb{E}=(N,C,\mathcal{A},k)$ is described below.

\begin{enumerate}
    \item \textbf{Voters.}
    For a symbol $y$ and a positive integer $q$, let \(\langle y\rangle_q\) denote a block of \(q\) identical voters; when \(q=\lambda\), we simply write \(\langle y\rangle\).
    For each $i\in[m_v]$, define the following pairwise disjoint voter blocks:
    \[
        N^i \;=\; N_G^i \,\cup\, N_f^i \,\cup\, N_x^i
        \,\cup\, N_D^i \,\cup\, N_z^i
        \,\cup\, N_A^i \,\cup\, N_L^i \,\cup\, N_T^i,
    \]
    where
    \[
    \begin{aligned}
        N_G^i &:= \{\langle g_e^i\rangle_{2\lambda} : e\in E\},\\
        N_f^i &:= \{\langle h_j^i\rangle : j\in[m_v]\},\\
        N_x^i &:= \{\langle x^i\rangle\},\\
        N_D^i &:= \{\langle\delta_j^i\rangle : j\in[m_v]\},\\
        N_z^i &:= \{\langle\eta^i\rangle\},\\
        N_A^i &:= \{\langle a_q^i\rangle : q\in[m_v]\},\\
        N_L^i &:= \{\langle\ell_q^i\rangle_{\lambda(q-1)} : q\in\{2,\ldots,m_v\}\},\\
        N_T^i &:= \bigcup_{q=1}^{m_v}T^{i,q}.
    \end{aligned}
    \]
    Here $T^{i,q}$ is a fresh copy of the verifier voters $T(\varphi_{i,q})$ from Lemma~\ref{lem:SJR-formula-gadget}.
    Finally, let $N_Y:=\{\langle\omega_t\rangle:t\in[\xi]\}$ and set $N:=N_Y\cup\bigcup_{i=1}^{m_v} N^i$.
    Since \(N\) contains the block \(\langle x^1\rangle\) of \(\lambda\) voters, the constructed instance satisfies \(\lambda\le |N|\).

    \item \textbf{Candidates.}
    For each $i\in[m_v]$, let
    \[
        C^i \;=\; \mathcal C^i \,\cup\, D^i \,\cup\, Z^i,
    \]
    where
    \[
        \mathcal C^i := C_G^i \cup C_{\mathrm{wit}}^i,
    \]
    \[
        C_G^i := \{\mathfrak{c}_v^i : v\in V\}
        \qquad\text{and}\qquad
        C_{\mathrm{wit}}^i := \{w_q^i : q\in[m_v]\},
    \]
    and
    \[
        D^i := \{d_j^i : j\in[m_v]\}
        \qquad\text{and}\qquad
        Z^i := \{z^i\}.
    \]
    For each $i,q\in[m_v]$, let \(C_{\mathrm{asgn}}^{i,q}\) be the assignment-candidate set \(C^{\mathrm{asgn}}(\varphi_{i,q})\) from Lemma~\ref{lem:SJR-formula-gadget}, and let
    \[
        F^{i,q}:=\{b_r^{i,q}:r\in[q-1]\}
    \]
    be a set of auxiliary filler candidates, with $F^{i,1}=\emptyset$.
    Let $Y:=\{y_t:t\in[\xi]\}$ be a set of forced padding candidates.
    Set
    \[
        C:=Y\cup \bigcup_{i=1}^{m_v} C^i
        \cup \bigcup_{i,q\in[m_v]}(C_{\mathrm{asgn}}^{i,q}\cup F^{i,q}).
    \]

    The roles of these voter and candidate families are as follows.
    The edge blocks \(N_G^i\), together with the vertex candidates \(C_G^i\), enforce a vertex cover of \(G\) inside copy \(i\), and \(N_x^i\) forces this cover to contain the distinguished vertex \(x\).
    The budget blocks \(N_f^i\) force exactly \(m_v\) selected candidates from the active set \(\mathcal C^i=C_G^i\cup C_{\mathrm{wit}}^i\).
    For each level \(q\), the verifier voters \(T^{i,q}\), the blocks in \(N_A^i\) and \(N_L^i\), and the candidates \(C_{\mathrm{asgn}}^{i,q}\cup F^{i,q}\cup\{w_q^i\}\) implement the clique-certificate module that can force the clique-witness candidate \(w_q^i\).
    The blocks \(N_D^i\), \(N_z^i\), and \(N_Y\) force padding candidates, leaving only the intended active seats.

    \item \textbf{Approval sets.}
    For each $i\in[m_v]$, define approvals for each voter block as follows:
    \begin{itemize}
        \item For every edge $e=(u,v)\in E$,
        \[
            A_{\langle g_e^i\rangle_{2\lambda}} \;=\; \{z^i,\mathfrak{c}_u^i,\mathfrak{c}_v^i\}.
        \]
        \item For every $j\in[m_v]$,
        \[
            A_{\langle h_j^i\rangle} \;=\; \mathcal C^i.
        \]
        \item For the distinguished block,
        \[
            A_{\langle x^i\rangle} \;=\; \{\mathfrak{c}_x^i\}.
        \]
        \item For the forced dummy blocks,
        \[
            A_{\langle\delta_j^i\rangle}=\{d_j^i\}\quad\text{for every }j\in[m_v],
            \qquad
            A_{\langle\eta^i\rangle}=\{z^i\}.
        \]
        \item For every certificate level $q\in[m_v]$,
        \[
            A_{\langle a_q^i\rangle}=C_{\mathrm{asgn}}^{i,q}\cup F^{i,q}\cup\{w_q^i\}.
        \]
        For $q\ge2$, the padding block has the same approval set:
        \[
            A_{\langle\ell_q^i\rangle_{\lambda(q-1)}}=C_{\mathrm{asgn}}^{i,q}\cup F^{i,q}\cup\{w_q^i\}.
        \]
        Finally, for each verifier voter $t\in T^{i,q}$, if \(A_t^{\mathrm{asgn}}\subseteq C_{\mathrm{asgn}}^{i,q}\) is its approval set in the formula gadget of Lemma~\ref{lem:SJR-formula-gadget}, set
        \[
            A_t=A_t^{\mathrm{asgn}}\cup F^{i,q}\cup\{w_q^i\}.
        \]
    \end{itemize}
    In addition, for every forced padding candidate $y_t\in Y$, set
    \[
        A_{\langle\omega_t\rangle}=\{y_t\}.
    \]
    The family of approval sets $\mathcal{A}$ is given by collecting the above sets over all voter blocks.

    \item \textbf{Committee size.}
    Set
    \[
        k=m_v^2+m_v(m_v+1)+\xi=2m_v^2+m_v+\xi.
    \]
\end{enumerate}

\noindent
Lastly, define the base satisfaction function $s$ by
\[\label{eqn:SJR-base-satisfaction}
        s(v)=\begin{cases}
        n_0+q-1, & \text{if } v\in \langle a_q^i\rangle \text{ for some } i,q\in[m_v],\\
        k, & \text{if } v\in T^{i,q} \text{ for some } i,q\in[m_v],\\
        k, & \text{if } v\in \langle\ell_q^i\rangle_{\lambda(q-1)} \text{ for some } i\in[m_v],\,q\ge2,\\
        0, & \text{otherwise.}
        \end{cases}
\]
The total value \(\sum_{v\in N}s(v)\) is polynomially bounded in the constructed instance size, since all voters are explicitly present and every base value is at most \(k\).

Our construction is illustrated schematically in Fig.~\ref{fig:SJRconstruction} by using an example with $|V|=3$.

\begin{figure}[ht!]
\begin{tikzpicture}[
        thick,
        >={Stealth[length=2mm]},
        vertex/.style={circle, draw=black, fill=white, inner sep=0pt, minimum size=0.55cm, font=\small\bfseries},
        voter/.style={rectangle, draw=black!65, fill=gray!5, rounded corners=2pt, minimum width=1.4cm, minimum height=0.55cm, font=\scriptsize, align=center},
        candidate/.style={circle, draw=black!80, fill=white, minimum size=0.62cm, font=\scriptsize, align=center},
        candset/.style={rectangle, draw=black!80, fill=white, rounded corners=2pt, minimum width=1.4cm, minimum height=0.55cm, font=\scriptsize, align=center},
        edge_app/.style={draw=blue!70, thick, ->},
        target_app/.style={draw=purple, thick, ->},
        budget_app/.style={draw=black!65, thick, dashed, ->},
        forced_app/.style={draw=gray!75, thick, ->},
        cert_t/.style={draw=red!80, thick, ->},
        cert_a/.style={draw=green!60!black, thick, ->},
        cert_ell/.style={draw=orange!90!black, thick, ->}
]

% ==========================================
% 1. Input Graph G
% ==========================================
\node[font=\bfseries] at (-2.0, 7.2) {Input graph $G$};
\node[vertex, label=left:1] (v1) at (-3.0, 5.5) {1}; 
\node[vertex, draw=purple, line width=1pt, label=above:\textcolor{purple}{$x=2$}] (v2) at (-2.0, 5.5) {2};
\node[vertex, label=right:3] (v3) at (-1.0, 5.5) {3};
\draw[thick] (v1) -- (v2) -- (v3);
\draw[dashed, red] (v1) to[bend right=50] (v3); 
\node[align=center, font=\scriptsize] at (-2.0, 4.0) {solid: $e \in E(G)$\\ \textcolor{red}{dashed: $e \in E(\overline{G})$}};

% ==========================================
% 2. Divider & Structure Layout
% ==========================================
\draw[gray!30, line width=1.5pt] (0.0, 7.5) -- (0.0, -4.5);
\node[font=\bfseries, anchor=west] at (0.3, 7.2) {One copy $i$ of the election};

% Section Labels (Anchored west, fixed at x=0.3)
\node[font=\scriptsize\bfseries, blue!80, align=left, anchor=west] at (0.3, 5.1) {Vertex Cover\\Forcing};
\node[font=\scriptsize\bfseries, black!70, align=left, anchor=west] at (0.3, 2.0) {Active-Budget\\Forcing};
\node[font=\scriptsize\bfseries, red!80, align=left, anchor=west] at (0.3, -0.5) {Certificate\\Level $q$};
\node[font=\scriptsize\bfseries, gray!90, align=left, anchor=west] at (0.3, -3.5) {Forced\\Padding};

% ==========================================
% 3. Voters (Shifted right to x=5.5 to completely clear the text)
% ==========================================
\node[voter] (veta) at (5.5, 6.6) {$\langle\eta^i\rangle$};
\node[voter, label=left:\textcolor{blue!80}{edge}] (vg12) at (5.5, 5.7) {$\langle g_{1,2}^i\rangle_{2\lambda}$};
\node[voter] (vg23) at (5.5, 4.7) {$\langle g_{2,3}^i\rangle_{2\lambda}$};
\node[voter, draw=purple, line width=1pt, label=left:\textcolor{purple}{target}] (vx) at (5.5, 3.5) {$\langle x^i\rangle$};
\node[voter, label=left:budget] (vh) at (5.5, 2.0) {$\langle h_j^i\rangle$};
\node[voter, label=left:\textcolor{red!75}{certificate}] (vt) at (5.5, 0.5) {$T^{i,q}$};
\node[voter] (va) at (5.5, -0.5) {$\langle a_q^i\rangle$};
\node[voter] (vell) at (5.5, -1.5) {$\langle\ell_q^i\rangle_{\lambda(q-1)}$};
\node[voter, label=left:forced] (vdelta) at (5.5, -3.0) {$\langle\delta_j^i\rangle$};
\node[voter] (vomega) at (5.5, -4.0) {$\langle\omega_t\rangle$};

% ==========================================
% 4. Candidates (Shifted appropriately to balance the diagram)
% ==========================================
\node[candidate, fill=gray!12] (zcand) at (11.0, 6.6) {$z^i$};

% C_G^i Candidates & Box
\node[candidate, fill=blue!10] (cg1) at (9.5, 5.1) {$\mathfrak c_1^i$};
\node[candidate, fill=blue!10, draw=purple, line width=1pt] (cg2) at (11.0, 5.1) {$\mathfrak c_2^i$};
\node[candidate, fill=blue!10] (cg3) at (12.5, 5.1) {$\mathfrak c_3^i$};
\node[draw=blue!80, dashed, rounded corners, fit=(cg1) (cg2) (cg3), inner sep=5pt] (cgbox) {};
% PLACED ON THE RIGHT
\node[font=\scriptsize\bfseries, blue!80, anchor=west] at ([xshift=4pt]cgbox.east) {$C_G^i$};

% C_{\mathrm{wit}}^i Candidates & Box
\node[candidate, fill=red!10] (p1) at (9.5, 2.0) {$w_1^i$};
\node[candidate, fill=red!10] (pq) at (11.0, 2.0) {$w_q^i$};
\node[candidate, fill=red!10] (pm) at (12.5, 2.0) {$w_{m_v}^i$};
\node[draw=red!80, dashed, rounded corners, fit=(p1) (pq) (pm), inner sep=5pt] (pbox) {};
% PLACED ON THE RIGHT
\node[font=\scriptsize\bfseries, red!80, anchor=west] at ([xshift=4pt]pbox.east) {$C_{\mathrm{wit}}^i$};

% Certificate and filler candidate sets
\node[candset, fill=red!8] (bset) at (11.0, 0.5) {$C_{\mathrm{asgn}}^{i,q}$};
\node[candset, fill=red!8] (fset) at (11.0, -0.5) {$F^{i,q}$};
\node[candset, fill=gray!12] (dcand) at (11.0, -3.0) {$D^i$};
\node[candset, fill=gray!12] (ycand) at (11.0, -4.0) {$Y$};

% ==========================================
% 5. Routed Approval Arrows (Collision-Free)
% ==========================================
% --- Forced top ---
\draw[forced_app] (veta) -- (zcand);

% --- Blue Edge Routes ---
\draw[edge_app] (vg12.east) to[out=15, in=180] (zcand.west);
\draw[edge_app] (vg12.east) to[out=-5, in=135] (cg1.north west);
\draw[edge_app] (vg12.east) to[out=-15, in=135] (cg2.north west);

\draw[edge_app] (vg23.east) to[out=25, in=190] (zcand.south west);
\draw[edge_app] (vg23.east) to[out=0, in=225] (cg2.south west);
\draw[edge_app] (vg23.east) to[out=-10, in=225] (cg3.south west);

% --- Target x ---
\draw[target_app] (vx.east) to[out=0, in=260] (cg2.south);

% --- Budget h ---
% Dashed line to C_{\mathrm{wit}}^i
\draw[budget_app] (vh.east) to[out=-10, in=180] node[pos=0.75, fill=white, inner sep=1pt, font=\scriptsize] {$C_{\mathrm{wit}}^i$} (pbox.west);
% Dashed line to C_G^i (Now exiting from the right and sweeping upwards to avoid the target node!)
\draw[budget_app] (vh.east) to[out=30, in=200] node[pos=0.8, fill=white, inner sep=1pt, font=\scriptsize] {$C_G^i$} (cgbox.south west);

% --- Color-Coded Certificate Routes ---
\draw[cert_t] (vt.east) to[out=25, in=230] (pq.south west);
\draw[cert_t] (vt.east) -- node[pos=0.6, fill=white, inner sep=1pt, font=\tiny, text=black] {$\subseteq$} (bset.west);
\draw[cert_t] (vt.east) to[out=-15, in=150] (fset.north west);

\draw[cert_a] (va.east) to[out=35, in=250] (pq.south);
\draw[cert_a] (va.east) to[out=15, in=210] (bset.south west);
\draw[cert_a] (va.east) -- (fset.west);

\draw[cert_ell] (vell.east) to[out=45, in=290] (pq.south east);
\draw[cert_ell] (vell.east) to[out=25, in=240] (bset.south);
\draw[cert_ell] (vell.east) to[out=10, in=210] (fset.south west);

% --- Forced bottom ---
\draw[forced_app] (vdelta) -- (dcand);
\draw[forced_app] (vomega) -- (ycand);

% Footer Notes
\node[anchor=west, font=\footnotesize] at (0.3, -5) {The copy is duplicated for every $i\in[m_v]$; certificate levels are duplicated for every $q\in[m_v]$.};
% \node[anchor=west, font=\scriptsize, text=gray] at (0.3, -5.4) {Note: Bottom three certificate modules are color-coded (red, dark green, orange) to clarify their respective routing.};

\end{tikzpicture}
\caption{Schematic construction for Lemma~\ref{lem:SJRmain}, showing one representative copy and one representative certificate level with $m_v=3$.}
\label{fig:SJRconstruction}
\end{figure}

\paragraph{High-level ideas.}
We now explain the main intuition behind the construction.
Each copy \(i\) contains two types of active candidates,
\[
\mathcal C^i=C_G^i\cup C_{\mathrm{wit}}^i.
\]
The candidates in \(C_G^i\) represent vertices of \(G\), while the candidates in \(C_{\mathrm{wit}}^i\) are clique-witness candidates forced by clique certificates in \(\overline G\).
The forced candidates \(Y\), \(D^i\), and \(Z^i\) consume all non-active committee positions.
The budget voters \(\langle h_j^i\rangle\) approve all candidates in \(\mathcal C^i\), and together form a \((\lambda,m_v)\)-group.
Consequently, every feasible committee selects exactly \(m_v\) active candidates from each copy.

The vertex-cover side uses the edge voters and the distinguished block.
The block \(\langle\eta^i\rangle\) forces \(z^i\).
For each edge \(e=(u,v)\in E\), the block \(\langle g_e^i\rangle_{2\lambda}\) approves
\[
\{z^i,\mathfrak c_u^i,\mathfrak c_v^i\}.
\]
Since this block is a \((\lambda,2)\)-group, once \(z^i\) is selected, SJR requires one of \(\mathfrak c_u^i\) and \(\mathfrak c_v^i\) to be selected as well.
The block \(\langle x^i\rangle\) forces \(\mathfrak c_x^i\).
Thus, for every feasible committee \(W\), the set
\[
S_i=\{v\in V:\mathfrak c_v^i\in W\}
\]
is a vertex cover of \(G\) containing \(x\).
If \(\pi\) denotes the minimum size of such a cover, then \(|W\cap C_G^i|\ge \pi\) in every copy.

The clique-certificate module uses the formula-verifier gadget.
For every copy \(i\) and level \(q\), the formula \(\varphi_{i,q}\) is satisfiable exactly when \(\overline G\) has a clique of size \(q\) containing \(v_i\).
If it is satisfiable, Lemma~\ref{lem:SJR-formula-gadget} gives a large verifier group inside \(T^{i,q}\) with \(n_0\) common assignment candidates in \(C_{\mathrm{asgn}}^{i,q}\).
Together with the block \(\langle a_q^i\rangle\) and the padding block \(\langle\ell_q^i\rangle_{\lambda(q-1)}\), this creates a \((\lambda,n_0+q)\)-group whose common approved candidates are the \(n_0\) assignment candidates, the \(q-1\) filler candidates in \(F^{i,q}\), and the clique-witness candidate \(w_q^i\).
No candidate in \(C_{\mathrm{asgn}}^{i,q}\cup F^{i,q}\) can be selected by the committee-size accounting, so the only selected candidate that can raise voters in \(\langle a_q^i\rangle\) from base satisfaction \(n_0+q-1\) to \(n_0+q\) is \(w_q^i\).
Therefore every feasible committee must select \(w_1^i,\ldots,w_{\tau_i}^i\), where
\[
\tau_i=\max\{|K|:K\subseteq V,\ K\text{ is a clique in }\overline G,\ v_i\in K\}.
\]

The soundness contradiction comes from making the vertex-cover requirement and the clique-certificate requirement compete for the same \(m_v\) active seats.
Choose \(k^*\) such that \(v_{k^*}\) belongs to a maximum clique of \(\overline G\).
Then \(\tau_{k^*}=\alpha(G)\), where \(\alpha(G)\) is the independence number of \(G\).
Since exactly \(m_v\) active candidates are selected in copy \(k^*\), and at least \(\alpha(G)\) of them are clique-witness candidates, at most \(m_v-\alpha(G)=\beta(G)\) vertex candidates can be selected there.
If no minimum vertex cover of \(G\) contains \(x\), then \(\pi>\beta(G)\), contradicting the vertex-cover forcing requirement \(|W\cap C_G^{k^*}|\ge\pi\).

Conversely, suppose that \(G\) has a minimum vertex cover \(\mathfrak C\) containing \(x\).
Let \(I=V\setminus\mathfrak C\), so \(|I|=\alpha(G)\).
In copy \(i\), the intended committee selects all forced candidates in \(D^i\cup Z^i\), the vertex candidates corresponding to \(\mathfrak C\), the clique-witness candidates \(w_1^i,\ldots,w_{\tau_i}^i\), and any \(\alpha(G)-\tau_i\) additional vertex candidates from \(I\).
The active part then has size
\[
|\mathfrak C|+(\alpha(G)-\tau_i)+\tau_i
=\beta(G)+\alpha(G)
=m_v.
\]
The selected cover candidates satisfy the edge and target groups, the budget voters receive all \(m_v\) active candidates, and every certificate group that can demand \(w_q^i\) has \(q\le\tau_i\), so the corresponding clique-witness candidate has already been selected.

\paragraph{Formal proof for the validity of the construction.}
Before proving the correctness of the reduction, we show the following propositions.

\begin{proposition}\label{prop:forced_dummy}
    Let $W \subseteq C$ be any committee of size~$k$.
    If $W$ satisfies \lamSJRB, then $Y\subseteq W$ and $D^i\cup Z^i\subseteq W$ for every $i\in[m_v]$.
\end{proposition}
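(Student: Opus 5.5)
The plan is to show that each padding candidate is the unique candidate approved by a block of exactly $\lambda$ voters carrying zero base satisfaction. Such a block is therefore a $(\lambda,1)$-group, and \lamSJRB forces the candidate into $W$.

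Fix $t\in[\xi]$ and consider the block $\langle\omega_t\rangle$. It has $\lambda$ voters, and each approves exactly $\{y_t\}$, so it is a $(\lambda,1)$-group with common approval set $\{y_t\}$. By the definition of $s$, these voters are in the ``otherwise'' case, so $s(v)=0$. If $y_t\notin W$, then each voter in the block has $S_W(v)+s(v)=0<1$, which violates \lamSJRB with $\ell=1$. Hence $Y\subseteq W$.

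The same argument applies inside each copy $i$. For each $j\in[m_v]$, the block $\langle\delta_j^i\rangle$ is a $(\lambda,1)$-group approving only $d_j^i$ with zero base satisfaction, so $d_j^i\in W$. Likewise the block $\langle\eta^i\rangle$ approves only $z^i$ with zero base satisfaction, so $z^i\in W$. Together these give $D^i\cup Z^i\subseteq W$.

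There is essentially no obstacle here. The only points to check against the definitions are that $1\le\ell=1\le k$ holds, that the group-size condition $|N'|\ge 1\cdot\lambda$ is met exactly, and that none of these blocks is listed among the voters with positive base satisfaction.
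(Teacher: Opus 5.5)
Your proposal is correct and follows the paper's proof essentially verbatim: each of the blocks $\langle\omega_t\rangle$, $\langle\delta_j^i\rangle$, and $\langle\eta^i\rangle$ is a $(\lambda,1)$-group approving only its single padding candidate with zero base satisfaction, so \lamSJRB with $\ell=1$ forces that candidate into $W$. Your extra check that $\lambda$ identical voters meet the size bound $|N'|\ge\lambda$ exactly and that these blocks fall under the ``otherwise'' case of $s$ is precisely what the argument needs.
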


\begin{proof}
    For each $i,j\in[m_v]$, the block \(\langle\delta_j^i\rangle\) is a $(\lambda,1)$-group whose only approved candidate is $d_j^i$ and whose voters have zero base satisfaction.
    Hence $d_j^i\in W$.
    The same argument applied to \(\langle\eta^i\rangle\) for each $i\in[m_v]$ gives $z^i\in W$.
    Finally, each block \(\langle\omega_t\rangle\) is a $(\lambda,1)$-group whose only approved candidate is $y_t$, so $y_t\in W$ for every $t\in[\xi]$.
\end{proof}

\begin{proposition}\label{prop:candidate_num}
    Let $W \subseteq C$ be a committee of size $k$.
    If $W$ satisfies \lamSJRB, then for each index $i$, $W$ contains exactly $m_v$ candidates from the active set \(\mathcal C^i=C_G^i\cup C_{\mathrm{wit}}^i\).
    Moreover, no candidate from any \(C_{\mathrm{asgn}}^{i,q}\cup F^{i,q}\) is selected.
\end{proposition}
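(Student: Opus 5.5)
The plan is a counting argument: lower-bound the number of selected active candidates in each copy, and compare with the committee size $k=2m_v^2+m_v+\xi$.

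First, by Proposition~\ref{prop:forced_dummy}, $W$ contains $Y$ ($\xi$ candidates) and $D^i\cup Z^i$ ($m_v+1$ candidates) for every $i$. Together these account for exactly $\xi+m_v(m_v+1)$ seats. That leaves exactly $m_v^2$ seats for all remaining candidates, namely $\bigcup_i\mathcal C^i$ together with $\bigcup_{i,q}(C_{\mathrm{asgn}}^{i,q}\cup F^{i,q})$.

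Second, for each copy $i$ I will show $|W\cap\mathcal C^i|\ge m_v$. The union $H^i=\bigcup_{j\in[m_v]}\langle h_j^i\rangle$ has $m_v\lambda$ voters, all approving $\mathcal C^i$. Since $|\mathcal C^i|=2m_v\ge m_v$, $H^i$ is a $(\lambda,m_v)$-group. Its voters have base satisfaction $0$, because the $h$-blocks fall into the ``otherwise'' case of $s$. So \lamSJRB requires each of them to have $|W\cap\mathcal C^i|\ge m_v$; here I must check $m_v\le k$, which is clear. The sets $\mathcal C^i$ are pairwise disjoint across $i$, and they are disjoint from the assignment and filler candidates. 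Summing over $i$ therefore gives at least $m_v^2$ active seats.

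Combining the two steps, equality must hold everywhere. Each $|W\cap\mathcal C^i|$ is exactly $m_v$, because the lower bounds sum to $m_v^2$, which is exactly the available budget. The leftover capacity is then zero, so no candidate from any $C_{\mathrm{asgn}}^{i,q}\cup F^{i,q}$ is selected.

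The only delicate point is the bookkeeping. I must confirm that the candidate families $Y$, $D^i$, $Z^i$, $\mathcal C^i$, $C^{i,q}_{\mathrm{asgn}}$, $F^{i,q}$ are pairwise disjoint as constructed (they are all fresh), and that the budget group indeed meets the size threshold $\ell\lambda$ with $\ell=m_v$. I do not expect a genuine obstacle beyond this.
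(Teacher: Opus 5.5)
Your proposal is correct and follows essentially the same counting argument as the paper. The budget voters $N_f^i$ form a $(\lambda,m_v)$-group with zero base satisfaction, forcing $|W\cap\mathcal C^i|\ge m_v$ in each copy; together with the $\xi+m_v(m_v+1)$ seats forced by Proposition~\ref{prop:forced_dummy}, this exhausts $k=m_v^2+m_v(m_v+1)+\xi$, so every bound is tight and no seat remains for assignment or filler candidates.
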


\begin{proof}
    Consider the set of voters $N_f^i$.
    Since all voters in $N_f^i$ share the approval set \(\mathcal C^i\) and $|N_f^i|=\lambda m_v$, they constitute a $(\lambda,m_v)$-group.
    These voters have zero base satisfaction, so every \lamSJRB committee $W$ must satisfy
    \[
        |W\cap \mathcal C^i|\ge m_v.
    \]
    By Proposition~\ref{prop:forced_dummy}, every feasible committee contains all candidates in
    \[
        Y\cup\bigcup_i(D^i\cup Z^i),
    \]
    whose total number is $\xi+m_v(m_v+1)$.
    Since $k=m_v^2+m_v(m_v+1)+\xi$, only $m_v^2$ seats remain after these forced candidates.
    The lower bounds $|W\cap \mathcal C^i|\ge m_v$ over the $m_v$ copies therefore have to be tight, and no seat remains for any candidate outside the forced and active sets.
\end{proof}

\begin{proposition}\label{prop:max_clique}
    Let $W \subseteq C$ be any committee of size~$k$.
    Then if $W$ satisfies \lamSJRB, for each index~$i$, it contains at least~$\tau_i$ members from~$C_{\mathrm{wit}}^i$, where
    \[
        \tau_i \;=\; \max\bigl\{\,|K| : K \subseteq V, \;K\text{ is a clique in }\overline{G},\;v_i\in K\bigr\}
    \]
    is the size of the largest clique in the complement graph~$\overline{G}$ that contains~$v_i$.
\end{proposition}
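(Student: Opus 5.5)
The plan is to show that for every $q\le\tau_i$ the clique-witness candidate $w_q^i$ must lie in $W$; since the $w_q^i$ are distinct, this gives $|W\cap C_{\mathrm{wit}}^i|\ge\tau_i$. Fix $i$ and $q\le\tau_i$. Then $\overline G$ contains a clique of size $q$ containing $v_i$: take a maximum clique through $v_i$ and any $q$-subset of it that still contains $v_i$. Hence $\varphi_{i,q}$ is satisfiable, since the Tseitin step, the padding with fresh variables and the clause duplication all preserve satisfiability.

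By property~1 of Lemma~\ref{lem:SJR-formula-gadget}, there is $T'\subseteq T^{i,q}$ with $|T'|=\lambda n_0$ whose assignment approvals share exactly $n_0$ common candidates in $C_{\mathrm{asgn}}^{i,q}$. Form
\[
H:=T'\cup\langle a_q^i\rangle\cup\langle\ell_q^i\rangle_{\lambda(q-1)},
\]
where for $q=1$ the last block is omitted. Its size is
\[
\lambda n_0+\lambda+\lambda(q-1)=\lambda(n_0+q).
\]
Every voter in $H$ approves $F^{i,q}\cup\{w_q^i\}$. The $a$- and $\ell$-blocks approve all of $C_{\mathrm{asgn}}^{i,q}$, so the common approval set of $H$ contains the $n_0$ common assignment candidates of $T'$, the $q-1$ fillers, and $w_q^i$. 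That is at least $n_0+q$ candidates, so $H$ is a $(\lambda,n_0+q)$-group. We also need $n_0+q\le k$ so that $\ell=n_0+q$ is a level checked by the definition. This holds because $k=2m_v^2+m_v+\xi$ and $\xi=n_0+m_v\ge n_0+q$.

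Now apply \lamSJRB to a voter $v\in\langle a_q^i\rangle\subseteq H$. Its base satisfaction is $s(v)=n_0+q-1$, so it needs $S_W(v)\ge1$. Its approval set is $C_{\mathrm{asgn}}^{i,q}\cup F^{i,q}\cup\{w_q^i\}$. By Proposition~\ref{prop:candidate_num}, $W$ selects no candidate from $C_{\mathrm{asgn}}^{i,q}\cup F^{i,q}$, so the only approved candidate that can be in $W$ is $w_q^i$. Therefore $w_q^i\in W$.

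The only delicate points are bookkeeping. One is the count showing $H$ reaches exactly $\lambda(n_0+q)$, including the degenerate case $q=1$ where $F^{i,1}=\emptyset$ and there is no $\ell$-block. The other is making sure the invocation of Proposition~\ref{prop:candidate_num} is legitimate; it is, since it holds for any \lamSJRB committee of size $k$. I do not need to worry that the $T'$ voters have base satisfaction $k$, because we only examine the $a$-voters. I expect no genuine obstacle beyond verifying that the common-approval count reaches $n_0+q$.
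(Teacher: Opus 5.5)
Your proposal is correct and follows the paper's proof essentially step for step. You build the same $(\lambda,n_0+q)$-group $T'\cup\langle a_q^i\rangle\cup\langle\ell_q^i\rangle_{\lambda(q-1)}$ and invoke Proposition~\ref{prop:candidate_num} to conclude that only $w_q^i$ can give the $a$-voters their missing unit of satisfaction. Your extra checks, that $n_0+q\le k$ and that a $q$-clique through $v_i$ exists for every $q\le\tau_i$, are small bookkeeping points the paper leaves implicit.
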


\begin{proof}
    Fix an index $i$.
    We show that $w_q^i\in W$ for every $q\le \tau_i$.
    Fix such a $q$.
    Since $q\le \tau_i$, the formula $\varphi_{i,q}$ is satisfiable.
    By Lemma~\ref{lem:SJR-formula-gadget}, there is a voter set $T'\subseteq T^{i,q}$ with $|T'|=\lambda n_0$ and exactly $n_0$ commonly approved candidates in \(C_{\mathrm{asgn}}^{i,q}\).
    Define
    \[
        N' \;=\; T'\cup \langle a_q^i\rangle\cup \langle\ell_q^i\rangle_{\lambda(q-1)},
    \]
    where the last block is omitted when $q=1$.
    Then $|N'|=\lambda n_0+\lambda+\lambda(q-1)=\lambda(n_0+q)$.
    The voters in $N'$ commonly approve the $n_0$ assignment candidates identified by $T'$, all $q-1$ filler candidates in $F^{i,q}$, and the clique-witness candidate $w_q^i$.
    Hence $N'$ is a $(\lambda,n_0+q)$-group.

    The voters in \(\langle a_q^i\rangle\) have base satisfaction $n_0+q-1$.
    By Proposition~\ref{prop:candidate_num}, no candidate in \(C_{\mathrm{asgn}}^{i,q}\cup F^{i,q}\) is selected.
    Thus the only selected candidate that can increase the satisfaction of voters in \(\langle a_q^i\rangle\) to $n_0+q$ is $w_q^i$.
    Therefore $w_q^i\in W$ for every $q\le\tau_i$, and so $|W\cap C_{\mathrm{wit}}^i|\ge\tau_i$.
\end{proof}

\begin{proposition}\label{prop:min_cover}
    Let $W \subseteq C$ be any committee of size~$k$.
    Then if $W$ satisfies \lamSJRB, for each index~$i$, it contains at least~$\pi$ members from~$C_G^i$, where
    \[
        \pi \;=\; \min\bigl\{\,|K| : K \subseteq V, \;K\text{ is a vertex cover of }G,\;x\in K\bigr\}
    \]
    is the size of the smallest vertex cover in the graph~$G$ that contains~$x$.
\end{proposition}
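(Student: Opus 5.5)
Fix an index $i$ and a \lamSJRB committee $W$ of size $k$, and set $S_i=\{v\in V:\mathfrak c_v^i\in W\}$. The plan is to show that $S_i$ is a vertex cover of $G$ containing $x$. Then $|W\cap C_G^i|=|S_i|\ge\pi$ follows immediately from the definition of $\pi$.

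First, $x\in S_i$. The block $\langle x^i\rangle$ has $\lambda$ voters whose common approval set is $\{\mathfrak c_x^i\}$. It is therefore a $(\lambda,1)$-group, and its voters have base satisfaction $0$ (they are in the ``otherwise'' case of $s$). Hence \lamSJRB requires each of them to have satisfaction at least $1$. Their only approved candidate is $\mathfrak c_x^i$, so $\mathfrak c_x^i\in W$.

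Next, fix an edge $e=(u,v)\in E$. The block $\langle g_e^i\rangle_{2\lambda}$ has $2\lambda$ identical voters approving $\{z^i,\mathfrak c_u^i,\mathfrak c_v^i\}$. Since $u\neq v$, the common approval set has size $3\ge 2$, so the block is a $(\lambda,2)$-group. These voters also have base satisfaction $0$, so each needs $|W\cap\{z^i,\mathfrak c_u^i,\mathfrak c_v^i\}|\ge 2$. By Proposition~\ref{prop:forced_dummy}, $z^i\in W$ already, which contributes exactly one. The remaining one must come from $\mathfrak c_u^i$ or $\mathfrak c_v^i$, so $u\in S_i$ or $v\in S_i$. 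Thus $S_i$ covers every edge.

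The argument is direct. The only points to check carefully are that these blocks really receive zero base satisfaction under $s$, and that the threshold $2\lambda$ for a $(\lambda,2)$-group is met. No step is a serious obstacle, and we do not need the blocks to be maximal groups, since \lamSJRB constrains every $(\lambda,\ell)$-group.
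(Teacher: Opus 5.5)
Your proof is correct and matches the paper's argument essentially step for step: you define $S_i$, use the zero-base $(\lambda,1)$-group $\langle x^i\rangle$ to force $\mathfrak c_x^i\in W$, and use the zero-base $(\lambda,2)$-group $\langle g_e^i\rangle_{2\lambda}$ together with $z^i\in W$ to force an endpoint of every edge. As a small aside, you (and the paper) do not actually need Proposition~\ref{prop:forced_dummy} here, since selecting at least two of the three candidates $z^i,\mathfrak c_u^i,\mathfrak c_v^i$ already forces one of $\mathfrak c_u^i,\mathfrak c_v^i$.
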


\begin{proof}
    Fix an index $i \in [m_v]$ and define $S_i := \{v \in V : \mathfrak{c}_v^i \in W\}$.
    The block \(\langle x^i\rangle\) is a $(\lambda,1)$-group with approval set $\{\mathfrak{c}_x^i\}$ and zero base satisfaction, so $\mathfrak{c}_x^i\in W$ and $x\in S_i$.

    Next, let $e=(u,v)\in E$.
    The block \(\langle g_e^i\rangle_{2\lambda}\) is a $(\lambda,2)$-group: it has $2\lambda$ voters and common approval set $\{z^i,\mathfrak{c}_u^i,\mathfrak{c}_v^i\}$.
    Since $z^i\in W$ by Proposition~\ref{prop:forced_dummy}, \lamSJRB requires every voter in \(\langle g_e^i\rangle_{2\lambda}\) to approve at least one of $\mathfrak{c}_u^i$ and $\mathfrak{c}_v^i$ in addition to $z^i$.
    Thus $u\in S_i$ or $v\in S_i$.
    Since this holds for every edge of $G$, the set $S_i$ is a vertex cover containing $x$.
    Therefore $|W\cap C_G^i|=|S_i|\ge \pi$.
\end{proof}

\noindent Now, we are ready to show the soundness and completeness of the reduction.

\paragraph{Soundness.}
Assume, for the sake of contradiction, that no minimum vertex cover of $G$ contains the distinguished vertex $x$. Equivalently, if $\beta(G)$ denotes the size of a minimum vertex cover in $G$, then every minimum cover necessarily omits $x$. 

By Proposition~\ref{prop:max_clique}, any \lamSJRB committee $W$ must satisfy:
\[
        |W \cap C_{\mathrm{wit}}^i| \ge \tau_i \quad \text{for each } i = 1, \dots, m_v,
\]
where $\tau_i$ represents the maximum clique size in the complement graph $\overline{G}$ that includes vertex $v_i$. Specifically, let $k_*$ be an index such that $\tau_{k_*} = \alpha(G)$, where $\alpha(G)$ is the independence number of $G$ (equivalent to the clique number of $\overline{G}$). According to Proposition~\ref{prop:candidate_num}, the committee satisfies $|W \cap \mathcal C^{k_*}| = m_v$. It follows that:
\[
        |W \cap C_{\mathrm{wit}}^{k_*}| \ge \alpha(G) \implies |W \cap C_G^{k_*}| \le m_v - \alpha(G) = \beta(G).
\]

Conversely, Proposition~\ref{prop:min_cover} ensures that:
\[
        |W \cap C_G^{k_*}| \ge \pi,
\]
where $\pi$ denotes the size of a minimum vertex cover of $G$ constrained to contain $x$. By our initial assumption, $\pi > \beta(G)$. This leads to the following chain of inequalities:
\[
        \beta(G) < \pi \le |W \cap C_G^{k_*}| \le \beta(G),
\]
which constitutes a clear contradiction. Consequently, there exists no \lamSJRB committee of size $k$ under the given conditions.

\paragraph{Completeness.}
Conversely, suppose there is a minimum vertex cover $\mathfrak{C}$ of $G$ containing the vertex $x$.
Let
\[
    \mathfrak{V} = V \setminus \mathfrak{C} = \{t_1,\dots,t_d\},
\]
with $d=m_v-\beta(G)$.
For each~$i$, let
\[
        \tau_i=\max\bigl\{\,|K| : K \subseteq V,\; K\text{ is a clique in }\overline{G},\; v_i\in K\bigr\}.
\]
Choose an arbitrary set $F_i\subseteq \mathfrak{V}$ of size $d-\tau_i$, which is possible because $\tau_i\le d=\alpha(G)$.
We construct the committee
\[
    W = Y\cup \bigcup_{i=1}^{m_v} W^i,
\]
where
\[
    W^i =
        D^i\cup Z^i
        \cup \{\mathfrak{c}_v^i : v\in \mathfrak{C}\cup F_i\}
        \cup \{w_q^i : q\in[\tau_i]\}.
\]
The active part of $W^i$ has size
\[
    |\mathfrak{C}|+|F_i|+\tau_i
    =\beta(G)+(d-\tau_i)+\tau_i
    =m_v,
\]
and $|D^i\cup Z^i|=m_v+1$.
Therefore $|W|=\xi+m_v(2m_v+1)=2m_v^2+m_v+\xi=k$.
It remains to show $W$ satisfies every group requirement. 
Since candidate sets are disjoint across different copies and the candidates in $Y$ are approved only by the corresponding forced padding blocks, every \((\lambda,\ell)\)-group with $\ell\ge1$ is contained in one copy or in one forced padding block.
Thus, it suffices to check each voter type:
\begin{itemize}
    \item Each forced dummy block \(\langle\delta_j^i\rangle\), \(\langle\eta^i\rangle\), and \(\langle\omega_t\rangle\) approves a unique candidate, and that candidate is included in $W$.
    \item Distinguished block \(\langle x^i\rangle\) has its unique approved candidate included in~$W^i$.
    \item Edge-voter blocks have cohesiveness at most~2. If a group contains voters only from one edge block, then its size is at most $2\lambda$; if it contains voters from several edge blocks or from other types, the common approval set has size at most~2. Since $z^i\in W^i$ and $\mathfrak{C}$ is a vertex cover, every edge voter approves $z^i$ and at least one selected endpoint candidate.
    \item Budget blocks \(\langle h_j^i\rangle\) approve only \(\mathcal C^i\). Any group containing a budget voter has cohesiveness at most $m_v$: if it contains only budget voters, its size is at most $\lambda m_v$; if it also contains edge, distinguished, or certificate voters, the common approval set has size at most~2,~1, or~1, respectively. Since $|W^i\cap \mathcal C^i|=m_v$, all such groups are satisfied.
    \item Verifier voters in $T^{i,q}$ and padding voters in \(\langle\ell_q^i\rangle_{\lambda(q-1)}\) have base satisfaction $k$, so they satisfy every possible requirement.
    \item It remains to consider a group $N'$ containing a voter from \(\langle a_q^i\rangle\).
    If $N'$ also contains a budget voter, its common approval set is contained in $\{w_q^i\}$, so its cohesiveness is at most~1 and the base satisfaction $n_0+q-1$ is enough.
    If $N'$ contains any edge, distinguished, forced dummy, or different certificate-level voter, then its common approval set is empty.
    Hence the only nontrivial case is
    \[
        N'\subseteq \langle a_q^i\rangle\cup \langle\ell_q^i\rangle_{\lambda(q-1)}\cup T^{i,q},
    \]
    again omitting the $\ell$-block when $q=1$.
    Let $N'$ be a $(\lambda,\ell)$-group.
    If $\ell\le n_0+q-1$, the base satisfaction of \(\langle a_q^i\rangle\) already suffices.
    Suppose $\ell\ge n_0+q$, and put $r=\ell-q$.
    At most $\lambda q$ voters in $N'$ can come from \(\langle a_q^i\rangle\cup \langle\ell_q^i\rangle_{\lambda(q-1)}\), so $|N'\cap T^{i,q}|\ge \lambda r$.
    Moreover, since the candidates outside \(C_{\mathrm{asgn}}^{i,q}\) that are commonly approved in this certificate module are exactly $F^{i,q}\cup\{w_q^i\}$, the voters in $N'\cap T^{i,q}$ have at least $r$ common candidates in \(C_{\mathrm{asgn}}^{i,q}\).
    If $r>n_0$, write $r=n_0+\rho$ for some integer $\rho\ge1$.
    Then $N'\cap T^{i,q}$ has size at least $\lambda(n_0+\rho)$ and has at least $n_0+\rho$ common candidates in \(C_{\mathrm{asgn}}^{i,q}\), contradicting the third bullet of Lemma~\ref{lem:SJR-formula-gadget}.
    Therefore $r=n_0$.
    By the second bullet of Lemma~\ref{lem:SJR-formula-gadget}, $\varphi_{i,q}$ is satisfiable, so $q\le\tau_i$.
    Thus $w_q^i\in W$, and voters in \(\langle a_q^i\rangle\) obtain satisfaction
    \[
        (n_0+q-1)+1=n_0+q=\ell.
    \]
\end{itemize}
Hence, every cohesive group is represented, so $W$ is a \lamSJRB committee of size~$k$.

This completes the proof of correctness of the reduction and thus establishes $\ThetaTwo$-hardness of \lamSJRBExist.

\subsection{Reduction from \lamSJRBExist to \SJRE}
\label{sec:SJR_basetoSJR}
%Finally, to prove Theorem~\ref{thm:sjr}, we reduce \lamSJRBExist to \SJRE.
Finally, the following two lemmas conclude Theorem~\ref{thm:sjr}.
The main ideas are as follows: we can add dummy voters and candidates to adjust the size of cohesive groups such that $\lambda=n/k$; 
for voters with positive base satisfactions, we add some dummy ``must select'' candidates that are approved by each such voter, and we add some dummy voters that only approve these candidates to make sure they must be selected.
These two transformations are analogous to the AJR transformations in Sect.~\ref{sec:AJR_basetoAJR}, with individual satisfaction replacing average satisfaction.

\begin{lemma}\label{lem:lamSJRBtoLamSJR}
    \lamSJRBExist is Karp reducible to \lamSJRExist.
\end{lemma}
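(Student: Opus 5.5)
I will mirror the proof of Lemma~\ref{lem:lamAJRBtoLamAJR}, with minimum satisfaction in place of average satisfaction. Let $(E,\lambda,s)$ be an instance of \lamSJRBExist with $E=(N,C,\mathcal{A},k)$. I build $(\mathbb{E},\lambda')$ with $\lambda'=\lambda$ and $k'=k+\sum_{i\in N}s(i)$. For every $i\in N$ with $s(i)>0$ I add fresh candidates $C_i=\{c^i_1,\ldots,c^i_{s(i)}\}$ and set $A'_i=A_i\cup C_i$. For each $j\in[s(i)]$ I add a block $N_i^j$ of $\lambda$ new voters, each approving only $c^i_j$. Since $\sum_i s(i)$ is polynomially bounded, this is a polynomial-size instance, and $\lambda'\le|N|\le|N'|$ keeps the input restriction.

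\textbf{Structure of groups.} Every $(\lambda',\ell)$-group containing a new voter has at most one common candidate, so it is a $(\lambda',1)$-group. If such a group contains a new voter of $N_i^j$, its common candidate (if any) is $c^i_j$. For groups of old voters only, the size is at least $\lambda>2$, so each contains at least two distinct old voters. The sets $C_i$ are pairwise disjoint, hence the common approval set is unchanged from $E$. Consequently the $(\lambda',\ell)$-groups inside $N$ are exactly the $(\lambda,\ell)$-groups of $E$.

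\textbf{Forward direction.} Given a \lamSJRB committee $W$, set $W'=W\cup\bigcup_i C_i$, which has size $k'$. Consider a group containing new voters; its common candidate is some $c^i_j\in W'$, so every member approves a selected candidate and satisfies the $\ell=1$ requirement. Consider an old group $G$ and a voter $i\in G$; then $S_{W'}(i)=S_W(i)+s(i)\ge\ell$, because $W$ is \lamSJRB.

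\textbf{Converse.} Let $W'$ be \lamSJR. Each $N_i^j$ is a $(\lambda,1)$-group approving only $c^i_j$, so every $c^i_j$ lies in $W'$. By counting, $W=W'\cap C$ has size exactly $k$. For any $(\lambda,\ell)$-group $G$ of $E$ and any $i\in G$, I get $S_W(i)+s(i)=S_{W'}(i)\ge\ell$.

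The only delicate point is the claim that old-only groups keep the same common approval set, and this is settled by $\lambda>2$ together with the disjointness of the $C_i$. Everything else is routine bookkeeping.
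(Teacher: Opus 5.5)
Your construction and both directions match the paper's proof almost line for line. However, the forward direction has a genuine gap, and the paper's argument has the same gap.

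Both \lamSJRB and \lamSJR constrain only $(\lambda,\ell)$-groups with $1\le\ell\le$ the committee size. You enlarge the committee from $k$ to $k'=k+\sum_i s(i)$. An old-only $(\lambda,\ell)$-group with $k<\ell\le k'$ imposes nothing on $W$ in $E$, but it is checked in $\mathbb{E}$. For such a group, your step ``$S_{W'}(i)=S_W(i)+s(i)\ge\ell$ because $W$ is \lamSJRB'' has no justification. The input restriction $2<\lambda\le|N|$ does not exclude these groups; they can occur whenever $\lambda(k+1)\le|N|$.

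The construction itself fails on the following instance. Let $N=\{1,\dots,7\}$, $C=\{a,b,c\}$, $k=1$, $\lambda=3$, $A_1=\cdots=A_6=\{a,b\}$, $A_7=\{c\}$, $s(7)=1$, and $s(i)=0$ otherwise.
\begin{itemize}
\item In $E$, the committee $W=\{a\}$ is \lamSJRB: every $(3,1)$-group lies inside $\{1,\dots,6\}$, and $\ell=2$ is never checked because $k=1$.
\item In your instance, $k'=2$, so $\{1,\dots,6\}$ becomes a checked $(3,2)$-group and forces $a,b\in W'$.
\item The block $N_7^1$ forces $c^7_1\in W'$.
\end{itemize}
No committee of size $2$ contains all three candidates, so a yes-instance is mapped to a no-instance. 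The converse direction is fine, since $k\le k'$.

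The construction is valid on inputs that have no $(\lambda,\ell)$-group with $\ell>k$. This covers the instances produced by Lemma~\ref{lem:SJRmain}: one checks that every cohesive group there has $\ell\le n_0+m_v<k$, so the chain behind Theorem~\ref{thm:sjr} survives.

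Alternatively, you could let \lamSJR and \lamSJRB quantify over all $\ell\ge1$. Your argument then goes through verbatim. This does not break Lemma~\ref{lem:lamSJRtoSJR}, because under the Hare quota $|N'|/k'$ no $\ell$-cohesive group has $\ell>k'$.

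To prove the lemma literally as stated, with the paper's definitions, you need an extra step. One route:
\begin{itemize}
\item \lamSJRBExist lies in $\ThetaTwo$, by the argument of Sect.~\ref{sec:SJR_membership} applied to $(\lambda,\ell)$-groups with thresholds $\min(a_i,k)-s(i)$, together with Theorem~\ref{thm:complexity}.
\item Hence \lamSJRBExist Karp-reduces to \vcmember.
\item \vcmember reduces to \lamSJRExist by composing Lemma~\ref{lem:SJRmain} with your construction on its outputs.
\end{itemize}
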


\begin{proof}
    Let $(E,\lambda,s)$ be an instance of \lamSJRBExist, where $E=(N,C,\mathcal{A},k)$. 
    This is the individual-satisfaction analogue of Lemma~\ref{lem:lamAJRBtoLamAJR}; we keep the details needed for the pointwise SJR condition.
    By the definition of \lamSJRBExist, $\lambda\le |N|$ and $\sum_{i\in N}s(i)$ is polynomially bounded in the input size; in particular, every value $s(i)$ is polynomially bounded.
    Hence the voters and candidates added below form a polynomial-size instance.
    We construct an instance $(\mathbb{E},\lambda')$ of \lamSJRExist, with $\mathbb{E}=(N',C',\mathcal{A}',k')$, as follows. 
    Set
    \[
    k' = k + \sum_{i\in N} s(i), \qquad \lambda' = \lambda.
    \]
    For each voter $i\in N$ with $s(i)>0$ and for each $j\in\{1,\ldots,s(i)\}$, create a set $N_i^j=\{n^{(i,j)}_1,\ldots,n^{(i,j)}_{\lambda}\}$ of $\lambda$ new voters. 
    Let
    \[
    N' = N \cup \bigcup_{\substack{i\in N,\,s(i)>0\\ j\in[1,s(i)]}} N_i^j.
    \]
    Since \(\lambda'=\lambda\le |N|\le |N'|\), the output satisfies the input restriction of \lamSJRExist.
    For each $i\in N$ with $s(i)>0$, introduce a candidate set $C_i=\{c^i_1,\ldots,c^i_{s(i)}\}$; if $s(i)=0$, let $C_i=\emptyset$. 
    Then set
    \[
    C' = C \cup \bigcup_{i\in N} C_i.
    \]

    Define the approval sets as follows.
    For each $i\in N$, let $A'_i = A_i \cup C_i$.
    For each $n^{(i,j)}_k\in N_i^j$, set $A'_{n^{(i,j)}_k}=\{c^i_j\}$. 

    By construction, every $(\lambda,\ell)$-group on $N$ in $E$ corresponds to a $(\lambda',\ell)$-group on $N$ in $\mathbb{E}$, and vice versa.
    Indeed, an old-only $(\lambda',\ell)$-group has size at least $\lambda>2$, so it contains at least two old voters; the added candidate sets $C_i$ are pairwise disjoint across old voters, and therefore they do not change the common approved candidate set of any old-only group.
    Any other $(\lambda',\ell)$-group in $\mathbb{E}$ must include some voters from $\bigcup_{i,j} N_i^j$, but since each such voter approves exactly one candidate, the common approved candidate set of these groups has size at most $1$. 
    Hence, they are all $(\lambda',1)$-groups. 

    Suppose $W$ is a committee of size $k$ that satisfies \lamSJRB for $(E,\lambda,s)$. 
    Define \(W' = W \cup \bigcup_{i\in N} C_i\).
    Then $W'$ is a committee of size $k'$ for $\mathbb{E}$. 
    Consider any $(\lambda',\ell)$-group $G'$ in $\mathbb{E}$. 
    If $G'$ contains voters not in $N$, then by construction we know that $\ell=1$ and each voter from $G'$ approves at least one candidate in $W'$.
    Therefore, the group is satisfied by $W'$.
    If $G'$ is contained entirely within $N$, then $G'$ is an old $(\lambda,\ell)$-group.
    Since $W$ satisfies \lamSJRB, for every voter $i\in G'$ we have $S_W(i)+s(i)\ge\ell$.
    The selected candidates in $C_i$ contribute exactly $s(i)$ to voter $i$, so $S_{W'}(i)\ge\ell$ for every voter $i\in G'$.
    Hence, $W'$ satisfies $\lambda'$-SJR for $(\mathbb{E},\lambda')$.

    Conversely, suppose $W'$ is a committee of size $k'$ that satisfies $\lambda'$-SJR for $(\mathbb{E},\lambda')$. 
    Each candidate $c^i_j\in C_i$ is approved by exactly $\lambda$ voters in $N_i^j$, so $W'$ must contain all of $\bigcup_{i\in N} C_i$. 
    Thus, $|W'\cap C|=k$. Let $W=W'\cap C$. 
    For any $(\lambda, \ell)$-group $G$ in $E$, the corresponding group in the augmented instance $\mathbb{E}$ is satisfied by $W'$.
    For each voter $i\in G$, all candidates in $C_i$ are selected and contribute exactly $s(i)$ to $S_{W'}(i)$.
    Therefore
    \[
            S_W(i)+s(i)=S_{W'}(i)\ge \ell.
    \]
    This confirms that the satisfaction threshold $\ell$ is attained for every voter in the group, thereby establishing that $W$ satisfies the \lamSJRB criterion.

    Hence, we have the equivalence between $(E,\lambda,s)$ and $(\mathbb{E},\lambda')$, completing the proof.
\end{proof}

\begin{lemma}\label{lem:lamSJRtoSJR}
    \lamSJRExist is Karp reducible to \SJRE.
\end{lemma}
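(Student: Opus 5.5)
We mirror the proof of Lemma~\ref{lem:lamAJRtoAJR}, replacing average satisfaction by individual satisfaction. Let $(E,\lambda)$ be an instance of \lamSJRExist with $E=(N,C,\mathcal A,k)$, $n=|N|$, and $\lambda\in\mathbb Z_{>2}$. We again split into three cases according to the sign of $\lambda k-n$, and in each case construct $E'$ whose Hare quota $|N'|/k'$ equals $\lambda$ exactly.

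\textbf{Case 1: $\lambda k=n$.} Here \lamSJR coincides with SJR, and there is nothing to do.

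\textbf{Case 2: $\lambda k>n$.} Add $\lambda k-n$ voters with empty ballots and keep $C$ and $k$ unchanged. These voters lie in no cohesive group. Hence the $\ell$-cohesive groups of $E'$ are exactly the $(\lambda,\ell)$-groups of $E$, and the equivalence is immediate.

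\textbf{Case 3: $\lambda k<n$.} Reuse the exact gadget from Lemma~\ref{lem:lamAJRtoAJR}. Set $d=n-\lambda k$ and $\xi=(\lambda-1)d$. Add candidates $Y=\{y_1,\dots,y_\xi\}$, split into blocks $B_1,\dots,B_d$ of size $\lambda-1$. For each $j$, add $\lambda-1$ voters $X_j$ approving only $y_j$. For each $t$, add a voter $q_t$ approving $B_t$. Add $(\lambda-3)d$ empty voters, and set $k'=k+\xi$. The computation $|N'|=\lambda k'$ carries over verbatim. So does the structural claim that any cohesive group containing new voters lies inside some $G_j=X_j\cup\{q_{b(j)}\}$ and is only $1$-cohesive: old and new voters share no candidates, and the common approval set of any new group is at most the singleton $\{y_j\}$.

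For the forward direction, take a \lamSJR committee $W$ of $E$ and set $W'=W\cup Y$. Old groups are checked pointwise exactly as in $E$, because old voters approve nothing in $Y$. Every voter in a group inside $G_j$ approves $y_j\in W'$, so it has satisfaction at least $1$.

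For the converse, first show $Y\subseteq W'$. If $y_j\notin W'$, then $G_j$ is a $1$-cohesive group of size $\lambda$ whose $X_j$-voters have satisfaction $0$, violating SJR. This is even easier than in the AJR case, since a single unhappy voter suffices. Then $|W'\cap C|=k$, and every $(\lambda,\ell)$-group $N_0\subseteq N$ is $\ell$-cohesive in $E'$ with $S_{W'}(i)=S_{W}(i)$ for all $i\in N_0$. So $W=W'\cap C$ is \lamSJR.

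The only step needing care is confirming that the new groups cannot be $\ell$-cohesive for $\ell\ge2$ and cannot mix with old voters; both follow from disjointness of approval supports, as shown before. We also note that the construction is polynomial because $\lambda\le|N|$.
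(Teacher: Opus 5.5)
\textbf{Overall.} Your three-case scheme is the paper's, and the route is essentially the same; the one difference is the Case~3 gadget. You reuse the AJR gadget ($d$ blocks $B_t$, one voter $q_t$ per block, $(\lambda-3)d$ empty voters). The paper uses a leaner SJR-specific gadget: $\xi=|N|-\lambda k+1$ new candidates, each with $\lambda-1$ exclusive voters, plus a single voter $q$ approving all of them. A single $q$ suffices because, under a pointwise axiom, one well-represented voter cannot mask the unrepresented $X^j$-voters. Both gadgets handle the new groups correctly.

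\textbf{The gap.} It is in the completeness direction of Case~3, where you assert that old groups are ``checked pointwise exactly as in $E$''. SJR in $E'$ quantifies over all $\ell\le k'=k+\xi$, whereas \lamSJR in $E$ only quantifies over $\ell\le k$. Since $Y\subseteq W'$ is forced and old voters approve only candidates in $C$, every old voter has $S_{W'}(i)\le k$. So consider a set of at least $\lambda(k+1)$ old voters with at least $k+1$ common approvals, which can exist once $n\ge\lambda(k+1)$. It is a $(k+1)$-cohesive group of $E'$ that makes $E'$ a no-instance, whatever $E$ is. Concretely, let $\lambda=3$, $k=1$, $C=\{a,b\}$, and take six voters all approving $\{a,b\}$. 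Then $W=\{a\}$ is \lamSJR, since only $\ell=1$ is checked. But in your $E'$ ($d=3$, $\xi=6$, $k'=7$, $|N'|=21$) the six old voters form a $2$-cohesive group, and no size-$7$ committee containing $Y$ contains both $a$ and $b$. For ordinary SJR the bound $\ell\le k$ is vacuous, since $|N'|\ge \ell n/k$ forces it; it only matters once $\lambda<n/k$.

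\textbf{The paper has the same omission, and how to repair it.} The paper's proof skips exactly this point; its gadget fails on the same instance with $\xi=4$, $k'=5$. So what is missing is one extra observation, not a different construction.

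One repair is to let \lamSJR and \lamSJRB quantify over all $\ell\ge1$. Then a $(\lambda,\ell)$-group of $E$ with $\ell>k$ is unsatisfiable in $E$. It is also unsatisfiable in $E'$, because it is $(k+1)$-cohesive there. If no such group exists, every old cohesive group of $E'$ has $\ell\le k$, and your argument goes through in both directions. This change also repairs Lemma~\ref{lem:lamSJRBtoLamSJR}, which has the same $k'>k$ issue. It does not affect Lemma~\ref{lem:SJRmain}, whose instances have cohesiveness at most $n_0+m_v\le k$.

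Alternatively, restrict both reductions to instances with no $(\lambda,k+1)$-group. This class contains the output of Lemma~\ref{lem:SJRmain}. It is preserved by Lemma~\ref{lem:lamSJRBtoLamSJR}, because that step keeps the common approval sets of old groups, makes groups containing new voters only $1$-cohesive, and only increases $k$.

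The same remark applies to Lemma~\ref{lem:lamAJRtoAJR}, which you are mirroring.
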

\begin{proof}
        
    Let \((\mathbb{E},\lambda)\) be an instance of \lamSJRExist with \(\mathbb{E}=(N,C,\mathcal{A},k)\) and \(N=\{1,\dots,n\}\).
    This is the quota-adjustment analogue of Lemma~\ref{lem:lamAJRtoAJR}, with individual satisfaction replacing average satisfaction.
    When $\lambda k=|N|$, the claim is trivial, since in this case \lamSJRExist coincides exactly with SJR. 
    We therefore focus on the remaining two cases. 

    If $\lambda k>|N|$, then there exists an integer $\varepsilon$ such that $\lambda k=|N|+\varepsilon$.
    We construct an instance $\mathbb{E}'=(N',C',\mathcal{A}',k')$ of \SJRE\ as follows. 
    Set $N'=N\cup N_0$, where $N_0=\{n_1,\ldots,n_\varepsilon\}$ consists of $\varepsilon$ fresh voters who approve no candidates. 
    Let $C'=C$, let $A'_i=A_i$ for all $i\in N$, and let $A'_i=\varnothing$ for all $i\in N_0$. 
    Finally, set $k'=k$. 
    The Hare quota in $\mathbb{E}'$ is then $|N'|/k'=\lambda$, and every $(\lambda,\ell)$-group in $\mathbb{E}$ corresponds to a unique $\ell$-cohesive group in $\mathbb{E}'$. 
    Thus, $\mathbb{E}'$ admits an SJR-satisfying committee $W$ if and only if $W$ is a \lamSJR committee for $\mathbb{E}$.

    If $\lambda k<|N|$, define
    \(
    \xi = |N| - \lambda k + 1
    \)
    Clearly, $\xi$ is a positive integer.
    We construct an instance $\mathbb{E}'=(N',C',\mathcal{A}',k')$ of \SJRE\ as follows. 
    Let
    \[
    N' = N \;\cup\; \bigcup_{i=1}^{\xi} X^i \;\cup\; \{q\},
    \]
    where each $X^i=\{x^i_1,\ldots,x^i_{\lambda-1}\}$ contains $\lambda-1$ new voters. 
    Extend the candidate set by $\xi$ new candidates $C'=C\cup \{y_1,\ldots,y_\xi\}$. 
    For approvals, keep $A'_i=A_i$ for all $i\in N$, set $A'_i=\{y_j\}$ for all $i\in X^j$, and set $A'_{q}=\{y_j:j\in\{1,\ldots,\xi\}\}$. 
    Finally, set $k'=k+\xi$. 

    The Hare quota in this construction is
    \begin{align*}
        \tfrac{|N'|}{k'}&=\tfrac{|N|+\xi(\lambda-1)+1}{k+\xi}\\
        &=\tfrac{\xi+\lambda k -1+\xi(\lambda-1)+1}{k+\xi}\\
        &=\lambda.
    \end{align*}

    Let $W$ be an SJR committee for $\mathbb{E}'$.
    Suppose $W$ does not contain all $y_1,\ldots,y_\xi$, then there exists some $y_j\notin W$ and the smallest satisfaction of \((\lambda,1)\)-group $X^j\cup \{q\}$ is 0,
    which contradicts the assumption that $W$ satisfies SJR.
    Hence, $W$ must contain all $y_1,\ldots,y_\xi$ and
    \(
    W^* = W - \{y_1,\ldots,y_\xi\}
    \)
    is a committee of size $k$ in $\mathbb{E}$.
    Since every old voter remains in $N'$, each $(\lambda,\ell)$-group in $\mathbb{E}$ corresponds to an $\ell$-cohesive group in $\mathbb{E}'$,
    and they are also not in any new cohesive group, \(W^*\) satisfies \lamSJR in $\mathbb{E}$.

    Conversely, if $\mathbb{E}$ admits a \lamSJR committee $W^*$, then we can construct an SJR committee for $\mathbb{E}'$ by setting
    \(
    W = W^* \cup \{y_1,\ldots,y_\xi\}.
    \)
    Every $(\lambda,\ell)$-group in $\mathbb{E}$ corresponds to an $\ell$-cohesive group in $\mathbb{E}'$, and these groups are satisfied by $W$ because $W^*$ satisfies \lamSJR in $\mathbb{E}$.
    It remains to consider cohesive groups containing new voters.
    Old voters approve no candidates in $\{y_1,\ldots,y_\xi\}$, while new voters approve only candidates in this set, so no cohesive group contains both old and new voters.
    Among the new voters, any cohesive group containing a voter from $X^j$ or containing $q$ with common candidate $y_j$ is contained in $X^j\cup\{q\}$.
    Its common approval set is contained in $\{y_j\}$, so it can only be $1$-cohesive.
    Since $y_j\in W$, every such group is satisfied.
    Thus, $W$ satisfies SJR in $\mathbb{E}'$.

    In both cases, we obtain a reduction from \lamSJRExist to \SJRE, which establishes correctness.
\end{proof}

%%
%% The next two lines define the bibliography style to be used, and
%% the bibliography file.
\bibliographystyle{alphaurl}
\bibliography{reference}

\newcommand{\etalchar}[1]{$^{#1}$}
\begin{thebibliography}{MSWW22}

\bibitem[ABC{\etalchar{+}}17]{aziz2017jr}
Haris Aziz, Markus Brill, Vincent Conitzer, Edith Elkind, Rupert Freeman, and Toby Walsh.
\newblock Justified representation in approval-based committee voting.
\newblock {\em Social Choice and Welfare}, 48(2):461--485, 2017.

\bibitem[AEH{\etalchar{+}}18]{aziz2018complexity}
Haris Aziz, Edith Elkind, Shenwei Huang, Martin Lackner, Luis~S{\'{a}}nchez Fern{\'{a}}ndez, and Piotr Skowron.
\newblock On the complexity of extended and proportional justified representation.
\newblock In Sheila~A. McIlraith and Kilian~Q. Weinberger, editors, {\em Proceedings of the Thirty-Second {AAAI} Conference on Artificial Intelligence, (AAAI-18), the 30th innovative Applications of Artificial Intelligence (IAAI-18), and the 8th {AAAI} Symposium on Educational Advances in Artificial Intelligence (EAAI-18), New Orleans, Louisiana, USA, February 2-7, 2018}, pages 902--909. {AAAI} Press, 2018.
\newblock URL: \url{https://doi.org/10.1609/aaai.v32i1.11478}, \href {https://doi.org/10.1609/AAAI.V32I1.11478} {\path{doi:10.1609/AAAI.V32I1.11478}}.

\bibitem[BFJL24]{brill2024phragmen}
Markus Brill, Rupert Freeman, Svante Janson, and Martin Lackner.
\newblock Phragm{\'e}n’s voting methods and justified representation.
\newblock {\em Mathematical programming}, 203(1):47--76, 2024.

\bibitem[BFKN19]{bredereck2019experimental}
Robert Bredereck, Piotr Faliszewski, Andrzej Kaczmarczyk, and Rolf Niedermeier.
\newblock An experimental view on committees providing justified representation.
\newblock In {\em IJCAI}, pages 109--115, 2019.

\bibitem[BH91]{buss1991truth}
Samuel~R Buss and Louise Hay.
\newblock On truth-table reducibility to {SAT}.
\newblock {\em Information and Computation}, 91(1):86--102, 1991.

\bibitem[BIMP25]{brill2025individual}
Markus Brill, Jonas Israel, Evi Micha, and Jannik Peters.
\newblock Individual representation in approval-based committee voting.
\newblock {\em Social Choice and Welfare}, 64(1):69--96, 2025.

\bibitem[BP23]{brill2023ejr+}
Markus Brill and Jannik Peters.
\newblock Robust and verifiable proportionality axioms for multiwinner voting.
\newblock In {\em Proceedings of the 24th {ACM} Conference on Economics and Computation, {EC}}, page 301. {ACM}, 2023.

\bibitem[Dro81]{droop1881methods}
Henry~Richmond Droop.
\newblock On methods of electing representatives.
\newblock {\em Journal of the Statistical Society of London}, 44(2):141--202, 1881.

\bibitem[EFI{\etalchar{+}}23]{elkind2023justifying}
Edith Elkind, Piotr Faliszewski, Ayumi Igarashi, Pasin Manurangsi, Ulrike Schmidt-Kraepelin, and Warut Suksompong.
\newblock Justifying groups in multiwinner approval voting.
\newblock {\em Theoretical Computer Science}, 969:114039, 2023.

\bibitem[FEL{\etalchar{+}}17]{fernandez2017pjr}
Luis~S{\'{a}}nchez Fern{\'{a}}ndez, Edith Elkind, Martin Lackner, Norberto~Fern{\'{a}}ndez Garc{\'{\i}}a, Jes{\'{u}}s Arias{-}Fisteus, Pablo Basanta{-}Val, and Piotr Skowron.
\newblock Proportional justified representation.
\newblock In {\em Proceedings of the Thirty-First {AAAI} Conference on Artificial Intelligence}, pages 670--676. {AAAI} Press, 2017.

\bibitem[Gil59]{gillies1959solutions}
Donald~B Gillies.
\newblock Solutions to general non-zero-sum games.
\newblock {\em Contributions to the Theory of Games}, 4(40):47--85, 1959.

\bibitem[Hem87]{hemachandra1987strong}
Lane~A Hemachandra.
\newblock The strong exponential hierarchy collapses.
\newblock In {\em Proceedings of the nineteenth annual ACM symposium on Theory of computing}, pages 110--122, 1987.

\bibitem[HSV05]{hemaspaandra2005complexity}
Edith Hemaspaandra, Holger Spakowski, and J{\"o}rg Vogel.
\newblock The complexity of kemeny elections.
\newblock {\em Theoretical Computer Science}, 349(3):382--391, 2005.

\bibitem[HTX{\etalchar{+}}26]{han2026likelihood}
Qishen Han, Biaoshuai Tao, Lirong Xia, Chengkai Zhang, and Houyu Zhou.
\newblock Likelihood of the existence of average justified representation.
\newblock In {\em Proceedings of the 2026 {ACM-SIAM} Symposium on Discrete Algorithms, {SODA}}, page forthcoming. {SIAM}, 2026.

\bibitem[JMW20]{JiangMW20}
Zhihao Jiang, Kamesh Munagala, and Kangning Wang.
\newblock Approximately stable committee selection.
\newblock In {\em Proceedings of the 52nd Annual {ACM} {SIGACT} Symposium on Theory of Computing, {STOC}}, pages 463--472. {ACM}, 2020.

\bibitem[Joh81]{johnson1981np}
David~S Johnson.
\newblock The {NP}-completeness column: an ongoing guide.
\newblock {\em Journal of Algorithms}, 2(4):393--405, 1981.

\bibitem[KLK25]{KalayciLK2025}
Yusuf~Hakan Kalayci, Jiasen Liu, and David Kempe.
\newblock Full proportional justified representation.
\newblock In {\em Proceedings of the 24th International Conference on Autonomous Agents and Multiagent Systems, {AAMAS}}, page 1070–1078, 2025.

\bibitem[LS23]{lackner2023multiwinner}
Martin Lackner and Piotr Skowron.
\newblock {\em Multi-Winner Voting with Approval Preferences}.
\newblock Springer Briefs in Intelligent Systems. Springer, 2023.

\bibitem[MMS23]{MavrovMS23}
Ivan{-}Aleksandar Mavrov, Kamesh Munagala, and Yiheng Shen.
\newblock Fair multiwinner elections with allocation constraints.
\newblock In {\em Proceedings of the 24th {ACM} Conference on Economics and Computation, {EC}}, pages 964--990. {ACM}, 2023.

\bibitem[MSWW22]{MunagalaSWW22}
Kamesh Munagala, Yiheng Shen, Kangning Wang, and Zhiyi Wang.
\newblock Approximate core for committee selection via multilinear extension and market clearing.
\newblock In {\em Proceedings of the 2022 {ACM-SIAM} Symposium on Discrete Algorithms, {SODA}}, pages 2229--2252. {SIAM}, 2022.

\bibitem[PPS20]{peters2020proportional}
Dominik Peters, Grzegorz Pierczynski, and Piotr Skowron.
\newblock Proportional participatory budgeting with cardinal utilities.
\newblock {\em arXiv preprint arXiv:2008.13276}, pages 2181--2188, 2020.

\bibitem[PPSS21]{PetersP0021}
Dominik Peters, Grzegorz Pierczynski, Nisarg Shah, and Piotr Skowron.
\newblock Market-based explanations of collective decisions.
\newblock In {\em Thirty-Fifth {AAAI} Conference on Artificial Intelligence}, pages 5656--5663. {AAAI} Press, 2021.

\bibitem[Puk17]{pukelsheim2017quota}
Friedrich Pukelsheim.
\newblock Quota methods of apportionment: Divide and rank.
\newblock In {\em Proportional Representation: Apportionment Methods and Their Applications}, pages 95--105. Springer, 2017.

\bibitem[SFJ{\etalchar{+}}22]{SzufaFJLSST22}
Stanislaw Szufa, Piotr Faliszewski, Lukasz Janeczko, Martin Lackner, Arkadii Slinko, Krzysztof Sornat, and Nimrod Talmon.
\newblock How to sample approval elections?
\newblock In {\em Proceedings of the Thirty-First International Joint Conference on Artificial Intelligence, {IJCAI}}, pages 496--502. ijcai.org, 2022.

\bibitem[Sko21]{skowron2021proportionality}
Piotr Skowron.
\newblock Proportionality degree of multiwinner rules.
\newblock In {\em Proceedings of the 22nd {ACM} Conference on Economics and Computation, {EC}}, pages 820--840. {ACM}, 2021.

\bibitem[SS69]{shapley1969market}
Lloyd~S Shapley and Martin Shubik.
\newblock On market games.
\newblock {\em Journal of Economic Theory}, 1(1):9--25, 1969.

\bibitem[TZZ25]{TaoZZ24}
Biaoshuai Tao, Chengkai Zhang, and Houyu Zhou.
\newblock The degree of (extended) justified representation and its optimization.
\newblock In {\em Proceedings of the 24th International Conference on Autonomous Agents and Multiagent Systems, {AAMAS}}, pages 2024--2032, 2025.

\bibitem[Xia25]{Xia2025linear}
Lirong Xia.
\newblock A linear theory of multi-winner voting.
\newblock {\em CoRR}, abs/2503.03082, 2025.
\newblock URL: \url{https://doi.org/10.48550/arXiv.2503.03082}.

\end{thebibliography}

%%
%% If your work has an appendix, this is the place to put it.
\newpage
\appendix
\section{Complexity Basics}
\label{sec:complexitybasics}
In this section, we review some basic concepts in complexity theory and define the complexity classes $\Sigma_2^p$ and $\Theta_2^p$. Readers familiar with these can skip this section.

Let $\Sigma$ be an \emph{alphabet-set}, which is typically set to binary $\Sigma=\{0,1\}$, and let $\Sigma^\ast$ be the set of strings of arbitrary length over the alphabet-set $\Sigma$. 
A \emph{language}, or a \emph{decision problem}, is a subset $L$ of $\Sigma^\ast$.
A yes instance of $L$ is a string $x$ with $x\in L$, and a no instance of $L$ is a string $x$ with $x\notin L$.
For example, for \SJRE, yes instances are strings that encode elections where SJR committees exist.
The \emph{complement} of a language $L$, denoted by $\bar{L}$, is defined by $\bar{L}=\{x\mid x\notin L\}$.
Given two languages $L$ and $L'$, we say that $L$ \emph{Karp-reduces to} $L'$, denoted by $L\leq_KL'$, if there is a polynomial time Turing machine that takes a string $x$ as input and outputs another string $x'$ such that $x\in L$ if and only if $x'\in L'$.
All reductions in this paper are Karp reductions, and we will simply say $L$ reduces to $L'$ from now on.
A \emph{complexity class} $X$ is a set of languages.
A language $L$ is $X$-hard if every language in $X$ reduces to $L$.
A language $L$ is $X$-complete if $L\in X$ and $L$ is $X$-hard.
From this definition of hardness, it is straightforward to see that a $Y$-hard language is also $X$-hard if $X\subseteq Y$.

The complexity class $\NP$ is the class of languages that can be decided by some nondeterministic polynomial-time Turing machine $M$.
Specifically, a language $L\in\NP$ if and only if there exists a polynomial-time Turing machine $M$ such that for any input $x$, we have $x\in L$ if and only if there exists a polynomial-length certificate $u$ such that $M(x,u)=1$.
To define the complexity classes $\Theta_2^p$ and $\Sigma_2^p$, we need the notion of \emph{NP oracles}.
Informally, a Turing machine $M$ with an NP oracle, denoted by $M^\NP$, is a Turing machine that can query whether $x\in L$ and get the answer (yes or no) in one step for any language $L\in\NP$.
We refer the readers to standard complexity textbooks for the precise definition of a Turing machine with an NP oracle.

\begin{definition}[$\Sigma_2^p$]\label{def:sigmatwo}
%The complexity class $\Sigma_2^p$ is the second level of the polynomial-time hierarchy.    
A language $L \subseteq \{0,1\}^*$ is in $\Sigma_2^p$ if there exist polynomials $p(\cdot)$, $q(\cdot)$ and a Turing machine $M(\cdot,\cdot,\cdot)$ such that $x\in L$ if and only if
\begin{itemize}
        \item there exists $y\in\{0,1\}^{p(|x|)}$ such that, for all $z \in \{0,1\}^{p(|x|)}$, $M$ accepts the input $(x,y,z)$.
\end{itemize}
%\[
%x \in L \if and only if \exists y \in \{0,1\}^{p(|x|)} \ \forall z \in \{0,1\}^{p(|x|)} \ R(x,y,z).
%\]
In addition, $M$ terminates in at most $q(|x|)$ steps.
\end{definition}

It is a well-known fact that $\Sigma_2^p$ can be equivalently defined as the set of languages in $\NP$ given an $\NP$ oracle.
That is, $\Sigma_2^p$ is the set of languages that can be \emph{verified} in polynomial time given an $\NP$ oracle.
We denote $\Sigma_2^p=\NP^\NP$.

With an $\NP$ oracle, if we consider languages that can be \emph{decided} in polynomial time instead, we get the complexity class $\Delta_2^p$, and we write $\Delta_2^p=\Poly^\NP$.
We will skip the formal definition of $\Delta_2^p$ as it is not further discussed in this paper.

In the definition of $\Delta_2^p$, we do not make any restrictions on the \emph{number} of queries to the NP oracle.
Of course, given that an NP query takes one unit of time and $\Delta_2^p$ requires a decision in polynomial time, we can make at most a polynomial number of queries.
If we are only allowed \emph{logarithmic} numbers of queries, we get the complexity class $\Theta_2^p$.

\begin{definition}[$\Theta_2^p$]\label{def:thetatwo}
    The class $\Theta_2^p$ consists of all languages $L$ for which there exists an oracle Turing machine $M^\NP$, a constant $c>0$, and a polynomial $p(\cdot)$
    such that for every input $x$,
    \begin{enumerate}
        \item $M^\NP(x)$ runs in time at most $p(|x|)$,
        \item $M^\NP$ makes at most $c\log |x|$ queries to the $\NP$ oracle, and
        \item $x\in L$ if and only if $M^\NP$ accepts $x$.
    \end{enumerate}
\end{definition}

Notice that the allowed $O(\log|x|)$ queries in the definition above can be \emph{adaptive}.
If we are only allowed \emph{non-adaptive} queries (i.e., all the queries must be made at the beginning, and the content of one query cannot depend on the result of another query), it is proved independently by Hemachandra~\cite{hemachandra1987strong} and Buss and Hay~\cite{buss1991truth} that $\Theta_2^p$ can be equivalently defined as the set of languages that can be decided with a polynomial number of non-adaptive queries.
For this reason, we denote $\Theta_2^p \;=\; \Poly^{\NP[\log n]}
        \;=\; \Poly^{\NP}_{\parallel}$.
It is obvious that a logarithmic number of adaptive queries can be simulated by a polynomial number of non-adaptive queries by exhausting all the paths in the decision tree, so $\Poly^{\NP[\log n]}\subseteq \Poly^{\NP}_{\parallel}$.
The other direction of containment is much less trivial.

It is easy to see the following relations among the complexity classes discussed above:
\[
\Poly\subseteq\NP\subseteq\Theta_2^p\subseteq\Delta_2^p\subseteq\Sigma_2^p,
\]
and it is widely believed that all the containments above are proper.

\section{Hardness of Committee Verification}
\label{sec:coNPhardness}

In this section, we show that the committee verification problem with both SJR and AJR is coNP-complete.
The proofs for the following two theorems use similar arguments as in Aziz et al.~\cite{aziz2018complexity}.

\begin{definition}[\textsc{Balanced Biclique}]
    Given a bipartite graph $G=(L,R,E)$ and an integer $b\le \min\{|L|,|R|\}$, decide whether there exist subsets $L'\subseteq L$ and $R'\subseteq R$ with $|L'|=|R'|=b$ such that every edge between $L'$ and $R'$ is present, i.e.,
    \[
        L'\times R'\subseteq E.
    \]
\end{definition}
The \textsc{Balanced Biclique} problem is NP-complete~\cite{johnson1981np}, and remains NP-complete when restricted to instances with $b\ge 3$: given any instance, add three universal vertices to each side and increase $b$ by three.

\begin{theorem}
    Given an instance $\mathbb{E}=(N,C,\mathcal{A},k)$ and a committee $W\subseteq C$ with $|W|=k$, it is coNP-complete to decide if $W$ satisfies SJR. 
\end{theorem}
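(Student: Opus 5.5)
Membership in coNP is immediate: a violation certificate consists of an integer $\ell\in[k]$, a group $N'\subseteq N$ and a voter $i\in N'$. A verifier checks in polynomial time that $|N'|\ge \ell n/k$, that $|\bigcap_{j\in N'}A_j|\ge\ell$, and that $S_W(i)<\ell$. Hence the complement problem (``$W$ violates SJR'') is in NP.

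For hardness we reduce \textsc{Balanced Biclique}, restricted to $b\ge 3$, to the complement problem, following the Aziz et al.\ template. Given $G=(L,R,E)$ and $b$, take one voter per vertex of $L$ and one candidate per vertex of $R$, with voter $u$ approving exactly the candidates adjacent to $u$. A $b\times b$ biclique is then exactly a set of $b$ voters with at least $b$ common approvals. We want such a set to be a $b$-cohesive group and to contain a voter with satisfaction below $b$. We also want no violation to occur when no biclique exists.

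To calibrate the quota we set $k$ so that $n/k=1$ (i.e.\ $k=n$), padding with dummy voters or candidates as needed. Then a $b$-cohesive group is precisely a set of at least $b$ voters with at least $b$ common candidates. Any set of at least $b$ voters with at least $b$ common approvals contains a $b\times b$ biclique, so cohesion at level $b$ is equivalent to the biclique instance.

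The main obstacle is controlling the other levels $\ell\ne b$ and choosing $W$ so that only level-$b$ groups of original voters can be violated. Concretely:
\begin{itemize}
\item Levels $\ell<b$: give every original voter satisfaction exactly $b-1$ from $W$. Introduce $b-1$ fresh candidates approved by all original voters and put them in $W$. This adds $b-1$ common candidates to every group of original voters, so we instead target level $\ell^*=2b-1$ with group size $2b-1$. To restore the size requirement, add $b-1$ fresh voters who approve all $R$-candidates plus the fresh candidates, and who have high satisfaction through extra selected candidates approved only by them.
\item Levels $\ell\le b-1$ are then satisfied automatically by the shared candidates. Levels above $\ell^*$ require more common candidates or more voters than any biclique-free configuration yields.
\end{itemize}
I would tune these paddings so that a violating group exists iff a $b\times b$ biclique exists, using $b\ge3$ to rule out degenerate small groups. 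If the calibration becomes messy, a fallback is to adapt Aziz et al.'s coNP-hardness proof for EJR directly. Every EJR violation is an SJR violation, and in their construction the violating groups have all members unsatisfied, so the same instance and committee should work verbatim.
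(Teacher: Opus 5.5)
Your membership argument is fine, but the hardness construction has a genuine gap. The $b-1$ shared candidates you put in $W$ to give every original voter satisfaction $b-1$ are also common approvals of \emph{every} group of original voters, and in your plan of the fresh voters as well. With quota $n/k=1$, this removes the need for a biclique at the intermediate levels $b\le\ell\le 2b-2$. Any $\ell$ original voters with only $\ell-b+1$ common $R$-neighbours form an $\ell$-cohesive group in which every member has satisfaction $b-1<\ell$.

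Concretely, $b$ original voters sharing a single neighbour already violate SJR. Worse, your $b-1$ fresh voters (who approve all of $R$ and the shared candidates), together with any one original voter $u$ of positive degree, form a group of size $b$ with at least $b$ common approvals, and $u$ has satisfaction $b-1$. So $W$ violates SJR essentially whenever $G$ has an edge, and ``violation iff biclique'' fails. Your case analysis only treats levels $\ell\le b-1$ and $\ell>2b-1$, and these intermediate levels are exactly where it breaks.

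The missing idea is to make the candidates that supply satisfaction never appear as common approvals of a cohesive group. The paper achieves this with a large quota and two populations whose satisfaction sources are disjoint:
\begin{itemize}
\item voters $x_1,\dots,x_s$ approve $C_L\cup\{r\in R:(v_i,r)\in E\}$;
\item a block $N_0$ of $sb$ voters approves $C_N\cup R$, with $|C_L|=|C_N|=b-1$;
\item $W=C_L\cup C_N$ and $k=2b-2$;
\item singleton-approval dummy voters, matched one-to-one with dummy candidates (their count is nonnegative because $b\ge3$), pad $n$ so that $n/k=s+1$.
\end{itemize}
The dummies lie in no cohesive group. Every cohesive group has more than $s$ voters, so it contains an $N_0$ voter. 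Every group cohesive at level at least $b$ has more than $sb$ voters, so it contains an $N_L$ voter. Hence all common approvals lie in $R$. Every relevant voter has satisfaction exactly $b-1$, so only level $b$ can be violated, and levels $\ge b+1$ are impossible since $s(b+1)<(b+1)(s+1)$.

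Your fallback argues the wrong direction. ``Every EJR violation is an SJR violation'' only yields biclique $\Rightarrow$ SJR violation, which was never in doubt. Reusing an EJR instance verbatim needs the converse: in that instance, every SJR violation must already yield a biclique, for example because all voters who can lie in cohesive groups have identical satisfaction $b-1$. That property does hold in the construction above, but it has to be verified. It cannot be inferred from the biclique-induced groups having all members unsatisfied.
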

\begin{proof}
    Membership in coNP is immediate. Indeed, if $W$ does not satisfy SJR, then a certificate consists of an integer $t$ and a voter set $N'\subseteq N$. In polynomial time one can check that $N'$ is $t$-cohesive and that some voter $i\in N'$ has $S_W(i)<t$.

    For hardness, we prove that the complement of SJR verification is NP-hard. We reduce from \textsc{Balanced Biclique}. By the preceding observation, we may assume that the input is $(G=(L,R,E),b)$ with $L=\{v_1,\ldots,v_s\}$ and $s\ge b\ge 3$.

    We construct an election as follows. Let $C_L,C_N,C_R,C_D$ be pairwise disjoint candidate sets with
    \[
        |C_L|=|C_N|=b-1,\qquad C_R=R,\qquad |C_D|=sb-3s+2b-2.
    \]
    The voter set is the disjoint union $N_L\cup N_0\cup N_D$, where $N_L=\{x_1,\ldots,x_s\}$, $|N_0|=sb$, and $|N_D|=|C_D|$. For each $x_i\in N_L$, set
    \[
        A_{x_i}=C_L\cup\{r\in R:(v_i,r)\in E\}.
    \]
    Each voter in $N_0$ approves exactly $C_N\cup C_R$. Finally, the candidates in $C_D$ are matched one-to-one with the voters in $N_D$: each voter in $N_D$ approves exactly one candidate of $C_D$, and each candidate of $C_D$ is approved by exactly one voter of $N_D$. Set the committee size to $k=2b-2$ and let
    \[
        W=C_L\cup C_N.
    \]
    The number of voters is
    \[
        n=s+sb+(sb-3s+2b-2)=2(s+1)(b-1),
    \]
    and hence the Hare quota is $n/k=s+1$.

    Suppose first that $G$ contains a balanced biclique $(L',R')$ with $|L'|=|R'|=b$. Let
    \[
        N^*=N_0\cup\{x_i:v_i\in L'\}.
    \]
    Then $|N^*|=b(s+1)=b\,n/k$, and all voters in $N^*$ approve every candidate in $R'$. Thus $N^*$ is $b$-cohesive. However, every voter in $N^*$ has satisfaction exactly $b-1$ under $W$: voters in $N_0$ are represented only by $C_N$, and voters corresponding to $L'$ are represented only by $C_L$. Therefore $W$ violates SJR.

    Conversely, suppose that $W$ violates SJR. Then there exist an integer $t$ and a $t$-cohesive group $N^*$ such that some voter in $N^*$ has satisfaction less than $t$. No voter in $N_D$ can belong to any cohesive group: such a voter shares an approved candidate with no other voter, while every cohesive group has size at least $n/k=s+1>1$. Hence $N^*\subseteq N_L\cup N_0$.

    Every voter in $N_L\cup N_0$ has satisfaction exactly $b-1$ under $W$. Therefore $t\le b-1$ cannot yield an SJR violation. On the other hand, if $t\ge b+1$, then
    \[
        |N^*|\ge t(s+1)\ge (b+1)(s+1)>s(b+1)=|N_L\cup N_0|,
    \]
    which is impossible. Thus $t=b$.

    Since $N^*$ is $b$-cohesive, $|N^*|\ge b(s+1)$. As $|N_0|=sb$, the group $N^*$ contains at least $b$ voters from $N_L$. It also contains at least one voter from $N_0$, because $|N_L|=s<b(s+1)$. Consequently, every commonly approved candidate of $N^*$ lies in $C_R$. Since $N^*$ is $b$-cohesive, there are at least $b$ candidates in $C_R$ approved by all voters in $N^*$. Choosing any $b$ voters from $N^*\cap N_L$ and any $b$ such common candidates gives a $b$-by-$b$ biclique in $G$.

    We have shown that $G$ has a balanced biclique of size $b$ if and only if the constructed committee $W$ fails SJR. Thus the complement of SJR verification is NP-hard, and SJR verification is coNP-hard. Together with membership in coNP, this proves coNP-completeness.
\end{proof}

\begin{theorem}
    Given an instance $\mathbb{E}=(N,C,\mathcal{A},k)$ and a committee $W\subseteq C$ with $|W|=k$, it is coNP-complete to decide if $W$ satisfies AJR. 
\end{theorem}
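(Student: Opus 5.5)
Membership in coNP is immediate, exactly as for SJR. If $W$ violates AJR, a certificate is a pair $(t,N^*)$. In polynomial time we check that $N^*$ is $t$-cohesive and that $S_W(N^*)<t|N^*|$.

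For hardness, I would reuse the construction from the preceding SJR theorem verbatim. Start from \textsc{Balanced Biclique} with $b\ge 3$ and build candidate sets $C_L,C_N,C_R,C_D$ and voters $N_L,N_0,N_D$ as there. Set $k=2b-2$, $W=C_L\cup C_N$, and Hare quota $s+1$. The key feature is that every voter in $N_L\cup N_0$ has satisfaction \emph{exactly} $b-1$ under $W$. So for groups inside $N_L\cup N_0$, the minimum and the average satisfaction coincide, both equal to $b-1$, and SJR and AJR violations become the same event.

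\textbf{Biclique gives a violation.} If a biclique $(L',R')$ exists, take $N^*=N_0\cup\{x_i:v_i\in L'\}$. This group is $b$-cohesive, has $b(s+1)$ voters, and has average satisfaction $b-1<b$, so AJR fails.

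\textbf{A violation gives a biclique.} Suppose some $t$-cohesive group $N^*$ has average satisfaction below $t$.
\begin{itemize}
\item Voters in $N_D$ lie in no cohesive group, by the same argument as before, so $N^*\subseteq N_L\cup N_0$.
\item Hence the average satisfaction of $N^*$ is exactly $b-1$, which forces $t\ge b$.
\item The size bound $(b+1)(s+1)>s(b+1)=|N_L\cup N_0|$ rules out $t\ge b+1$, so $t=b$.
\item From there the earlier counting applies unchanged. $N^*$ contains at least $b$ voters from $N_L$ and at least one voter from $N_0$, so its common candidates lie in $C_R$, and there are at least $b$ of them. Choosing $b$ voters from $N^*\cap N_L$ and $b$ common candidates yields the biclique.
\end{itemize}

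The main point to verify is the uniform-satisfaction observation. It is what lets the average condition reduce to the pointwise one, and I expect no real obstacle beyond restating the SJR argument carefully.
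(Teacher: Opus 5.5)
Your proposal is correct and follows the paper's proof essentially step for step. It uses the same coNP certificate $(t,N^*)$ and reuses the Balanced Biclique construction unchanged. It also relies on the same key observation: every voter in $N_L\cup N_0$ has satisfaction exactly $b-1$ under $W$, which forces $t=b$ and lets the SJR counting argument carry over directly.
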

\begin{proof}
    Membership in coNP is again immediate: a certificate for non-AJR consists of an integer $t$ and a voter set $N'\subseteq N$ such that $N'$ is $t$-cohesive and $S_W(N')/|N'|<t$, both of which can be checked in polynomial time.

    For hardness, use the same construction from an instance $(G=(L,R,E),b)$ of \textsc{Balanced Biclique} as in the proof of the preceding theorem. We show that $G$ has a balanced biclique of size $b$ if and only if the constructed committee $W=C_L\cup C_N$ fails AJR.

    If $G$ has a $b$-by-$b$ biclique $(L',R')$, then the voter set
    \[
        N^*=N_0\cup\{x_i:v_i\in L'\}
    \]
    is $b$-cohesive: it has size $b(s+1)=b\,n/k$, and all its voters approve all candidates in $R'$. Every voter in $N^*$ has satisfaction exactly $b-1$ under $W$, and therefore
    \[
        \frac{S_W(N^*)}{|N^*|}=b-1<b.
    \]
    Hence $W$ violates AJR.

    Conversely, suppose that $W$ violates AJR. Then for some integer $t$ there is a $t$-cohesive group $N^*$ with $S_W(N^*)/|N^*|<t$. As in the preceding proof, no voter in $N_D$ can be contained in any cohesive group. Thus $N^*\subseteq N_L\cup N_0$, and every voter in $N^*$ has satisfaction exactly $b-1$ under $W$. Hence $t\le b-1$ is impossible. Also, no $t$-cohesive group exists for $t\ge b+1$, since then the group-size requirement would exceed $|N_L\cup N_0|=s(b+1)$. Therefore $t=b$.

    The same counting argument now applies: $N^*$ contains at least $b$ voters from $N_L$ and at least one voter from $N_0$, so its common approved candidates all lie in $C_R$. Since $N^*$ is $b$-cohesive, at least $b$ candidates in $C_R$ are approved by all voters in $N^*$. Any $b$ voters from $N^*\cap N_L$ together with any $b$ such candidates form a $b$-by-$b$ biclique in $G$.

    Thus the complement of AJR verification is NP-hard, and AJR verification is coNP-hard. Together with membership in coNP, this proves coNP-completeness.
\end{proof}
\end{document}